\tikzstyle{node}=[fill=black, shape=circle, inner sep=2pt]
\tikzstyle{node-is}=[fill={rgb,255: red,191; green,0; blue,64}, shape=circle, inner sep=2pt]
\tikzstyle{none}=[inner sep=0pt]
\tikzstyle{directed}=[->]
\tikzstyle{undirected}=[-]
\tikzstyle{bi-directional}=[<->]
\newtheorem{lemma}{Lemma}[section]
\newtheorem{theorem}[lemma]{Theorem}
\newtheorem{definition}[lemma]{Definition}
\newtheorem{claim}[lemma]{Claim}
\newtheorem{prop}[lemma]{Proposition}
\renewcommand{\mod}{\ \mathsf{mod}\ }
\newcommand{\Oh}{\mathcal{O}}
\newcommand{\sym}{\mathcal{S}}
\newcommand{\cA}{\mathcal{A}}
\newcommand{\cB}{\mathcal{B}}
\begin{document}
\iftrue
\author{
  Kacper Kluk%
  \thanks{Institute of Informatics, University of Warsaw, Poland. \texttt{k.kluk@uw.edu.pl}.}
  \and%
  Jesper Nederlof%
  \thanks{Department of Information and Computing Sciences, Utrecht University, The Netherlands. \texttt{j.nederlof@uu.nl}.}
}
\else
\author{}
\fi

\title{Lower bounds on pure dynamic programming for connectivity problems on graphs of bounded path-width
\iftrue
  \thanks{
  K.K. is supported by Polish National Science Centre SONATA BIS-12 grant number 2022/46/E/ST6/00143. J.N is supported by the project COALESCE that has received funding from the European Research Council (ERC), grant agreement No 853234.}\fi}

\date{}
\maketitle

\begin{abstract}
We give unconditional parameterized complexity lower bounds on pure dynamic programming algorithms -- as modeled by \emph{tropical circuits} -- for connectivity problems such as the Traveling Salesperson Problem. Our lower bounds are higher than the currently fastest algorithms that rely on algebra and give evidence that these algebraic aspects are unavoidable for competitive worst case running times. 

Specifically, we study input graphs with a small width parameter such as treewidth and pathwidth and show that for any $k$ there exists a graph $G$ of pathwidth at most $k$ and $k^{O(1)}$ vertices such that any tropical circuit calculating the optimal value of a Traveling Salesperson round tour uses at least $2^{\Omega(k \log \log k)}$ gates. We establish this result by linking tropical circuit complexity to the nondeterministic communication complexity of specific compatibility matrices. These matrices encode whether two partial solutions combine into a full solution, and Raz and Spieker~[Combinatorica 1995] previously proved a lower bound for this complexity measure.
\end{abstract}

\section{Introduction}
A common paradigm in theoretical computer science is to pin down which algorithmic techniques can or cannot achieve certain central goals by formalizing such a technique within a precise algorithmic model and then proving sharp limitations for that model. Classic examples include unconditional lower bounds for Linear Programming-based algorithms (i.e. extension complexity~\cite{FioriniMPTW15,Rothvoss17}), conditional lower bounds for preprocessing algorithms in parameterized complexity (conditioned on the non-collapse of the polynomial hierarchy~\cite{BodlaenderDFH09,FortnowS11}), and unconditional lower bounds for resolution-based algorithms for refuting the Strong Exponential Time Hypothesis~\cite{BeckI13}. This paradigm reveals which algorithmic nuances are fundamentally unavoidable and shows us where genuine breakthroughs must come from.

One notable example of such algorithmic nuance is that of algebraic cancellation: for several combinatorial computational problems, the fastest known algorithms crucially exploit certain algebraic cancellation in rather counter-intuitive ways, and it is a well-studied open question whether these algorithms can be matched with (more natural, and in some sense robust) `combinatorial' algorithms. This question arises for instance in the settings of Boolean matrix multiplication~\cite{WilliamsW18}, various variants of the problem of finding a perfect matching solvable by reduction to determinant computation~\cite{KarpUW86,MulmuleyVV87}, and fast exponential time algorithms for Hamiltonicity~\cite{Bjorklund14}. This very question of whether the use of algebraic cancellation is needed does not only occur in algorithm design, but also in various other disciplines. For example, in extremal combinatorics, it is an open question in various settings whether combinatorial proofs exist that match algebraic proofs (for example, a combinatorial proof of the skew two families theorem remains illusive~\cite{calbet2023k_r}), and the log rank conjecture in communication complexity asks whether algebraic low rank decompositions of matrices in general imply low rank combinatorial decompositions (or more specifically, whether the rank and partioning number of a Boolean matrix are quasi-polynomially related), see e.g.~\cite{LovettA14}.

One such setting in which we would like to replace algebraic arguments with combinatorial arguments within the area of parameterized complexity is that of connectivity problems parameterized by treewidth: In 2011 it was shown that that many connectivity problems parameterized by treewidth can be solved faster than what was deemed possible at the time~\cite{CyganNPPRW22}. In particular, problems like Hamiltonian Cycle and Steiner tree were solved with a randomized algorithm in $2^{O(k)}N^{O(1)}$ time, when given a tree decomposition of width $k$ of an $N$-vertex input graph. Before this work, fastest known algorithm for problems parameterized by treewidth were based on straightforward combinatorial dynamic programming techniques that naturally lead to running times like $k^{O(k)}N^{O(1)}$ and seemed hard to improve. Later work that followed up on~\cite{CyganNPPRW22} and provided deterministic algorithms for weighted extensions such as the Traveling Salesperson Problem (TSP) on bounded treewidth graphs~\cite{BodlaenderCKN15} and an alternative version connected the approach with a matroid-based extension of the aforementioned two-family theorem~\cite{FominLPS16}.

All these works~\cite{CyganNPPRW22,BodlaenderCKN15,FominLPS16} rely on algebraic cancellation by exploiting that the rank of certain \emph{compatibility} matrices is small, a particularly counterintuitive approach in comparison to the previous simple and clean combinatorial $k^{O(k)}N^{O(1)}$ time algorithms. It is a notable open question to remove the mentioned overhead in the runtime for the deterministic algorithm and weighted extensions. For example, a positive resolution of this open question would bring us closer to solving the TSP in time $1.9999^N$~\cite{Nederlof20}, where $N$ is the number of cities.

But, to do so it seems crucial to get a better combinatorial understanding of the involved compatibility matrices, simultaneously bypassing the undesired algebraic cancellation arguments. In this work we give a strong indication that more than just elementary combinatorial techniques are needed, via the algorithmic model of \emph{tropical circuits}.

\paragraph{Tropical Circuits.} A tropical circuit is an arithmetic circuit in which the inputs are labeled with variables that take an integer as value and the two arithmetic gates correspond to the max and sum, or alternatively, min and sum, operations. See \Cref{sec:prel} for a precise definition. The motivation for studying the expressiveness of tropical circuits is that it models a broad class dynamic programming algorithm. In particular, if there exists for a maximization problem (for which its instances are defined by a set of input weights) a dynamic programming algorithm that is `pure' in the sense that the associated recurrence only features the operations max and sum, than there is tropical circuit that outputs the optimal objective value with circuit size being proportional to the running time of the dynamic programming algorithm.

For example the classic Bellman-Held-Karp~\cite{Bellman62,heldKarp} $O(N^22^N)$ time algorithm for TSP is naturally converted into a tropical circuit with $O(N^22^N)$ gates and the $O(N^3)$ time Floyd-Warshall~\cite{Floyd62a,Warshall62} algorithm that computes the shortest path lengths between each pair of vertices is naturally converted into a tropical circuit with $O(N^3)$ gates. Both tropical circuits cannot be substantially improved~\cite{JerrumS82,kerr1970effect} (see also ~\cite[Corollary~2.2]{Jukna2023}).

In the realm of parameterized complexity, it is easy to see that canonical applications of dynamic programming can be modeled in an efficient way as tropical circuits. Examples are the dynamic programming algorithms for Steiner Tree and Set Cover (see e.g.~\cite[Section 6.1]{CyganFKLMPPS15}) with few number of terminals and elements and, especially relevant for this paper, the $O(2^{k} N)$ time algorithm for the maximum independent set problem on graphs with a given tree decomposition of width $k$ and $N$ vertices.

For much more detail on tropical circuits, we refer to the excellent textbook by Jukna~\cite{Jukna2023}.

\paragraph{Problems parameterized by treewidth.}
A very popular research line that started in~\cite{LokshtanovMS18b,LokshtanovMS18a} is that of investigating the fine-grained complexity of various NP-hard problems parameterized by width measures such as the treewidth of the input graph. 
In particular, for many NP-hard problems we are now able to design algorithms with a running time of the type $f(k)N^{O(1)}$ for some width measure $k$ and $n$ denoting the number of vertices of the input graph and simultaneously can prove that any improvement of this running time to $f(k)^{1-\Omega(1)}N^{O(1)}$ or even $f(k)^{o(1)}N^{O(1)}$ violates the Exponential Time Hypothesis (ETH) or the Strong Exponential Time Hypothesis (SETH). Such algorithms are often called \emph{(S)ETH-tight algorithms}. Such tight algorithms provide insight on how amenable the problem at hand is for divide and conquer algorithms since, conditioned on standard hypotheses, they reveal how much information of partial subsolutions exactly is relevant.

The aforementioned type of \emph{connectivity problems} such as TSP and Steiner Tree forms an important class of problems for which we do not generally have SETH-tight algorithms, precisely because the employed algebraic algorithms form a bottleneck towards deterministic algorithms and extensions to weighted variants that seemingly should be replaced with combinatorial arguments.

\paragraph{Our results.}
In this paper we provide evidence that direct combinatorial techniques on their own are insufficient for designing faster algorithm for connectivity problems, by giving unconditional lower bounds for tropical circuits. We state our lower bounds in terms of the~\emph{pathwidth} of the input graph. This is similar to the treewidth of a graph, except that we require more specifically to decompose the graph in a path-like manner instead of a tree-like manner. Hence, the pathwidth of a graph is always at least the treewidth of a graph and hence our lower bounds also imply lower bounds parameterized by treewidth. See \Cref{sec:prel} for definitions.
Before we study connectivity problems, we first study the complexity of a more basic problem:

\subparagraph{Maximum Weight Independent Set.}
In the Maximum Weight Independent Set problem one is given a graph $G=(V,E)$ along with a vertex weight $x_v$ for every $v \in V$, and is asked for the value $IS_G:=\max_{I \in \mathcal{I}(G)}\sum_{v \in I}x_v$ where $\mathcal{I}(G)$ denotes the family of all independent sets of $G$. Our lower bound for this problem reads as follows (full definitions are postponed to \Cref{sec:prel}):

\begin{theorem}\label{thm:main_is}
    For any $k \geq 1$, there exists a graph $G$ of pathwidth at most $k$ on $k^{O(1)}$ vertices such that any tropical circuit calculating $IS_G$ uses at least $\Omega(2^k)$ gates.
\end{theorem}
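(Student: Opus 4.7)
My plan is to apply the paper's general connection between tropical circuit complexity and the nondeterministic communication complexity (equivalently, the rectangle cover number) of an associated compatibility matrix, instantiated for a graph $G$ whose compatibility matrix at a suitable separator is essentially the $2^k \times 2^k$ identity. Concretely, I first invoke the general principle developed in the paper: for any graph $G$ with a vertex separator $S$ of size at most $k$ splitting $V(G)$ into $L \cup S \cup R$, any tropical circuit computing $IS_G$ must have size at least the rectangle cover number of an associated compatibility matrix $M_S$. Rows and columns of $M_S$ are indexed by independent-set traces on $S$ coming from $G[L \cup S]$ and $G[R \cup S]$ respectively, with entry $1$ iff the row and column agree on $S$ (together with the consistency conditions that make the reduction non-trivial). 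The underlying intuition is that a small circuit, evaluated on suitable extremal weightings, would induce few monochromatic rectangles covering the $1$-entries of $M_S$, contradicting a large cover number.

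I would then construct $G$ to have an independent set $S = \{s_1, \ldots, s_k\}$ in its ``middle'' and, on each side of $S$, a $\mathrm{poly}(k)$-vertex gadget enabling every subset $A \subseteq S$ to appear as the $S$-trace of some independent set. A natural choice attaches, on each side, a private toggle vertex $t_i$ adjacent to $s_i$ for every $i$; then, placing large weights on the $s_i$'s indexed by $A$ and on the $t_j$'s for $j \notin A$ realizes $A$ as the MWIS-trace. To push the pathwidth up to $k$ and to make the connection of the first step effective (preventing the trivial decomposability that disjoint stars would admit), I add a small number of coupling edges among the side vertices, keeping the vertex count $\mathrm{poly}(k)$ and the pathwidth at most $k$. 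With this setup $M_S$ becomes the $2^k \times 2^k$ identity, whose rectangle cover number is exactly $2^k$, yielding the desired $\Omega(2^k)$ lower bound.

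The hardest part of executing this plan is engineering the graph in the second step: producing a graph in which all $2^k$ separator patterns are simultaneously realized \emph{and} the MWIS polynomial is sufficiently non-separable for the matrix-based lower bound of the first step to apply. Naive coupling-free constructions, such as $k$ disjoint stars, satisfy the pattern-realization condition but admit $O(k)$-gate tropical circuits because the MWIS polynomial decomposes across connected components; sidestepping this with carefully chosen coupling edges, while preserving both the pathwidth bound and the identity structure of the compatibility matrix, is the main technical obstacle.
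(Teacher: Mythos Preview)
Your plan rests on a ``general principle'' that the paper does not establish and that, as stated, is false. A separator-based compatibility matrix being the $2^k\times 2^k$ identity does \emph{not} by itself lower-bound the tropical circuit size of $IS_G$: your own disjoint-$P_3$ construction ($t_i^L$--$s_i$--$t_i^R$ for $i\in[k]$) already has identity compatibility at $S=\{s_1,\dots,s_k\}$, yet $IS_G$ factors over components and has an $O(k)$-gate circuit. The ``coupling edges'' you propose as a fix are precisely the whole problem, and you give no candidate edges nor any argument for why the separator-based bound would then become valid. There is no black-box statement in the paper of the form ``rectangle cover number of $M_S$ $\leq$ circuit size''; the balanced decompositions of \Cref{lem:tc_decomposition} are with respect to variable \emph{supports}, not graph separators, so a factor $g_i$ may freely touch both sides of $S$. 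And $IS_G$ is not homogeneous, so \Cref{lem:tc_decomposition} does not even apply as written.

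The paper's actual proof is structurally different from your outline. It builds a specific graph $G_k$ from all $2$-CNF clauses on $k$ variables: $k$ long paths (one per variable, with even/odd positions encoding literal polarity) plus one degree-$2$ vertex per clause attached to the two relevant literal copies. The $2^k$ truth assignments give $2^k$ \emph{canonical} independent sets. The key lemma (\Cref{lem:is_thin_rects}) is not a compatibility-matrix statement but a \emph{thin-rectangle} property: whenever families $\mathcal{A},\mathcal{B}$ of independent sets satisfy $A\cup B\in\mathcal{I}(G)$ for all $A\in\mathcal{A},B\in\mathcal{B}$ and contain some canonical solution, one of the two sides has at most one ``useful'' member. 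The clause gadgets are exactly what force this---if both sides had two useful members, one could exhibit a clause vertex none of whose closed neighbourhood is covered. The lower bound then comes from an above/below argument at each circuit node (not \Cref{lem:tc_decomposition}): walking from the output to an input along any path, thinness must flip from the $A$-side to the $B$-side at some edge, and this pins each canonical solution to a distinct node or edge, giving $2^k\le 3|V(\Gamma)|$.
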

Note that pathwidth is a width measure that is always at least the treewidth of the graph.
Since there is a simple $O(2^{k}N)$-sized tropical circuit that calculates $IS_G$ for an $N$-vertex graph of pathwidth $k$, this result is optimal in a tight sense.
A previous result by Korhonen~\cite{Korhonen21} showed that for \emph{every} graph $G$ of treewidth $k$ and maximum degree $d$ any tropical circuit calculating $IS_G$ must be of size at least $2^{\Omega(k/d)}$.
This result is less tight than our new result and also seems to crucially rely on some properties of the Maximum Weight Independent Set problem. While the type of universal lower bound from~\cite{Korhonen21} is quite interesting, it does not directly have added value in our context of worst case complexity analysis and hence we do not pursue it further in this work.

\Cref{thm:main_is} is obtained using the following two ingredients. The graphs for which we show the bounds is constructed based on ideas from the classic reduction from CNF-Sat to Maximum Weight Independent Set. The tropical circuit size bound itself is shown by analyzing the structure of the so-called rectangles, a combinatorial notion that typically arises in the studie of tropical circuits that captures the way in which the partial solution calculated at some node of the tropical circuit can combine with the computations done by the rest of the circuit, see e.g.~\cite{Jukna2023}.

\subparagraph{Connectivity Problems.}
We show lower bounds on the tropical circuit complexity of the following graph connectivity problems. Let $G$ be an $N$-vertex graph with for every pair of distinct vertices $u$ and $v$ an edge weight $x_{u,v}\in \mathbb{N}$. Let $\mathcal{H}(G)$ denote the family of all sequences of $N + 1$ vertices $(u_i)_{i =0}^N$ such that $u_0 = u_N$ and $u_0 \to u_1 \to \dots \to u_{N - 1} \to u_N$ is a directed Hamiltonian cycle in $G$.
We define $DTSP_G$ to be the minimum $\sum_{i\in [N]}x_{u_{i-1},u_i}$ taken over all $(u_0,\ldots,u_N) \in \mathcal{H}(G)$. For an undirected graph, we define the undirected variant $TSP_G$ similarly while identifying variables $x_{u,v}$ and $x_{v, u}$ for all $uv \in E(G)$.

With these definitions in place, our main results can be stated as follows:
\begin{theorem}\label{thm:main_dtsp}
    For any $k \geq 1$, there exists a graph $G$ of pathwidth at most $k$ on $k^{O(1)}$ vertices such that any tropical circuit calculating $DTSP_G$ uses at least $2^{\Omega(k \log \log k)}$ gates.
\end{theorem}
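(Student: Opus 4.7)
The plan is to construct a graph $G$ whose Hamiltonian cycle structure mirrors the compatibility relation between perfect matchings on a small interface set, and then to combine the standard rectangle-cover lower bound for tropical circuits with the Raz--Spieker nondeterministic communication complexity lower bound for matching connectivity.

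First, I would build a graph $G$ with pathwidth $O(k)$ and $\mathrm{poly}(k)$ vertices containing a vertex separator $X$ with $|X| = 2k$ that splits $G$ into two sides $G_L, G_R$. Each side is engineered to be a \emph{universal matching realizer}: for every perfect matching $M$ on $X$, a unique edge set $\phi_L(M) \subseteq E(G_L)$ forms a vertex-disjoint collection of paths in $G_L$ whose endpoint pairs are exactly the pairs of $M$ and which together cover $V(G_L) \setminus X$; analogously for $\phi_R$ on the other side. Such realizer gadgets should be constructible from bipartite permutation-network or grid-like structures laid out along a path decomposition of width $O(k)$. The crucial consequence is that every Hamiltonian cycle $H$ of $G$ decomposes uniquely as $H = \phi_L(M_L) \cup \phi_R(M_R)$, where $M_L \cup M_R$ is a single cycle on $X$, giving a bijection between $\mathcal{H}(G)$ and the compatible pairs of perfect matchings on $X$.

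Second, I would apply the rectangle-cover lower bound for tropical circuits (see~\cite{Jukna2023}) under the variable partition $E(G) = E(G_L) \sqcup E(G_R)$: any tropical circuit computing $DTSP_G$ has size at least the minimum number of combinatorial rectangles needed to cover $\mathcal{H}(G) \subseteq 2^{E(G)}$. Via the bijection from the previous step, each such rectangle pulls back to a monochromatic $1$-rectangle in the matching compatibility matrix $C$ indexed by pairs of perfect matchings $(M_L, M_R)$ of $X$, with $C[M_L, M_R] = 1$ iff $M_L \cup M_R$ is a single cycle. Hence the tropical circuit size is at least the $1$-rectangle cover number of $C$, which is $2^{\Omega(N^1(C))}$ where $N^1$ denotes nondeterministic communication complexity. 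The Raz--Spieker theorem yields $N^1(C) = \Omega(k \log \log k)$, giving the claimed $2^{\Omega(k \log \log k)}$ gate bound.

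I expect the main obstacle to be the gadget construction in the first step: simultaneously achieving polynomial size, pathwidth $O(k)$, and an \emph{exact} bijection between matchings on $X$ and path systems in each side. Any flexibility in how a side realizes a given matching would create multiple monomials of the tropical polynomial per compatible pair, and could in principle allow the rectangle cover to shrink below what the communication lower bound predicts. Managing this combinatorial rigidity together with the pathwidth constraint is the delicate part; once $G$ is built, the translation from tropical circuits to rectangle covers and the invocation of Raz--Spieker are routine.
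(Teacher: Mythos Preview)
The second step contains a genuine gap: there is no ``rectangle-cover lower bound for tropical circuits under a fixed variable partition'' of the kind you invoke. The decomposition lemma you cite (Jukna, Lemma~3.4; Lemma~2.2 in the paper) produces a representation $f \simeq \sum_{i\in[\tau]} g_i\cdot h_i$ in which each factor is merely \emph{balanced} with respect to a set $X$ of your choice, i.e.\ $|\sup(g_i)\cap X|,\,|\sup(h_i)\cap X|\le \tfrac{2}{3}|X|$. It does \emph{not} guarantee that $\sup(g_i)\subseteq E(G_L)$ and $\sup(h_i)\subseteq E(G_R)$, so a term $g_i\cdot h_i$ need not correspond to a rectangle of the matching compatibility matrix at all. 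A fixed-partition rectangle cover bounds communication complexity (and hence formula depth), not circuit size; your reduction to Raz--Spieker therefore does not go through as written.

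The paper's construction is designed precisely to bridge this gap. Instead of one separator, it takes a cyclic sequence of $n=\Theta(k)$ columns $V_1,\ldots,V_n$ with complete bipartite edge sets $E_i$ between consecutive columns, and chooses $X$ to contain one distinguished edge from each $E_i$. For a balanced rectangle $g\cdot h$ one then argues by cases: either many $E_i$ are ``bichromatic'' (touched by both $\sup(g)$ and $\sup(h)$), which already forces $|g\cdot h|$ to be small for elementary counting reasons, or almost all $E_i$ are monochromatic, in which case balancedness forces some $E_{i_g}\subseteq\sup(g)$ and some $E_{i_h}\subseteq\sup(h)$, and only then does the reduction to the matching compatibility matrix apply. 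A further step you omit is that Raz--Spieker lower-bounds the \emph{cover number} $C_k$, whereas the argument needs an upper bound on the \emph{size} of any single rectangle; the paper obtains this via a separate symmetrisation/averaging argument (its Claim~4.2 and Lemma~4.3). Your single-separator gadget, even if built perfectly, gives no handle on balanced-but-unaligned rectangles, so the approach would need to be substantially reworked along these lines.
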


\begin{theorem}\label{thm:main_tsp}
    For any $k \geq 1$, there exists a graph $G$ of pathwidth at most $k$ on $k^{O(1)}$ vertices such that any tropical circuit calculating $TSP_{G}$ uses at least $2^{\Omega(k \log \log k)}$ gates.
\end{theorem}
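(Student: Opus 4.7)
My plan is to derive the undirected lower bound from the directed one already claimed in Theorem~\ref{thm:main_dtsp} by means of a standard direction-enforcing gadget. Starting from the hard directed graph $G'$ of pathwidth $k$ produced for Theorem~\ref{thm:main_dtsp}, I would build an undirected graph $G$ as follows: replace each vertex $v$ of $G'$ with an undirected $3$-vertex path $v_{\mathrm{in}} - v_{\mathrm{mid}} - v_{\mathrm{out}}$ whose two internal edges receive weight $0$, and replace each directed edge $u \to v$ of $G'$ by a single undirected edge $u_{\mathrm{out}} v_{\mathrm{in}}$ carrying the original weight. Since $v_{\mathrm{mid}}$ has degree $2$, every Hamiltonian cycle of $G$ is forced to traverse each gadget linearly, so Hamiltonian cycles of $G$ are in bijection with directed Hamiltonian cycles of $G'$, up to the single global orientation choice. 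This transformation at most triples the number of vertices and increases the pathwidth by at most a constant factor, so $G$ still has pathwidth $O(k)$ on $k^{O(1)}$ vertices.

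Next, I would convert any tropical circuit $C$ of size $s$ computing $TSP_G$ into a tropical circuit of size $s$ computing $DTSP_{G'}$. Concretely, I substitute the constant $0$ for the input variables attached to the gadget-internal edges of $G$, and identify the input variable attached to each external edge $u_{\mathrm{out}} v_{\mathrm{in}}$ with the weight variable of $u \to v$ in $G'$. Since all gadgets are traversed linearly in any optimum, the resulting circuit outputs exactly $\min(DTSP_{G'}, DTSP_{\overleftarrow{G'}})$, where $\overleftarrow{G'}$ reverses all edges. Assuming the directed hard instance is built to be invariant under edge reversal (for the constructions typically arising from the Raz--Spieker argument via compatibility matrices this is automatic, since the hard core is a bipartite matching-type matrix), this minimum equals $DTSP_{G'}$, and Theorem~\ref{thm:main_dtsp} forces $s \geq 2^{\Omega(k \log \log k)}$.

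The main technical obstacle is precisely this reversal ambiguity. If the directed construction backing Theorem~\ref{thm:main_dtsp} happens not to be reversal-symmetric, I would instead modify the gadget to break the global orientation, for example by hanging at a single distinguished vertex of $G$ a small ``oriented'' attachment (a pendant triangle with asymmetric weights, or an extra $K_4$-minor along which one side is forced) that is cheap in one orientation and prohibitively expensive in the other; such a gadget adds $O(1)$ vertices and keeps pathwidth and the bijection with directed cycles intact. Alternatively, one can avoid the reduction entirely and replay the proof of Theorem~\ref{thm:main_dtsp} directly on the undirected graph, because the Raz--Spieker lower bound on nondeterministic communication complexity is a statement about an essentially undirected matching-type matrix and therefore transfers verbatim to the compatibility matrix of undirected TSP across a path-decomposition separator. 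Either route yields a tropical circuit lower bound of $2^{\Omega(k \log \log k)}$, which is exactly the bound demanded by the theorem.
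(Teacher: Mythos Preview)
Your proposal is essentially the paper's proof: the paper builds $\bar G_{n,k}$ from its directed hard instance $G_{n,k}$ via exactly the three-vertex gadget $v_{c,r,-1}-v_{c,r,0}-v_{c,r,1}$ you describe, and then turns any tropical circuit for $TSP_{\bar G_{n,k}}$ into one of the same size for $DTSP_{G_{n,k}}$ by substituting $0$ on the two internal edges and relabeling the external ones. Your reversal worry and the proposed orientation-breaking attachment are unnecessary: since $v_{\mathrm{out}}$ is adjacent only to $v_{\mathrm{mid}}$ and to vertices $w_{\mathrm{in}}$ with $v\to w\in E(G')$, the edge set of any undirected Hamiltonian cycle of the gadget graph already selects at every vertex one out-edge and one in-edge of $G'$, hence a unique directed Hamiltonian cycle of $G'$ (never of $\overleftarrow{G'}$), so $f\simeq DTSP_{G'}$ holds with no symmetry assumption on $G'$.
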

We also provide a similar lower bound for the Directed Spanning Tree problem.
We let $\mathcal{T}(G)$ denote the family of all functions $p : V(G) \to V(G)$ such that for exactly one vertex $v \in V(G)$, $p(v) = v$, and edges $(p(u), u)$ over all other vertices $u$ form an out-tree rooted at $v$ (i.e. a tree oriented away from the root). Furthermore, we define $DST_G$ as the minimum $\sum_{v \in V(G) : p(v) \neq v} x_{p(v), v}$ taken over all $p \in \mathcal{T}(G)$ (i.e., $DST_G$ is the minimum weight spanning tree of $G$).

\begin{theorem}\label{thm:main_dst}
    For any $k \geq 1$, there exists a graph $G$ of pathwidth at most $k$ on $k^{O(1)}$ vertices such that any tropical circuit calculating $DST_{G}$ uses at least $2^{\Omega(k \log \log k)}$ gates.
\end{theorem}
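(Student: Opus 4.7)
The plan is to reuse the three-step template that underlies the proofs of Theorems~\ref{thm:main_dtsp} and~\ref{thm:main_tsp}: (i) construct a graph $G$ of pathwidth $O(k)$ on $k^{O(1)}$ vertices with a distinguished ``bottleneck'' separator $S \subseteq V(G)$ of size $\Theta(k)$ in its path decomposition, (ii) identify at $S$ a compatibility matrix $M_{\textup{DST}}$ whose rows and columns are indexed by feasible partial out-forests on the two sides of the cut and whose entries indicate whether the two pieces glue into an element of $\mathcal{T}(G)$, and (iii) lower-bound the nondeterministic communication complexity of $M_{\textup{DST}}$ using the Raz--Spieker result, then feed this into the general tropical-circuit lower bound already established in the paper.

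A partial DST solution at $S$ encodes an out-forest on one side together with, for each of its components meeting $S$, the ``root-ward'' separator vertex of that component. The state space at $S$ is therefore, up to constants in the exponent, the family of rooted forests on $S$, of size $(|S|+1)^{|S|-1}$ by Cayley's formula. To transport the Raz--Spieker bound to this setting, I would embed the perfect-matching compatibility matrix as a combinatorial submatrix of $M_{\textup{DST}}$. Concretely, I would design the gadgets so that (a) the only feasible partial solutions of optimum weight touch $S$ in exactly $|S|/2$ disjoint directed arcs, i.e.\ the restriction to $S$ is a perfect matching equipped with orientations, and (b) two such partial solutions combine into a valid directed spanning tree iff the corresponding matchings join into the single-Hamilton-structure configuration used by Raz and Spieker. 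Since nondeterministic communication complexity is monotone under row/column restriction, the $\Omega(k \log \log k)$ bound transfers to $M_{\textup{DST}}$ and hence, via the framework of the paper, to the tropical circuit size of DST.

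The hard part, as for DTSP and TSP, will be the gadget design: the weights and any forced structure must ensure that no ``cheap'' out-tree exists whose trace on $S$ falls outside the matching-with-orientation family, since any such tree would introduce spurious rows or columns and could in principle collapse $M_{\textup{DST}}$ to a much easier submatrix than the Raz--Spieker bottleneck. I expect to be able to reuse large portions of the DTSP construction almost verbatim, since a DTSP gadget already enforces matching-like behavior at each separator through its tour constraints, and an out-tree analogue can be obtained by cutting one edge of the envisioned Hamilton cycle and adjoining a root vertex with a single directed edge into the DST instance. Verifying that this modification preserves the $O(k)$ pathwidth bound, and that the induced submatrix of $M_{\textup{DST}}$ really is the Raz--Spieker matrix rather than a strict weakening of it, is the most delicate step of the argument.
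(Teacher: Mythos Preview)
Your high-level instinct—route through Raz--Spieker via a compatibility structure—matches the paper, but two genuine gaps keep the plan from going through as written.

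First, and most seriously, you assume a single bottleneck separator $S$ and a single compatibility matrix $M_{\textup{DST}}$ indexed by partial solutions on the two sides of that one cut, to be fed into a ``general tropical-circuit lower bound already established in the paper''. No such black box exists in the paper. What is available is the balanced-split decomposition (Lemma~\ref{lem:tc_decomposition}): the polynomial is written as $\sum_i g_i h_i$ with each pair balanced relative to a chosen variable set $X$, but the partition of variables between $g_i$ and $h_i$ is otherwise arbitrary and \emph{different for each $i$}. There is no single cut along which every rectangle is aligned, so a single-$S$ compatibility matrix does not model the situation. The paper copes with this via a per-rectangle two-case analysis: either many layers $E_i$ are touched by both $\sup(g)$ and $\sup(h)$, in which case a direct (probabilistic) counting argument already bounds the rectangle; or few are, and then the balance condition forces two fully monochromatic layers $E_{i_g},E_{i_h}$ on opposite sides, at which point one reduces to the clique matching compatibility matrix $\mathcal{M}^*_k$ simultaneously at those two layers. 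Your outline contains no analogue of this step.

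Second, the appeal to weights (``no cheap out-tree exists whose trace on $S$ falls outside the matching family'') is misplaced in this model. The polynomial $DST_G$ is determined by $G$ alone; every spanning out-tree contributes a monomial, and the circuit must reproduce all of them for all valuations. The paper does not filter trees by weight. Instead it identifies, inside $DST_G$, a large supply of special monomials—the \emph{nice Hamiltonian paths}, obtained from Hamiltonian cycles whose restriction to every layer is a perfect matching by deleting one edge in the first column—counts them, and bounds how many of the corresponding nice cycles any single balanced rectangle can cover. The graph is $H_{n,k}$: $n$ columns of $2k$ vertices with bidirectional bipartite cliques between consecutive columns and bidirectional cliques on the first and last columns (not a single appended root). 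The doubling to $2k$ and the endpoint cliques are precisely what makes the compatibility condition at a pair of layers land on $\mathcal{M}^*_k$, so that Lemma~\ref{lem:mm*_size_bound} (and through it Raz--Spieker) applies.
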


By adapting textbook dynamic programming algorithms, one can directly obtain tropical circuits for $DTSP_G,TSP_G$ and $DST_G$ of size $k^{O(k)} N$ if $G$ has $N$ vertices and pathwith/treewidth $k$. Thus there is still a gap between the lower bound and upper bound, and in fact this seems closely related to a similar gap in the area of communication complexity (see Section~\ref{sec:con} for more details on this). Nevertheless, our results show that the currently fastest $2^{O(k)}N$ time algorithms for TSP and Directed Steiner Tree (which generalizes DST) cannot be matched with merely pure dynamic programming.

The proofs of these three results combine the basic ingredients of the proof of \Cref{thm:main_is} with the properties of the \emph{Matchings Compatibility Matrix}, defined as follows. The rows and columns of this binary matrix are both indexed by perfect matchings of a bipartite graph, and an entry in the matrix indicates whether the union of these two perfect matchings form a Hamiltonian cycle. This matrix has already been studied in~\cite{raz1995log} in the context of the log-rank conjecture in communication complexity, where a lower bound on the non-deterministic communication complexity and an upper bound of its rank was given. The rank of (a slight variant of) this matrix also turned out to be important for studying the complexity of the Hamiltonian Cycle and TSP problem, both parameterized by the path/treewidth of the input graph and parameterized by the number of vertices~\cite{CurticapeanLN18,CyganKN18,Nederlof20}.

Our main technical contribution is that the lower bound on the non-deterministic communication complexity of this matrix from~\cite{raz1995log} can be used to obtain the above lower bounds on the size of tropical circuits.

\paragraph{Organization.}
This paper is organized as follows:
In \cref{sec:prel} we provide the necessary definitions and preliminary tools used in the remainder of the paper. \Cref{sec:mwis} presents the proof of \cref{thm:main_is}.
\cref{sec:match} presents the aforementioned matchings compatibility matrix and the required results about its structure.
In \cref{sec:tsp} we prove \cref{thm:main_dtsp} and \cref{thm:main_tsp}.
In \cref{sec:st} we prove \cref{thm:main_dst}, and we provide some concluding remarks in \cref{sec:con}.

\section{Preliminaries}\label{sec:prel}
We use the following basic notation. For a positive integer $\ell$, we use $[\ell]$ to denote the set $\{1, 2, \dots, \ell\}$. For a graph $G$, we use $V(G)$ to denote the set of vertices of $G$, and $E(G)$ to denote the set of edges of $G$. For a graph $G$ and a subset $X \subseteq V(G)$, we write $G[X]$ to denote the subgraph of $G$ induced by $X$, i.e., a graph with vertices restricted to $X$ and all edges between vertices of $X$ preserved.

By $\sym_k$ we denote the group of all permutations on the set $[k]$. By $\bar\sym_k$ we denote the subset of $\sym_k$ of permutations that contain exactly one cycle.
By $\sym_{2k}^2$ we denote the subset of $\sym_{2k}$ of permutations that contain exactly $k$ cycles, each of size $2$ and by $\sym_{2k}^k$ we denote the subset of $\sym_{2k}$ of permutations that contain exactly $2$ cycles, each of size $k$.

A \emph{cycle type} of a permutation is defined as the multiset of the sizes of all its cycles.
A\emph{conjugation} of a permutation $\rho \in \sym_k$ by a permutation $\pi \in \sym_k$ is defined as $\pi^{-1} \rho \pi$, i.e. the composition of the inverse of $\pi$, $\rho$ and $\pi$. It is well known that permutation conjugation preserves its cycle type, and hence $\bar\sym_k, \sym_{2k}^2, \sym_{2k}^k$ are closed under taking conjugations.
We will need the following group-theoretic properties of permutations and their conjugations. All of those are either well known or easy to obtain with some basic calculations done on subgroups.

\begin{prop}\label{prop:sym_k}
    The following equalities hold for any $k \geq 1$ and for any maximal set $\mathcal{Z}_k \subseteq \sym_k$ of permutations of the same cycle type.
    \begin{enumerate}
        \item $|\bar\sym_k| = (k - 1)!$, $|\sym_{2k}^2| = (2k - 1)!!$, $|\sym_{2k}^k| = (2k - 1)! / k$,
        \item $\{ \pi^{-1}\rho\pi \mid \pi \in \sym_k \} = \mathcal{Z}_k$ for every $\rho \in \mathcal{Z}_k$,
        \item $|\{ \pi \in \sym_k \mid \pi^{-1} \rho_1 \pi = \rho_2 \}| = |\sym_k| / |\mathcal{Z}_k|$ for every $\rho_1, \rho_2 \in \mathcal{Z}_k$,
        \item $|\{ \rho_2 \in \sym_{2k}^2 \mid \rho_2 \rho_1 \in \sym_{2k}^k \}| = (2k - 2)!!$ for every $\rho_1 \in \sym_{2k}^2$.
    \end{enumerate}
\end{prop}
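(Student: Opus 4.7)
My plan is to handle each of the four claims by elementary enumeration and group theory, with only~(4) requiring any real bookkeeping. For part~(1) I would count directly: a $k$-cycle $(a_1,\ldots,a_k)$ admits $k$ equivalent cyclic rewritings, so $|\bar\sym_k| = k!/k = (k-1)!$; fixed-point-free involutions on $[2k]$ are counted by iteratively pairing the smallest unmatched element with one of the $2k-1, 2k-3, \ldots, 1$ remaining choices, giving $(2k-1)!!$; and for products of two $k$-cycles, one chooses the support of the first cycle in $\binom{2k}{k}$ ways, arranges each cycle in $(k-1)!$ ways, and divides by $2$ since the two cycles are interchangeable, obtaining $\binom{2k}{k}((k-1)!)^2/2 = (2k)!/(2k^2) = (2k-1)!/k$. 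For parts~(2) and~(3), the identity $\pi^{-1}(a_1,\ldots,a_j)\pi = (\pi^{-1}(a_1),\ldots,\pi^{-1}(a_j))$ applied cycle by cycle proves~(2) in both directions; and for~(3), orbit-stabiliser applied to the conjugation action of $\sym_k$ on itself yields centraliser size $|\sym_k|/|\mathcal{Z}_k|$, while the set of conjugators sending $\rho_1$ to a fixed $\rho_2 \in \mathcal{Z}_k$ forms a coset of this centraliser (nonempty by~(2)) and hence has the same cardinality.

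For part~(4) I would use the matching interpretation: identify each $\rho \in \sym_{2k}^2$ with the perfect matching $M_\rho = \{\{i,\rho(i)\} : i \in [2k]\}$ on $[2k]$, and analyse $\rho_2\rho_1$ via the $2$-regular multigraph $M_{\rho_1} \cup M_{\rho_2}$, whose edges alternate between the two matchings along every cycle. A direct calculation shows that each cycle of length $2\ell$ in this multigraph contributes exactly two disjoint $\ell$-cycles to $\rho_2\rho_1$, one per parity class of positions along the cycle. Hence $\rho_2\rho_1 \in \sym_{2k}^k$ precisely when $M_{\rho_1} \cup M_{\rho_2}$ forms a single Hamiltonian cycle on $[2k]$.

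It then remains to count, for fixed $\rho_1$, the perfect matchings $\rho_2$ for which $M_{\rho_1} \cup M_{\rho_2}$ is Hamiltonian. I would parametrise such a Hamiltonian cycle by starting at vertex $1$ and traversing its $M_{\rho_1}$-edge first, writing the cycle as $1, \rho_1(1), v_3, v_4, \ldots, v_{2k}$, where each odd-indexed $v_{2i+1}$ is chosen freely from the $2k-2i$ not-yet-visited vertices and each even-indexed $v_{2i}$ is forced as $\rho_1(v_{2i-1})$. Because the odd-indexed choices are always drawn from unvisited vertices, $\{v_{2k-1}, v_{2k}\}$ is automatically the last unused $M_{\rho_1}$-pair and the cycle closes via the edge $\{v_{2k}, 1\} \in M_{\rho_2}$. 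This yields $(2k-2)(2k-4)\cdots 2 = (2k-2)!!$ sequences in bijection with the admissible $\rho_2$. The main subtlety is the parity-class correspondence claimed in the previous paragraph — that a $2\ell$-cycle in $M_{\rho_1}\cup M_{\rho_2}$ really splits into two disjoint $\ell$-cycles under $\rho_2\rho_1$ — but this follows by direct inspection of the effect of $\rho_2\rho_1$ on successive vertices along the cycle, after which all four parts are routine.
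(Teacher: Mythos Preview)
Your proposal is correct in all four parts. The paper itself does not supply a proof of this proposition: it simply asserts that the statements ``are either well known or easy to obtain with some basic calculations done on subgroups,'' so there is no approach to compare against. Your argument is a clean fleshing-out of exactly those ``basic calculations''; in particular, your treatment of part~(4) via the matching interpretation (reading $\rho_2\rho_1 \in \sym_{2k}^k$ as $M_{\rho_1}\cup M_{\rho_2}$ being a single Hamiltonian $2k$-cycle, then counting traversals starting at vertex~$1$) is precisely the concrete computation the paper alludes to and matches how the paper later uses this fact in \Cref{sec:mm*}. The one point worth making explicit in your write-up is why the forced step $v_{2i+2}=\rho_1(v_{2i+1})$ never collides with an already-visited vertex: this holds because the visited vertices $v_1,\ldots,v_{2i}$ are a union of $\rho_1$-pairs, so an unvisited vertex always has an unvisited $\rho_1$-partner. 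With that remark added, the argument is complete.
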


\paragraph{Pathwidth.}
Let $G$ be a graph (undirected or directed). A \emph{path decomposition} of $G$ is a sequence of subsets $\beta_1, \beta_2, \dots, \beta_\ell \subseteq V(G)$ called \emph{bags} such that:
\begin{itemize}
    \item for every edge $(u,v) \in E(G)$, there is some $i \in [\ell]$ such that $u,v \in \beta_i$, and
    \item for every vertex $v \in V(G)$, there exist $1 \leq i_1 \leq i_2 \leq \ell$ such that $\{ i \mid v \in \beta_i \} = \{i_1, i_1 + 1, \dots, i_2 - 1, i_2\}$, i.e., the bags containing $v$ form a connected subinterval of the sequence of all bags.
\end{itemize}
The \emph{width} of a path decomposition $(\beta_i)_{i=1}^\ell$ is defined as $\max_{i \in [\ell]} |\beta_i| - 1$. The \emph{pathwidth} of $G$ is defined as the minimum possible width of a path decomposition of $G$.

\subsection{Tropical polynomials}
We treat all polynomials as defined over the tropical $(\max, +)$-semiring, i.e., $(f \cdot g)(x) = f(x) + g(x)$ and $(f + g)(x) = \max(f(x), g(x))$ for any two polynomials $f, g$.

\paragraph{Independent Set Polynomial.} If we treat the weights $x_v$ as indeterminates we can view $IS_G$ as a polynomial in the (max,+) semiring by replacing the max operation by addition and the $+$ operation by multiplication:
\begin{equation}\label{eq:mwispoly}
    IS_G := \sum_{I \in \mathcal{I}(G)} \left(\prod_{v \in I} x_v\right),
\end{equation}
where we remind the reader that $\mathcal{I}(G)$ denotes the family of all independent sets of $G$.

\paragraph{Traveling Salesperson Problem polynomial.} Similarly to the Independent Set polynomial $IS_G$, we define the directed TSP polynomial of $G$ as
$$
DTSP_G := \sum_{u_0, \dots, u_N \in \mathcal{H}(G)}
\left( \prod_{i \in [N]} x_{u_{i - 1}, u_i} \right),
$$
where we remind the reader that $\mathcal{H}(G)$ denotes the family of all sequences of $N + 1$ vertices $(u_i)_{i =0}^N$ such that $u_0 = u_N$ and $u_0 \to u_1 \to \dots \to u_{N - 1} \to u_N$ is a directed Hamiltonian cycle in $G$.
Naturally, we also view $TSP_G$ similarly as a polynomial by identifying variables $x_{u,v}$ and $x_{v, u}$ for all $uv \in E(G)$.

\paragraph{Directed Spanning Tree polynomial.} We also define a directed spanning tree polynomial of $G$ as
$$
DST_G := \sum_{p \in \mathcal{T}(G)}
\left( \prod_{v \in V(G) : p(v) \neq v} x_{p(v), v} \right)
$$
where we remind the reader that $\mathcal{T}(G)$ denotes the family of all functions $p : V(G) \to V(G)$ such that for exactly one vertex $v \in V(G)$, $p(v) = v$, and edges $(p(u), u)$ over all other vertices $u$ form an out-tree rooted at $v$.
\newline

\noindent
The \emph{support} of a monomial $m = x_1^{a_1} \dots x_\ell^{a_\ell}$ is the set of variables $\{x_1, \dots, x_\ell\}$ and is denoted as $\sup(m)$. The \emph{support of a polynomial} $p = m_1 + \dots + m_\ell$ is the union of supports $\sup(m_i)$ over all $i \in [\ell]$.
We let $\sup(p)$ denote the support of polynomial $p$.
For a monomial $m$ and a polynomial $p$, we write $m \in p$ if $p$ treated as a formal expression is of the form $m + q$ for some polynomial $q$. We say that a polynomial $p$ is \emph{homogeneous} if and only if for some $d \in \mathbb{N}$, the degree of all monomials $m \in p$ is $d$. For a polynomial $p$, we write $|p|$ to denote the number of monomials $m$ such that $m \in p$.

A \emph{valuation} is a function which maps variables to $\mathbb{R}$.
A \emph{characteristic valuation $\chi_m$} of a monomial $m$ is a valuation such that $\chi_m(x) = 1$ if $x \in \sup(m)$ and $\chi_m(x) = -1$ otherwise. As~stated before, we evaluate polynomials in a $(\max, +)$-semiring, i.e., given a valuation $v$, the~monomial $x$ evaluates to $v(x)$, $(f + g)(v)$ evaluates to $\max(f(v), g(v))$ and $(f \cdot g)(v)$ evaluates to $f(v) + g(v)$.

For two polynomials $p, q$, we write $p \subseteq q$ if for every monomial $m \in p$, we have $m \in q$. We write $p \simeq q$ iff $p \subseteq q$ and $q \subseteq p$. In particular, $p \simeq q$ doesn't imply that $p = q$ (take e.g. $p = x$ and $q = x + x$).
The definition of $\simeq$ is motivated by the following observation.

\begin{prop}\label{prop:poly_simeq}
    For any two multilinear polynomials $f, g$, we have $f \simeq g$ if and only if for any valuation $v$, we have $f(v) = g(v)$.
\end{prop}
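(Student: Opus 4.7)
The plan is to prove the two directions separately, using idempotency of tropical addition for the forward implication and the design of $\chi_m$ together with a small counting identity for the converse.

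For the forward direction, I observe that tropical addition $\max$ is idempotent, so for any polynomial $p$ and any valuation $v$ one has $p(v) = \max_{m \in p} m(v)$: the tropical value depends only on the \emph{set} of monomials appearing in $p$, not on their multiplicities. Since $f \simeq g$ asserts exactly that $f$ and $g$ contain the same set of monomials, both maxima range over the same set and coincide.

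For the backward direction I argue by contrapositive: assume $f \not\simeq g$ and exhibit a valuation on which they differ. Without loss of generality pick a monomial $m$ with $m \in f$ but $m \notin g$, set $d := |\sup(m)|$, and evaluate at $\chi_m$. The core computation is that for any multilinear monomial $m'$, writing $d' := |\sup(m')|$ and $c := |\sup(m') \cap \sup(m)|$,
\begin{equation*}
m'(\chi_m) \;=\; c - (d' - c) \;=\; 2c - d' \;\leq\; d,
\end{equation*}
with equality if and only if $m' = m$. The inequality follows by splitting into $d' \leq d$ and $d' > d$ using $c \leq \min(d, d')$; equality forces $c = (d+d')/2$, which is compatible with $c \leq \min(d, d')$ only when $d' = d$ and $c = d$, i.e. $\sup(m') = \sup(m)$, and by multilinearity $m' = m$.

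Given this claim, $f(\chi_m) = d$ because $m \in f$ attains $d$ and no other multilinear monomial matches it, while every monomial of $g$ is multilinear and different from $m$, so each contributes strictly less than $d$, yielding $g(\chi_m) < d$ (or $-\infty$ if $g$ contains no monomials). Hence $f(\chi_m) \neq g(\chi_m)$, completing the contrapositive. The only delicate point is the uniqueness claim above, and in particular its reliance on multilinearity: without it, a power like $x^2$ with $x \in \sup(m)$ would contribute $2$ under $\chi_m$ and could produce other monomials attaining or exceeding $d$, breaking the distinguishing property of $\chi_m$.
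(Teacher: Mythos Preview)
Your proof is correct and follows the same approach as the paper: the forward direction by idempotency of $\max$, and the converse via the characteristic valuation $\chi_m$. The paper's own proof simply asserts that $f(\chi_m) = |\sup(m)|$ and $g(\chi_m) < |\sup(m)|$ without justification; your computation $m'(\chi_m) = 2c - d'$ and the analysis of the equality case supply exactly the detail the paper omits.
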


\begin{proof}
    The implication from $f \simeq g$ to $f(v) = g(v)$ is immediate. In the other direction, assume w.l.o.g. that $f \subseteq g$ does not hold, and hence there is some monomial $m \in f$ such that $m \not\in g$.
    It is easy to see that $f(\chi_m) = |\sup(m)|$ and $g(\chi_m) < |\sup(m)|$, which is a contradiction.
\end{proof}

\subsection{Tropical circuits}

A \emph{tropical circuit} is a directed acyclic graph in which every vertex (called \emph{node}) is of in-degree 0 or 2. Vertices with in-degree 0 are labeled with either a variable, in which case they are called \emph{input nodes}, or with a constant $0$, in which case they are called \emph{constant nodes}. The nodes with in-degree $2$ are called \emph{operation nodes} and are labeled with a binary operator, either $+$ or $\max$. There is one node, designated an \emph{output node}.

Evaluation of a tropical circuit given a valuation $v$ is defined the following way. The nodes are processed according to the topological order of the graph. Constant nodes evaluate to value $0$. Input nodes labeled $x$ evaluate to $v(x)$. Operation nodes labeled with, respectively, $+$ and $\max$ evaluate to, respectively, sum and maximum of the values of its two predecessors. The output of the evaluation of the whole circuit is the value obtained at the output node.

We say that a tropical circuit $\Gamma$ \emph{calculates some polynomial $p$}, if the evaluation of $\Gamma$ on $v$ is equal to $p(v)$ for every valuation of variables $v$. It is easy to see that semantically, evaluating a tropical circuit computes exactly some polynomial: input node labeled $x$ computes $x$, the $\max$ nodes compute the sum of two polynomials and the $+$ nodes compute the product of two polynomials.

By \Cref{prop:poly_simeq} and the definitions, we immediately get the following.

\begin{prop}\label{prop:tc_semantics}
    Let $f, g$ be two multilinear polynomials and let $\Gamma$ be a tropical circuit calculating~$f$. Then $\Gamma$ calculates $g$ iff $f \simeq g$.
\end{prop}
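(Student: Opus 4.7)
The plan is to observe that the statement is essentially a direct unpacking of the definition of "a tropical circuit calculates a polynomial" combined with \Cref{prop:poly_simeq}, so no new machinery is required.

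First I would recall the exact definition: $\Gamma$ calculates a polynomial $p$ iff, for every valuation $v$ of the variables, the evaluation of $\Gamma$ on $v$ equals $p(v)$. Importantly, this definition only refers to the value $\Gamma$ outputs as a function of $v$, not to any syntactic representation of the polynomial. This is the whole reason we need the equivalence $\simeq$ at all: a tropical circuit can only "see" a polynomial up to equality of evaluations, not, for instance, up to repetition of identical monomials. I would state this plainly before starting either direction.

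For the forward direction, assume $\Gamma$ calculates both $f$ and $g$. Then for every valuation $v$, we have $f(v) = \mathrm{eval}_\Gamma(v) = g(v)$. Since $f$ and $g$ are multilinear, \Cref{prop:poly_simeq} applies and gives $f \simeq g$. For the backward direction, assume $\Gamma$ calculates $f$ and that $f \simeq g$. By \Cref{prop:poly_simeq}, $f(v) = g(v)$ for every valuation $v$, and by assumption $\mathrm{eval}_\Gamma(v) = f(v)$. Chaining these equalities gives $\mathrm{eval}_\Gamma(v) = g(v)$ for every $v$, which is precisely the statement that $\Gamma$ calculates $g$.

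There is no real obstacle here; the only thing worth double-checking is that \Cref{prop:poly_simeq} requires multilinearity, which is why the hypothesis "multilinear" appears in the statement of the proposition. The earlier example $p = x$ versus $q = x + x$ (listed in the paper as a reason why $\simeq$ differs from $=$) is already multilinear, so this also illustrates why we cannot strengthen the conclusion to $f = g$. I would mention this briefly to justify the exact form of the statement.
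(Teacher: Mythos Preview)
Your proposal is correct and matches the paper's approach exactly: the paper simply states that the proposition follows immediately from \Cref{prop:poly_simeq} and the definitions, and your write-up is precisely the unpacking of that one-line justification.
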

Intuitively, this claim states that a tropical circuit calculates some polynomial iff the set of its monomials is exactly the set of all monomials which appear during the evaluation of said circuit.

The main combinatorial ingredient regarding tropical circuits we will use is the following decomposition lemma. For the proof, see, e.g., \cite[Lemma 3.4]{Jukna2023}.
\begin{lemma}\label{lem:tc_decomposition}
    Let $f$ be a homogeneous polynomial calculated by a tropical circuit of size $\tau$ and let $X$ denote an arbitrary subset of $\sup(f)$. Then $f$ can be written as
    $$
        f \simeq \sum_{i \in [\tau]} g_i \cdot h_i,
    $$
    where $|\sup(g_i) \cap X|, |\sup(h_i) \cap X| \leq \frac{2}{3}|X|$ for each $i \in [\tau]$.
\end{lemma}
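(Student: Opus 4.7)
The plan is to follow the standard parse-tree decomposition argument for tropical circuits, which is the strategy behind Jukna's Lemma~3.4 cited in the statement. Semantically I view $f$ as the $(\max,+)$-sum $\sum_T m(T)$ over all parse trees $T$ of the circuit: a parse tree selects one of the two children at every $\max$ gate (the one whose computed polynomial still contains the monomial being derived) and both children at every $+$ gate, and its leaves are input gates whose labels multiply to the monomial $m(T)$.

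The core step is to attach to every parse tree $T$ a $+$ gate $v(T)$ of the circuit whose two children $v_1,v_2$ satisfy $|\sup(f_{v_i}) \cap X| \leq \tfrac{2}{3}|X|$. Writing $\phi(v) := |\sup(f_v) \cap X|$ and noting that $\sup(f_v)=\sup(f_{v_1})\cup\sup(f_{v_2})$ at both $\max$ and $+$ gates, $\phi$ is monotonically nonincreasing along every downward path of the circuit. I walk from the root of $T$ downward, descending into the unique $T$-child at $\max$ gates and into the child of larger $\phi$ at $+$ gates, and track the first step at which $\phi$ drops from above $\tfrac{2}{3}|X|$ to at most $\tfrac{2}{3}|X|$. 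If that step crosses a $+$ gate, both of its children satisfy the bound (the descended-to one by assumption, the other because I picked the heavier), so I let $v(T)$ be this gate. If the drop occurs at a $\max$ gate, the walk has entered a region where $\phi \leq \tfrac{2}{3}|X|$ throughout, and any $+$ gate encountered below is automatically balanced; since a monomial of degree at least two forces $T$ to contain some $+$ gate, steering the walk to one yields the sought $v(T)$. Degree-$1$ monomials are handled by the trivial split $g = x$, $h = 1$.

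Having defined the assignment $T \mapsto v(T)$, I group parse trees by their balancing gate. For each $+$ gate $v$ that is selected for at least one $T$, I bundle the associated monomials into a single rectangle $g_v \cdot h_v$, where $g_v$ sums the left halves and $h_v$ sums the right halves of the corresponding parse trees. The key combinatorial rectangle property---that for fixed $v$ the set of (left half, right half) pairs with $v(T) = v$ factors as a product set---follows because both the walk rule and the balancing criterion depend only on the portion of $T$ above $v$: the two sub-derivations inside the subtrees of $v_1$ and $v_2$ are formally independent and may be combined arbitrarily. Since left halves lie in $f_{v_1}$ and right halves in $f_{v_2}$, the $X$-intersections of $\sup(g_v)$ and $\sup(h_v)$ are bounded by $\phi(v_1)$ and $\phi(v_2)$ respectively, each at most $\tfrac{2}{3}|X|$, with any context leaves appearing above $v$ in $T$ absorbed carefully into one of the two halves so that neither side exceeds $\tfrac{2}{3}|X|$. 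Summing over the at most $\tau$ product gates of the circuit and padding with trivial $0$ terms yields $f \simeq \sum_{i \in [\tau]} g_i h_i$.

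The main obstacle is the joint rectangle-plus-support bookkeeping: one must show both that the chosen (left, right) pairs factor as a product set and that both halves, even after absorbing any context above $v(T)$, keep their $X$-intersection below $\tfrac{2}{3}|X|$. In the textbook treatment this is handled by choosing the balancing gate so that either there is no $+$ gate above $v(T)$ on the walk path (making the context trivial) or the $\phi$-slack at $v_1$ and $v_2$ absorbs the extra context variables without violating the bound.
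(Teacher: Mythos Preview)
The paper does not prove this lemma at all: it states it and cites Jukna's textbook (``For the proof, see, e.g., [Jukna, Lemma~3.4]''). So there is no in-paper argument to compare against; you are attempting to reconstruct the cited proof.

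Your overall strategy---parse trees, a root-to-leaf walk governed by $\phi(v)=|\sup(f_v)\cap X|$, splitting at a balanced $+$ gate, then grouping by that gate---is the standard one and is on the right track. But the write-up has a genuine gap exactly where you say the ``main obstacle'' lies. You claim that left halves lie in $f_{v_1}$ and right halves in $f_{v_2}$, hence $|\sup(g_v)\cap X|\le\phi(v_1)$ and $|\sup(h_v)\cap X|\le\phi(v_2)$; in the very same sentence you say the context above $v$ is ``absorbed carefully into one of the two halves''. These two assertions are incompatible: once any context is absorbed, that half is no longer a subpolynomial of $f_{v_i}$, and its $X$-support is no longer controlled by $\phi(v_i)$. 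Moreover, your walk can drop below $\tfrac{2}{3}|X|$ at a $\max$ gate as well (since $\sup(f_w)=\sup(f_u)\cup\sup(f_{u'})$ there too, so $\phi$ may fall arbitrarily), and nothing prevents $+$ gates from sitting on the walk \emph{above} that drop, each contributing off-path subtrees whose $X$-support you never bound. The two fixes you float at the end---either no $+$ gate above $v(T)$, or enough $\phi$-slack to swallow the context---are not shown to be exhaustive, and in general they are not. In short: right plan, but the context bookkeeping that is the heart of the argument is asserted rather than carried out; since the paper itself defers to Jukna, the cleanest route is to follow that reference for the missing details rather than leave them as a promissory note.
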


\section{Independent Set}
\label{sec:mwis}

In this section, we show the bound of \Cref{thm:main_is}. The class of graphs we use are inspired by the classic reduction from CNF-sat to Maximum Weight Independent Set.
Fix $k > 0$ and let $q = 4\binom{k}{2}$. Consider all possible 2-CNF clauses on $k$ variables: there are exactly $q$ of those. We can think of each clause as a tuple $(a, b, n^a, n^b)$ where $a, b \in [k]$ represents an unordered pair of variables and $n^a, n^b \in \{0, 1\}$ represents whether $a$-th and $b$-th variable is negated in the clause. We define the undirected graph $G_k$ in the following way. We put
\begin{equation*}
    V(G_k) = \{v_{i,j} \mid i \in [k], j \in \{0, \dots, 2q - 1\}\} \cup
             \{ w_j \mid j \in [q] \}.
\end{equation*}
The vertices $v_{i,j}$ represent the literals of the CNF formula: $i$ denotes the variable and the parity of $j$ determines whether it is negated or not.
Every literal has $q$ copies, one for each clause. The vertices $w_j$ represent all possible 2-CNF clauses. For the sake of convenience, we will use $w_{a,b,n^a,n^b}$ to denote the vertex corresponding to the clause represented by $a,b,n^a,n^b$ as described above. We put $V_{i,r} = \{ v_{i, 2j + r} \mid j \in \{0, \dots q - 1\} \}$ for $i \in [k], r \in \{0, 1\}$, that is, even and odd vertices representing the $i$-th variable. We also put $V_i = V_{i,0} \cup V_{i,1}$.

We put the edges accordingly:
\begin{equation*}
    E(G_k) = \{ (v_{i,j-1}, v_{i,j}) \mid i \in [k], j \in [2q - 1]\} \cup
             \{ (w_i, l_{i,j} \mid i \in [q], j \in \{1, 2\} \}
\end{equation*}
where $l_{i,1} = v_{a_i, 2(i - 1) + n^a_i}$ and $l_{i,2} = v_{b_i, 2(i - 1) + n^b_i}$ where $w_i = w_{a_i, b_i, n^a_i, n^b_i}$. That is, $l_{i,j}$ correspond to the literals of the clause represented by $w_i$. Because there is a copy $v_{i,j}$ of each literal for every clause, every $v_{i,j}$ is connected at most to one vertex $w_{1 + \lfloor j/2 \rfloor}$. The degree of each $w_i$ is exactly $2$.

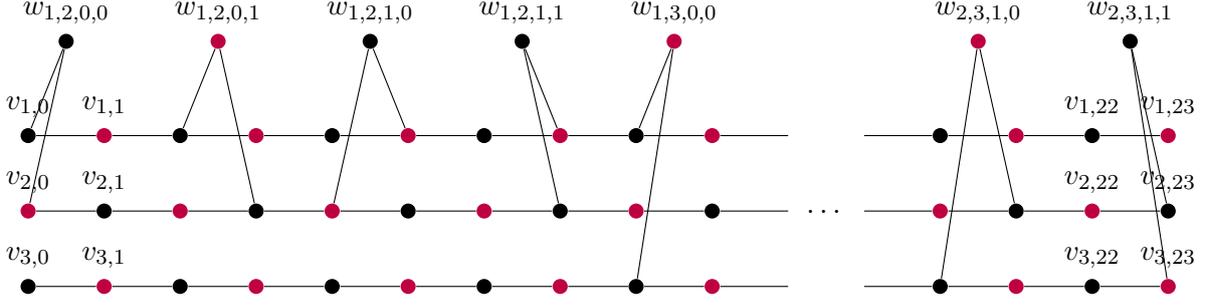
\begin{figure}[h]
    \centering
    \label{fig:mwis}
    \begin{tikzpicture}
		\node [style=node] (0) at (-6, 1) [label={$v_{1,0}$}] {};
		\node [style=node-is] (1) at (-5, 1) [label={$v_{1,1}$}] {};
		\node [style=node] (2) at (-4, 1) {};
		\node [style=node-is] (3) at (-3, 1) {};
		\node [style=node] (4) at (-2, 1) {};
		\node [style=node-is] (5) at (-1, 1) {};
		\node [style=node] (6) at (0, 1) {};
		\node [style=node-is] (7) at (1, 1) {};
		\node [style=node] (8) at (2, 1) {};
		\node [style=node-is] (9) at (3, 1) {};
		\node [style=node-is] (10) at (-6, 0) [label={$v_{2,0}$}] {};
		\node [style=node] (11) at (-5, 0) [label={$v_{2,1}$}] {};
		\node [style=node-is] (12) at (-4, 0) {};
		\node [style=node] (13) at (-3, 0) {};
		\node [style=node-is] (14) at (-2, 0) {};
		\node [style=node] (15) at (-1, 0) {};
		\node [style=node-is] (16) at (0, 0) {};
		\node [style=node] (17) at (1, 0) {};
		\node [style=node-is] (18) at (2, 0) {};
		\node [style=node] (19) at (3, 0) {};
		\node [style=node] (20) at (-6, -1) [label={$v_{3,0}$}] {};
		\node [style=node-is] (21) at (-5, -1) [label={$v_{3,1}$}] {};
		\node [style=node] (22) at (-4, -1) {};
		\node [style=node-is] (23) at (-3, -1) {};
		\node [style=node] (24) at (-2, -1) {};
		\node [style=node-is] (25) at (-1, -1) {};
		\node [style=node] (26) at (0, -1) {};
		\node [style=node-is] (27) at (1, -1) {};
		\node [style=node] (28) at (2, -1) {};
		\node [style=node-is] (29) at (3, -1) {};
		\node [style=node] (30) at (-5.5, 2.25) [label={$w_{1,2,0,0}$}] {};
		\node [style=node-is] (31) at (-3.5, 2.25) [label={$w_{1,2,0,1}$}] {};
		\node [style=node] (32) at (-1.5, 2.25) [label={$w_{1,2,1,0}$}] {};
		\node [style=node] (33) at (0.5, 2.25) [label={$w_{1,2,1,1}$}] {};
		\node [style=node-is] (34) at (2.5, 2.25) [label={$w_{1,3,0,0}$}] {};
		\node [style=none] (35) at (4, 1) {};
		\node [style=none] (36) at (4, 0) {};
		\node [style=none] (37) at (4, -1) {};
		\node [style=none] (38) at (4.5, 0) {$\dots$};
		\node [style=none] (39) at (5, 1) {};
		\node [style=none] (40) at (5, 0) {};
		\node [style=none] (41) at (5, -1) {};
		\node [style=node] (42) at (6, 1) {};
		\node [style=node-is] (43) at (7, 1) {};
		\node [style=node] (44) at (7, 0) {};
		\node [style=node-is] (45) at (6, 0) {};
		\node [style=node] (46) at (6, -1) {};
		\node [style=node-is] (47) at (7, -1) {};
		\node [style=node-is] (48) at (6.5, 2.25) [label={$w_{2,3,1,0}$}] {};
		\node [style=node] (49) at (8, 1) [label={$v_{1,22}$}] {};
		\node [style=node-is] (50) at (9, 1) [label={$v_{1,23}$}] {};
		\node [style=node] (51) at (9, 0) [label={$v_{2,23}$}] {};
		\node [style=node-is] (52) at (8, 0) [label={$v_{2,22}$}] {};
		\node [style=node] (53) at (8, -1) [label={$v_{3,22}$}] {};
		\node [style=node-is] (54) at (9, -1) [label={$v_{3,23}$}] {};
		\node [style=node] (55) at (8.5, 2.25) [label={$w_{2,3,1,1}$}] {};
        
		\draw [style=undirected] (0) to (1);
		\draw [style=undirected] (1) to (2);
		\draw [style=undirected] (2) to (3);
		\draw [style=undirected] (3) to (4);
		\draw [style=undirected] (4) to (5);
		\draw [style=undirected] (5) to (6);
		\draw [style=undirected] (6) to (7);
		\draw [style=undirected] (7) to (8);
		\draw [style=undirected] (8) to (9);
		\draw [style=undirected] (10) to (11);
		\draw [style=undirected] (11) to (12);
		\draw [style=undirected] (12) to (13);
		\draw [style=undirected] (13) to (14);
		\draw [style=undirected] (14) to (15);
		\draw [style=undirected] (15) to (16);
		\draw [style=undirected] (16) to (17);
		\draw [style=undirected] (17) to (18);
		\draw [style=undirected] (18) to (19);
		\draw [style=undirected] (20) to (21);
		\draw [style=undirected] (21) to (22);
		\draw [style=undirected] (22) to (23);
		\draw [style=undirected] (23) to (24);
		\draw [style=undirected] (24) to (25);
		\draw [style=undirected] (25) to (26);
		\draw [style=undirected] (26) to (27);
		\draw [style=undirected] (27) to (28);
		\draw [style=undirected] (28) to (29);
		\draw [style=undirected] (0) to (30);
		\draw [style=undirected] (10) to (30);
		\draw [style=undirected] (31) to (2);
		\draw [style=undirected] (31) to (13);
		\draw [style=undirected] (32) to (14);
		\draw [style=undirected] (32) to (5);
		\draw [style=undirected] (33) to (7);
		\draw [style=undirected] (33) to (17);
		\draw [style=undirected] (34) to (8);
		\draw [style=undirected] (34) to (28);
		\draw [style=undirected] (9) to (35.center);
		\draw [style=undirected] (19) to (36.center);
		\draw [style=undirected] (29) to (37.center);
		\draw [style=undirected] (39.center) to (42);
		\draw [style=undirected] (40.center) to (45);
		\draw [style=undirected] (41.center) to (46);
		\draw [style=undirected] (42) to (43);
		\draw [style=undirected] (45) to (44);
		\draw [style=undirected] (46) to (47);
		\draw [style=undirected] (55) to (51);
		\draw [style=undirected] (55) to (54);
		\draw [style=undirected] (49) to (50);
		\draw [style=undirected] (52) to (51);
		\draw [style=undirected] (53) to (54);
		\draw [style=undirected] (48) to (44);
		\draw [style=undirected] (48) to (46);
		\draw [style=undirected] (43) to (49);
		\draw [style=undirected] (44) to (52);
		\draw [style=undirected] (47) to (53);
\end{tikzpicture}
    \caption{The graph $G_3$. Red vertices belong to the canonical solution for valuation $\rho(1) = 1, \rho(2) = 0, \rho(3) = 1$.}
\end{figure}

It is easy to see that the graph $G_k$ has pathwidth at most $k + 1$: We can start with bags $\{v_{i,j}:i \in [k]\}$ for $j \in \{0,\ldots,2q-1\}$ and add the sets $N[w_j]$ to them in the natural way. Proving the following lemma will immediately show \Cref{thm:main_is}.

\begin{lemma}\label{lem:mwis}
	For any $k \geq 1$, any tropical circuit calculating $IS_{G_k}$ uses at least $2^k / 3$ gates.
\end{lemma}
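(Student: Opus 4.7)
The plan is a rectangle-decomposition argument using \cref{lem:tc_decomposition}. The combinatorial heart is to identify, for each Boolean assignment $\rho : [k] \to \{0, 1\}$, a canonical independent set
\[
    M_\rho \;:=\; \bigcup_{i \in [k]} V_{i, \rho(i)} \;\cup\; \{w_{a, b, 1 - \rho(a), 1 - \rho(b)} : 1 \le a < b \le k\},
\]
giving an injection of $\{0, 1\}^k$ into $\mathcal{I}(G_k)$. A direct check shows that each $V_{i, \rho(i)}$ is independent in $G_k[V_i]$, that the chosen clause vertices are non-adjacent to all selected variable vertices precisely because the opposite literal parity was chosen, and that distinct $\rho$'s yield distinct $M_\rho$'s. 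In particular, $|M_\rho| = d := kq + \binom{k}{2}$ is the same for all $\rho$, and the monomials $m_\rho := \prod_{v \in M_\rho} x_v$ all appear in the degree-$d$ homogeneous component $P$ of $IS_{G_k}$.

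Given a tropical circuit $\Gamma$ of size $\tau$ for $IS_{G_k}$, first pass to a circuit of size $O(\tau)$ computing the homogeneous component $P$ of degree $d$ (either via standard homogenization, or by using a per-monomial variant of \cref{lem:tc_decomposition} directly). Then apply the decomposition lemma with $X := \{v_{i, 0}, v_{i, 1} : i \in [k]\}$. Here $|X| = 2k$, and $M_\rho \cap X = \{v_{i, \rho(i)} : i \in [k]\}$ is a size-$k$ subset of $X$ that injectively encodes $\rho$. The lemma yields $P \simeq \sum_{i \in [\tau]} g_i h_i$ with the balance bound $|\sup(g_i) \cap X|, |\sup(h_i) \cap X| \le 4k/3$ on both sides of every rectangle.

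The crux is to bound, for each rectangle $(g_i, h_i)$, the number of canonical monomials $m_\rho$ it covers. A covered $m_\rho$ corresponds to a disjoint partition $M_\rho = A \cup B$ with $\sup(A) \subseteq \sup(g_i)$ and $\sup(B) \subseteq \sup(h_i)$. In particular, for every $i' \in [k]$, the unique element of $M_\rho$ in the pair $\{v_{i', 0}, v_{i', 1}\}$, namely $v_{i', \rho(i')}$, must lie in $\sup(g_i) \cup \sup(h_i)$. Combining this with the balance bound, the goal is to show that only a constant number (ideally at most $3$) of assignments $\rho$ can be simultaneously consistent with a single rectangle, which yields $\tau \ge 2^k / 3$.

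The main technical obstacle is exactly this per-rectangle count. A naive bound using only $|(\sup(g_i) \cup \sup(h_i)) \cap X| \le 8k/3$ gives $2^{O(k)}$ assignments per rectangle; sharpening to $O(1)$ requires exploiting both (i) the rigid pair structure of $X$, in which each of the $k$ pairs contributes exactly one element to $M_\rho \cap X$, and (ii) the additional constraint that $A$ and $B$ must extend to a full independent set $M_\rho \in \mathcal{I}(G_k)$ of the prescribed size $d$, not merely to arbitrary subsets of $V(G_k)$. I expect this final counting step to follow the template of classical rectangle lower bounds for tropical independent-set polynomials (cf.~\cite{Jukna2023}).
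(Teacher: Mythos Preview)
Your proposal has a genuine gap at exactly the point you flag: the per-rectangle bound. The balance condition from \cref{lem:tc_decomposition} with your choice of $X=\{v_{i,0},v_{i,1}:i\in[k]\}$ only says $|\sup(g_i)\cap X|,|\sup(h_i)\cap X|\le 4k/3$, so $\sup(g_i)\cup\sup(h_i)$ may contain both $v_{i',0}$ and $v_{i',1}$ for up to $2k/3$ indices $i'$, leaving that many coordinates of $\rho$ undetermined. Nothing in the balance condition forces this down to $O(1)$; a single balanced rectangle can in principle cover $2^{\Theta(k)}$ canonical monomials. Your hope that ``classical templates'' finish the argument does not pan out here, and the homogenization step you invoke also costs a $\mathrm{poly}(d)=\mathrm{poly}(k)$ factor in circuit size, which would further spoil the constant $2^k/3$.

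The paper does \emph{not} go through \cref{lem:tc_decomposition} for this lemma. Instead it first proves a structural \emph{thin-rectangle} lemma: in any rectangle $\cA\cdot\cB$ of independent sets that contains some canonical $I_\rho$, one of the two sides has at most one \emph{useful} member (one that completes to a canonical solution). This is where the clause gadgets $w_{a,b,n^a,n^b}$ are actually used; they force that if both sides had two useful sets, some $I_\rho$ would miss a clause vertex together with both its neighbours. This lemma alone still does not bound the number of canonical solutions per rectangle. The second ingredient is a gate-function argument (in the spirit of \cite[Lemma~2.18]{Jukna2023}): for every gate $w$ one defines the ``below'' polynomial $B_w$ and the ``above'' polynomial $A_w$, observes that the output gate is $A$-thin and every input gate is $B$-thin, and then for each canonical $c$ traces a root-to-leaf path to find the edge where thinness flips. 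At that edge (or at one of its endpoints, in the $+$-gate case) one gets $|A^*\cdot B^*|=1$ and hence $c$ is the unique canonical solution there. This yields an injection from the $2^k$ canonical solutions into $V(\Gamma)\cup E(\Gamma)$, giving $2^k\le 3|V(\Gamma)|$. The balance condition is never used, and the argument applies directly to the inhomogeneous polynomial $IS_{G_k}$.
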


The rest of this section is dedicated to proving this lemma. Fix $k > 0$ and put $G := G_k$.
The proof will be done in two steps. First, we will show a combinatorial structure of so-called rectangles in our described graph. Then, we will relate this notion to the polynomials calculated by tropical circuits and show the desired lower bound as a consequence.

The \emph{canonical solution $I_\rho$ given by the assignment $\rho : [k] \to \{0, 1\}$} is defined as
\begin{equation*}
	I_\rho = \{ v_{i, 2j + \rho(i)} \mid i \in [k], j \in \{0, \dots, q-1\} \} \cup
	         \{ w_i \mid \rho(a_i) \neq n^a_i \ \text{and}\ \rho(b_i) \neq n^b_i \}.
\end{equation*}
That is, we pick vertices $v_{i,j}$ for all $j$ even or all $j$ odd depending on $\rho(i)$ and include all vertices $w_i$ corresponding to clauses not satisfied by assignment $\rho$. Clearly, there are $2^k$ canonical solutions, one per each assignment $\rho$, and every canonical solution is an independent set of $G$.

We will say that a pair $\cA, \cB \subseteq 2^{V(G)}$ of families of independent sets of $G$ forms a \emph{rectangle} in $G$ (denoted $\cA \cdot \cB$) if and only if every pair of sets $A \in \cA, B \in \cB$ are disjoint and non-adjacent (in particular, $A \cup B$ forms an independent set in $G$). We refer to the families $\cA, \cB$ as \emph{sides} of the rectangle. We will say that a rectangle $\cA \cdot \cB$ \emph{contains} an independent set $I$ if $I = A \cup B$ for some $A \in \cA, B \in \cB$. We will say that a set $A \in \cA$ is \emph{useful} if there exists $B \in \cB$ such that $A \cup B$ is a canonical solution. We define a set $B \in \cB$ being useful in an analogous way.

For simplicity, the notion of a rectangle introduced here is defined in terms of families of independent sets of $G$. This concept, however, will be crucial in showing all of the subsequent bounds, hence, in later sections we will redefine rectangles in terms of tropical polynomials, as in the statement of \Cref{lem:tc_decomposition}, in order for the definition to be more general.

\subsection{Thin rectangles}

The key combinatorial property of the circuits calculating $IS_G$ is expressed via the following lemma.
\begin{lemma}\label{lem:is_thin_rects}
	Let $\cA \cdot \cB$ be a rectangle in $G$ that contains at least one canonical solution $I_\rho$. Then, either $\cA$ or $\cB$ contain at most one useful set.
\end{lemma}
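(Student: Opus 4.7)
The plan is to argue by contradiction: assume both $\cA$ and $\cB$ contain at least two useful sets, and pick distinct $A_0, A_1 \in \cA$ and distinct $B_0, B_1 \in \cB$ with $A_0 \cup B_0 = I_\rho$ (the canonical solution given by the hypothesis); let $A_1 \cup B^* = I_{\rho_A}$ and $A^* \cup B_1 = I_{\rho_B}$ witness usefulness of $A_1$ and $B_1$. The starting point is the \emph{disjointness} half of the rectangle definition: the sets $\mathrm{Avert} := \bigcup_{A \in \cA} A$ and $\mathrm{Bvert} := \bigcup_{B \in \cB} B$ are disjoint, so for any canonical solution $I_\sigma$ in the rectangle the partition $I_\sigma = (I_\sigma \cap \mathrm{Avert}) \sqcup (I_\sigma \cap \mathrm{Bvert})$ is the unique decomposition $(A,B)$ with $A \in \cA$, $B \in \cB$, $A \cup B = I_\sigma$. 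In particular $A_0 = I_\rho \cap \mathrm{Avert}$, so $A_0 \neq A_1$ forces $\rho_A \neq \rho$; symmetrically $\rho_B \neq \rho$. Let $D_A = \{i : \rho_A(i) \neq \rho(i)\}$ and $D_B = \{i : \rho_B(i) \neq \rho(i)\}$, both non-empty.

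The main structural claim is that for every $i \in D_A$ the whole path $V_i$ sits on one side: $V_i \subseteq \mathrm{Avert}$ or $V_i \subseteq \mathrm{Bvert}$. Walking along $v_{i,0}, v_{i,1}, \dots, v_{i,2q-1}$, the vertices of $V_{i,\rho(i)}$ partition into their $A_0$- and $B_0$-parts, while the vertices of $V_{i,1-\rho(i)}$ partition into their $A_1$- and $B^*$-parts. Independence of $A_1 \cup B_0$ and of $A_0 \cup B^*$ forces a vertex of $V_{i,\rho(i)}$ in $A_0$ to push both of its $V_{i,1-\rho(i)}$-neighbours into $A_1$, and one in $B_0$ to push them into $B^*$. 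So if $V_{i,\rho(i)}$ were not monochromatic, two of its members at path-distance $2$ would lie in different classes, and the odd-indexed vertex between them would be forbidden from both $A_1$ and $B^*$, contradicting its membership in $I_{\rho_A}$. Hence $V_{i,\rho(i)}$ is monochromatic, and the matching conclusion for $V_{i,1-\rho(i)}$ follows immediately, placing all of $V_i$ on a single side.

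To close the argument I exploit that $G_k$ realises every $2$-CNF clause on $k$ variables. If two indices $i, i' \in D_A$ had paths on opposite sides, the unique clause $w_c \in I_\rho$ with $n^i = 1-\rho(i)$ and $n^{i'} = 1-\rho(i')$ would be adjacent to a vertex in $V_{i,1-\rho(i)} \subseteq \mathrm{Avert}$ and a vertex in $V_{i',1-\rho(i')} \subseteq \mathrm{Bvert}$; disjointness then forces $w_c \notin \mathrm{Avert} \cup \mathrm{Bvert}$, contradicting $w_c \in I_\rho$. So all paths $V_i$, $i \in D_A$, share a common side, and a direct comparison of $A_0 = I_\rho \cap \mathrm{Avert}$ with $A_1 = I_{\rho_A} \cap \mathrm{Avert}$ (they can disagree only on the $V_i$-parts for $i \in D_A$ and on clauses involving $D_A$) shows that if this common side were $\mathrm{Bvert}$ then $A_0 = A_1$, so it must be $\mathrm{Avert}$. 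The symmetric analysis puts every $V_j$ with $j \in D_B$ on the $\mathrm{Bvert}$ side, so $D_A \cap D_B = \emptyset$; picking any $i \in D_A$ and $j \in D_B$, the clause $w_c \in I_\rho$ using their flipped literals reproduces the mixed-neighbour contradiction. The main delicacy I anticipate is the endpoint behaviour of the paths $V_i$ (the vertices $v_{i,0}$ and $v_{i,2q-1}$ have only one on-path neighbour), but since a single forbidden neighbour already eliminates one of $\{A_1, B^*\}$ as a legal side, the argument goes through at the endpoints unchanged.
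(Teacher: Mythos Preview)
Your proof is correct and rests on the same two ideas as the paper's: a path-monochromaticity argument (adjacent parities on $V_i$ force each other to the same side via the rectangle's non-adjacency) and a final contradiction obtained from a clause vertex $w_c$ whose two literals lie on opposite sides.

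The organisation differs, however. The paper works locally: it shows (its second claim) that for any two useful sets $I_1,I_2$ on the same side there is \emph{one} index $i$ with $V_{i,r}\subseteq I_1$ and $V_{i,1-r}\subseteq I_2$; applying this once to each side produces indices $i_A\neq i_B$, and the clause $w_{i_A,i_B,1-r_A,1-r_B}$ immediately gives the contradiction. Your route globalises via the partition $\mathrm{Avert}\sqcup\mathrm{Bvert}$: you show \emph{every} $V_i$ with $i\in D_A$ is monochromatic, then that all such paths share a common side, then that this side must be $\mathrm{Avert}$ (else $A_0=A_1$), and only then pick the clause. The extra step---ruling out the $\mathrm{Bvert}$ side---is the one place your write-up is underspecified: you correctly note $A_0,A_1$ could also differ on clause vertices $w_c$ involving $D_A$, but do not explain why they don't. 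The fix is immediate with the tools you already have: any such $w_c$ has a literal in some $V_a$ with $a\in D_A$, hence in $\mathrm{Bvert}$, so by non-adjacency $w_c\notin\mathrm{Avert}$ and thus $w_c\notin A_0\cup A_1$.

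In summary, the paper's path to the final clause is shorter (no common-side analysis needed), while your $\mathrm{Avert}/\mathrm{Bvert}$ framework makes the rectangle structure more transparent and yields the stronger intermediate statement that \emph{all} flipped paths are monochromatic.
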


For the rest of this subsection, we fix a rectangle $\cA \cdot \cB$, and focus on proving the lemma via the following series of claims.
\begin{claim}
    Let $I_1, I_2$ be two different useful sets belonging to the same side of $\cA \cdot \cB$. Then, there exists $i \in [k], j \in \{0, \dots, 2q - 1\}$ such that $v_{i, j}$ belongs to exactly one of $I_1, I_2$.
\end{claim}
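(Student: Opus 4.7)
The plan is to argue by contradiction. Suppose instead that $I_1$ and $I_2$ agree on every $v$-vertex, i.e.\ $I_1 \cap V_v = I_2 \cap V_v$ where $V_v = \bigcup_i V_i$. Since $I_1 \neq I_2$, they must differ on the $w$-vertices, so without loss of generality there is some $w_j \in I_1 \setminus I_2$. Assume also (by symmetry) that $I_1, I_2$ both lie in $\cA$, and use usefulness to fix witnesses $B_1, B_2 \in \cB$ with $I_1 \cup B_1 = I_{\rho_1}$ and $I_2 \cup B_2 = I_{\rho_2}$ canonical solutions.

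The key move is to consider the \emph{mixed} set $I_1 \cup B_2$. By the rectangle property it is disjoint, non-adjacent, and hence an independent set of $G$. Using the assumption $I_1 \cap V_v = I_2 \cap V_v$, its $v$-part satisfies
\[
    (I_1 \cup B_2) \cap V_v \;=\; (I_2 \cup B_2) \cap V_v \;=\; I_{\rho_2} \cap V_v \;=\; \bigcup_{i \in [k]} V_{i, \rho_2(i)}.
\]
Now $w_j \in I_1 \cup B_2$, and since this set is independent, neither neighbour $l_{j,1}$ nor $l_{j,2}$ belongs to it; in particular they are absent from its $v$-part $I_{\rho_2} \cap V_v$. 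Translating back through the definition of $l_{j,1}, l_{j,2}$ and the canonical construction, this forces $\rho_2(a_j) \neq n^a_j$ and $\rho_2(b_j) \neq n^b_j$, which by definition of $I_{\rho_2}$ means $w_j \in I_{\rho_2}$.

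To close the argument, observe $w_j \in I_{\rho_2} = I_2 \cup B_2$ while $w_j \notin I_2$, so $w_j \in B_2$. Combined with $w_j \in I_1$ and the disjointness part of the rectangle property ($I_1 \cap B_2 = \emptyset$), this is the desired contradiction. I expect the only subtle point to be keeping the two aspects of the rectangle property cleanly separated: \emph{non-adjacency} is used to get that $I_1 \cup B_2$ is independent and thereby to conclude $w_j \in I_{\rho_2}$, while \emph{disjointness} is used at the very end to derive the contradiction. Apart from that bookkeeping, all steps are immediate from the definitions of $I_\rho$, $l_{j,1}, l_{j,2}$, and the rectangle.
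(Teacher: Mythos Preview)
Your proof is correct and follows essentially the same approach as the paper: both reduce to the case $w_j \in I_1 \setminus I_2$, mix $I_1$ with the witness $B_2$ (the paper's $J_2$) of $I_2$, and use the rectangle properties together with the structure of the canonical solution $I_{\rho_2}$ around $w_j$. The only cosmetic difference is the endpoint of the contradiction: the paper directly exhibits a neighbour $v_{i,j} \in I_2 \setminus I_1$, whereas you push one step further to conclude $w_j \in B_2$ and contradict disjointness.
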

\begin{proof}
    W.l.o.g. we assume that $I_1, I_2 \in \cA$. If both sets have different intersection with some $V_i$, then the claim trivially follows, hence w.l.o.g. we assume that for some $l \in [q]$, we have $w_l \in I_1$ and $w_l \not\in I_2$.
    
    Since $I_2$ is useful, we have $J_2 \in \cB$ such that $I_2 \cup J_2$ is a canonical solution. Since $I_1$ and $J_2$ are disjoint, we have $w_l \not\in I_2 \cup J_2$, hence $v_{i,j} \in I_2 \cup J_2$ for some neighbor $v_{i, j}$ of $w_l$. As $I_1$ and $J_2$ are non-adjacent, we have $v_{i,j} \not\in J_2$, hence $v_{i,j} \in I_2$. As $I_1$ is an independent set of $G$, we have $v_{i,j} \not\in I_1$, which finishes the proof of the claim.
\end{proof}

\begin{claim}
    Let $I_1, I_2$ be two different useful sets belonging to the same side of $\cA \cdot \cB$. Then, there exists $i \in [k], r \in \{0,1\}$ such that $V_{i,r} \subseteq I_1$ and $V_{i, 1-r} \subseteq I_2$.
\end{claim}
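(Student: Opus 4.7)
My plan is to use the previous claim to locate a variable index $i$ at which $I_1$ and $I_2$ disagree on some copy $v_{i,j}$, and then argue via a cascading parity argument along the path $P_i := v_{i,0} v_{i,1} \cdots v_{i,2q-1}$ that this local disagreement must extend to cover all of $V_i$.

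For the setup, I would assume without loss of generality that $I_1, I_2 \in \cA$, and fix witnesses $J_1, J_2 \in \cB$ such that $I_1 \cup J_1 = I_{\rho_1}$ and $I_2 \cup J_2 = I_{\rho_2}$ are canonical solutions. By the previous claim there is some $v_{i,j}$ belonging to exactly one of $I_1, I_2$; say $v_{i,j} \in I_1 \setminus I_2$, and let $r \in \{0,1\}$ be the parity of $j$. Since $v_{i,j} \in I_{\rho_1}$ and $I_{\rho_1} \cap V_i = V_{i, \rho_1(i)}$, I get $\rho_1(i) = r$. I then rule out $\rho_2(i) = r$: otherwise $v_{i,j} \in V_{i,r} \subseteq I_2 \cup J_2$ would force $v_{i,j} \in J_2$ (as $v_{i,j}\notin I_2$), contradicting the disjointness of the rectangle sides $\cA$ and $\cB$ since $v_{i,j} \in I_1$. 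Hence $\rho_2(i) = 1-r$.

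The core of the argument is a cascade along $P_i$ starting from $v_{i,j}$ and propagating in both directions. The key structural observation is that every parity-$r$ vertex of $V_i$ lies in $I_1 \cup J_1$ and outside $I_2 \cup J_2$ (and symmetrically for parity $1-r$), because $\rho_1(i) = r$ and $\rho_2(i) = 1-r$. For the inductive step, if $v_{i,s} \in I_1$, then a neighbor $v_{i,s\pm 1}$ (which is in $V_{i,1-r}$) cannot lie in $J_2$: that would give an edge between $I_1$ and $J_2$, violating the non-adjacency condition of the rectangle $\cA \cdot \cB$. Thus $v_{i,s\pm 1}$ must lie in $I_2$. Symmetrically, if $v_{i,s} \in I_2$ then any parity-$r$ neighbor must lie in $I_1$, using the non-adjacency of $I_2$ with $J_1$. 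Induction along $P_i$ in both directions then places all of $V_{i,r}$ into $I_1$ and all of $V_{i,1-r}$ into $I_2$, yielding the desired conclusion.

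The main point requiring care is keeping the two rectangle non-adjacency invocations ($I_1$ versus $J_2$ and $I_2$ versus $J_1$) straight in alternating steps of the cascade; once this is set up correctly, the induction is routine and reaches both endpoints of $P_i$.
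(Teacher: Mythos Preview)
Your proposal is correct and follows essentially the same strategy as the paper: locate a disagreement via the previous claim, then propagate along the path $P_i$ using the rectangle non-adjacency conditions between $I_1,J_2$ and between $I_2,J_1$. The only cosmetic difference is packaging---you establish $\rho_1(i)=r$, $\rho_2(i)=1-r$ explicitly up front and run a direct forward cascade in both directions, whereas the paper phrases the same argument as a minimal-counterexample (choosing the closest index $j'$ at which the desired containment fails and deriving a contradiction from the neighbor at distance one).
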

\begin{proof}
    Assume w.l.o.g. that $I_1, I_2 \in \cA$.
    By the previous claim, we have some $i \in [k], j \in \{0, \dots, 2q - 1\}$ such that w.l.o.g. $v_{i,j} \in I_1$ and $v_{i,j} \not\in I_2$. Put $r = j \mod 2$. Since $I_1, I_2$ are useful, we have the corresponding $J_1, J_2 \in \cB$. By contradiction, let $j' \in \{0, \dots, 2q - 1\}$ be an index minimizing $|j - j'|$ for which $v_{i,j'}$ does not belong to the expected set, i.e., such that either $j' \mod 2 = r$ and $v_{i,j'} \not\in I_1$ or $j' \mod 2 \neq r$ and $v_{i,j'} \not\in I_2$.

    Consider the case $j' \mod 2 = r$, the other one will be symmetric. As $v_{i,j} \in I_1 \cup J_1$, we also have $v_{i,j'} \in I_1 \cup J_1$. By minimality, we have that either $v_{i,j'+1}$ or $v_{i,j'-1}$ is in $I_2$. If $v_{i,j'} \not\in I_1$ then $v_{i,j'} \in J_1$, but this would imply that $I_2 \cup J_1$ is not an independent set, which is a contradiction.
\end{proof}

\begin{claim}
    Let $I_1, I_2 \in \cA$ be two different useful sets and let $J_1 \in \cB$ be such that $I_1 \cup J_1$ is a canonical solution. Then $J_1$ is the only useful set in $\cB$.
\end{claim}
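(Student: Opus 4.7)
The plan is to argue by contradiction: assume some $J_2 \in \cB$ with $J_2 \neq J_1$ is also useful, and then combine the structural information on the two sides of the rectangle to derive a contradiction. First, applying the previous claim to the distinct useful sets $I_1, I_2 \in \cA$ yields indices $i \in [k]$ and $r \in \{0,1\}$ with $V_{i,r} \subseteq I_1$ and $V_{i,1-r} \subseteq I_2$; its proof is symmetric in the two sides of the rectangle, so the same statement applied to the distinct useful sets $J_1, J_2 \in \cB$ yields indices $i' \in [k]$ and $r' \in \{0,1\}$ with $V_{i',r'} \subseteq J_1$ and $V_{i',1-r'} \subseteq J_2$. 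I then split on whether $i = i'$.

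The case $i = i'$ is immediate from the disjointness required of $A \cup B$ in a rectangle: if $r = r'$ then $V_{i,r} \subseteq I_1 \cap J_1$, while if $r \neq r'$ then $V_{i,1-r} \subseteq I_2 \cap J_1$, both contradictions. The interesting case is $i \neq i'$, and there the plan is to exhibit a single clause vertex $w_\ell$ that cannot be placed consistently. Since $G_k$ was defined to contain a vertex $w_\ell$ for every 2-CNF clause on $k$ variables, I choose $\ell$ so that $w_\ell$ represents the clause on the pair $\{i, i'\}$ with negation pattern $n^i = 1-r$, $n^{i'} = 1-r'$; by construction its two neighbors are $v_{i, 2(\ell-1)+(1-r)} \in V_{i,1-r}$ and $v_{i', 2(\ell-1)+(1-r')} \in V_{i',1-r'}$. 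Using $I_1 \cup J_1 = I_{\rho_1}$ together with $V_{i,r} \subseteq I_1$ and $V_{i',r'} \subseteq J_1$, I read off $\rho_1(i) = r$ and $\rho_1(i') = r'$, so by the defining formula for canonical solutions $w_\ell \in I_{\rho_1}$, and hence $w_\ell$ must lie in at least one of $I_1, J_1$.

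To close the argument I show that $w_\ell$ can be in neither. Since $V_{i,1-r} \subseteq I_2$, one neighbor of $w_\ell$ already lies in $I_2$; the rectangle forces $I_2 \cup J_1$ to be independent, which rules out $w_\ell \in J_1$. Symmetrically, $V_{i',1-r'} \subseteq J_2$ puts a neighbor of $w_\ell$ into $J_2$, and the independence of $I_1 \cup J_2$ rules out $w_\ell \in I_1$. The only real obstacle I anticipate is bookkeeping: one must check that the clause-to-literal parity in the construction actually makes the chosen $w_\ell$ adjacent to vertices of $V_{i,1-r}$ and $V_{i',1-r'}$ (rather than the opposite blocks). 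This is a direct consequence of the definitions $V_{i,r} = \{v_{i,2j+r} : j \in \{0,\dots,q-1\}\}$ and the adjacency rule for $w_\ell$; the inclusion of all four negation patterns in the $q = 4\binom{k}{2}$ clauses is precisely what guarantees that the required clause vertex exists for every choice of $(i, i', r, r')$.
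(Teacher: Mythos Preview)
Your proof is correct and follows essentially the same approach as the paper: assume a second useful $J_2$, apply the previous claim on both sides to obtain $V_{i,r}\subseteq I_1$, $V_{i,1-r}\subseteq I_2$, $V_{i',r'}\subseteq J_1$, $V_{i',1-r'}\subseteq J_2$, then use the clause vertex $w_{i,i',1-r,1-r'}$ to derive a contradiction via the rectangle's non-adjacency constraints. Your treatment is in fact slightly more explicit than the paper's in two spots: you spell out the $i=i'$ case (the paper just writes ``naturally $i_A\neq i_B$''), and you argue directly that $w_\ell\in I_{\rho_1}$ from the canonical-solution formula, whereas the paper instead shows that neither $w$ nor its two neighbours lie in $I_1\cup J_1$ and leaves it implicit that this violates canonicity.
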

\begin{proof}
    Assume by contradiction that we have useful $J_2 \in \cB$ different to $J_1$. Applying the previous claim to both sides, we obtain $i_A, i_B \in [k]$ and $r_A, r_B \in \{0,1\}$ such that: $V_{i_A,r_A} \subseteq I_1, V_{i_A, 1 - r_A} \subseteq I_2, V_{i_B, r_B} \subseteq J_1$ and $V_{i_B, 1 - r_B} \subseteq J_2$. Naturally, we have $i_A \neq i_B$.

    Pick $w = w_{i_A, i_B, 1 - r_A, 1 - r_B}$. That is,  for some $j \in \{0, \dots, q-1\}$, the neighbors of $w$ are exactly $v_A = v_{i_A, 2j + 1 - r_A}$ and $ v_B = v_{i_B, 2j + 1 - r_B}$. We have $v_A \in I_2$ and $v_B \in J_2$. Since $V_{i_A, r_A} \subseteq I_1$ and $I_1$ is independent in $G$, we have $v_A \not\in I_1$. Similarly, $v_B \not\in J_1$. Since $I_2$ and $J_1$ are disjoint, we have $v_A \not\in J_1$, and similarly $v_B \not\in I_1$. Since $I_2$ and $J_1$ are not adjacent, we have $w \not\in J_1$, and similarly $w \not\in I_1$. However, this means that $I_1 \cup J_1$ is not a canonical solution since neither $w$ nor its two neighbors belong to $I_1 \cup J_1$. This is a contradiction, hence the claim is proven.
\end{proof}
\Cref{lem:is_thin_rects} follows immediately from the last claim.

\subsection{Lower bound on thin rectangle circuits}
In this subsection we prove \Cref{thm:main_is}. At this point we can abstract away from the exact structure of $G$. The proof will depend only on the thin rectangle property proven in the previous subsection. The proof will closely follow the proof of \cite[Lemma 2.18]{Jukna2023}.

Let $\Gamma$ be any tropical circuit that calculates $IS_G$. For any node $w$ of $\Gamma$, we define its \emph{below} $B_w$ as the polynomial calculated by $\Gamma$ if we designate $w$ to be its output node. Intuitively, $B_w$ captures the contribution of the subcircuit rooted at $w$ to the output of the whole calculation.

In a similar spirit, we would like to define the \emph{above} of a node $w$ as the contribution of the remainder of the circuit. Note that $B_w$ does not need to be contained in $IS_G$, however, assuming the output node is reachable from $w$, for at least one monomial $m$, we have $m \cdot B_w \subseteq IS_G$. We will define $A_w$ as the sum of all such monomials, that is
$$
    A_w = \sum_{m:m \cdot B_w \subseteq IS_G} m.
$$
Naturally, $A_w \cdot B_w \subseteq IS_G$ for every node $w$ from which the output node is reachable.

Every monomial of each $A_w, B_w$ represents an independent set in $G$. Let $A_w^*$ denote the useful monomials of $A_w$, that is, monomials $m \in A_w^*$ such that there exist $m' \in B_w$ for which $m \cdot m'$ represents a canonical solution. We define $B_w^*$ analogously. As proven by \Cref{lem:is_thin_rects}, for every node $w$, one of $A_w^*, B_w^*$ must be of size at most $1$. If $|A_w^*| \leq 1$, we will say that $A_w \cdot B_w$ is \emph{$A$-thin}, otherwise, we will say that it is \emph{$B$-thin}.

If $w$ is an output node, then we have $A_w = \{1\}, B_w = IS_G$, so $A_w \cdot B_w$ is $A$-thin. If $w$ is an input node labeled $x$, we have $B_w = \{x\}$, therefore $A_w \cdot B_w$ is $B$-thin. Let $w_\ell, \dots, w_1$ be any path in $\Gamma$ where $w_1$ is an output node and $w_\ell$ is some input node. Based on our observations, there is some edge on this path $(w_{i+1}, w_i)$ such that $w_i$ is $A$-thin and $w_{i + 1}$ is $B$-thin, and so $|A_{w_i}^* \cdot B_{w_{i+1}}^*| \leq 1$.

Let $IS_G^*$ be the polynomial containing all monomials of $IS_G$ representing a canonical solution. The following claim shows that all canonical solutions belonging to the rectangles described above can be propagated along such input to output paths of $\Gamma$. We will use the term predecessor to refer to in-neighbors of a node of the circuit.

\begin{claim}
	Let $w$ be a node of $\Gamma$ and let $c \in IS_G^*$ be such that $c = a \cdot b$ for some $a \in A_w^*, b \in B_w^*$.
	Then, for at least one predecessor $u$ of $w$, we have $c \in A_u^* \cdot B_u^*$.
	Moreover, if $w$ is a + gate, then the above holds for both predecessors of $w$.
\end{claim}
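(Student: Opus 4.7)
The plan is to case-split on the type of the operation node $w$. Let $u_1, u_2$ denote its two predecessors. Recalling that in the tropical $(\max,+)$-semiring a $\max$ gate computes a polynomial sum and a $+$ gate computes a polynomial product, we have $B_w \simeq B_{u_1} + B_{u_2}$ in the first case and $B_w \simeq B_{u_1} \cdot B_{u_2}$ in the second. In both cases the goal is to produce, starting from the witnesses $a \in A_w^*$ and $b \in B_w^*$, new witnesses $a' \in A_u^*$ and $b' \in B_u^*$ with $a' \cdot b' = c$.

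Suppose first that $w$ is a $\max$ gate. Every monomial of $B_w$ appears in $B_{u_1}$ or $B_{u_2}$, so after relabeling, $b \in B_{u_1}$. Since $B_{u_1} \subseteq B_w$ as formal polynomials, the defining condition $a \cdot B_w \subseteq IS_G$ of $a \in A_w$ immediately gives $a \cdot B_{u_1} \subseteq IS_G$, i.e.\ $a \in A_{u_1}$. Together with $a \cdot b = c \in IS_G^*$, this shows $a \in A_{u_1}^*$ and $b \in B_{u_1}^*$, and so $c \in A_{u_1}^* \cdot B_{u_1}^*$. Only the one predecessor whose below actually contains $b$ needs to be used here, consistent with the statement.

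Suppose now that $w$ is a $+$ gate. Then $b$ factorizes as $b = b_1 \cdot b_2$ with $b_i \in B_{u_i}$, and since $c = a \cdot b \in IS_G^*$ is multilinear, so is $b$, forcing $b_1$ and $b_2$ to have disjoint supports. For the predecessor $u_1$, set $a' := a \cdot b_2$ and $b' := b_1$. Using $b_2 \cdot B_{u_1} \subseteq B_{u_2} \cdot B_{u_1} \simeq B_w$, one gets $a' \cdot B_{u_1} = a \cdot (b_2 \cdot B_{u_1}) \subseteq a \cdot B_w \subseteq IS_G$, so $a' \in A_{u_1}$. Since $a' \cdot b' = a \cdot b = c$ is canonical, both $a' \in A_{u_1}^*$ and $b' \in B_{u_1}^*$ follow, giving $c \in A_{u_1}^* \cdot B_{u_1}^*$. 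Swapping the roles of $u_1$ and $u_2$ handles the other predecessor symmetrically, yielding propagation through both predecessors of a $+$ gate.

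The main obstacle is the $+$ gate step: one cannot simply reuse $(a, b_1)$, since typically $a \notin A_{u_1}$ -- the below $B_{u_1}$ contains monomials that only enter $B_w$ after being multiplied by some factor from $B_{u_2}$, so $a \cdot B_{u_1}$ need not lie in $IS_G$. The key idea is to absorb the ``missing'' factor $b_2$ into the above part, effectively rebalancing the rectangle $A_w \cdot B_w$ across the split induced by the multiplication gate. Once this rebalance is in place, everything else is a direct unfolding of the definitions of $A_w$, $B_w$, $A_w^*$, $B_w^*$.
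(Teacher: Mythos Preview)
Your proof is correct and follows essentially the same approach as the paper's: the max-gate case reuses $(a,b)$ at the predecessor containing $b$, and the $+$-gate case absorbs the complementary factor $b_2$ into the above part to form the new witness $a' = a\cdot b_2 \in A_{u_1}$, exactly as the paper does. The added remark on multilinearity and disjoint supports is harmless but not actually needed for the argument.
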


\begin{proof}
	Let $u, v$ be the predecessors of $w$.
	First, consider the case where $w$ is a max gate, that is, $B_w = B_u + B_v$. Then, either $b \in B_u$ or $b \in B_v$. W.l.o.g. assume the former.
    We have $a \cdot B_u \subseteq a \cdot B_w \subseteq IS_G$, hence $a \in A_u$. Therefore, $b \in B_u^*$, $a \in A_u^*$, and hence $c \in A_u^* \cdot B_u^*$ as desired.
    
	Second, consider the case where $w$ is a + gate, that is, $B_w = B_u \cdot B_v$. Then, we have $b = b_v \cdot b_u$ for some $b_v \in B_v$ and $b_u \in B_u$.
    We have $a \cdot b_v \cdot B_u \subseteq a \cdot B_v \cdot B_u \subseteq IS_G$, hence $a \cdot b_v \in A_u$. Therefore, $b_u \in B_u^*$, $a \cdot b_v \in A_u^*$, and hence $c \in A_u^* \cdot B_u^*$. By a symmetric argument, $c \in A_v^* \cdot B_v^*$.
\end{proof}

Now, consider the following process. We fix a canonical solution $c \in IS_G^*$ and start at output node $w_1$. If we are currently in a max node $w_i$, we move towards the predecessor $w_{i+1}$ given by the claim. If we are currently in a + node, we move towards the predecessor $w_{i+1}$ for which $|B_{w_{i+1}}^*|$ is larger. Such process terminates at an input node and produces a path $w_\ell, \dots, w_1$. As argued before, for some $i \in [\ell - 1]$, we have $|A_{w_i}^* \cdot B_{w_{i+1}}^*| \leq 1$.

We have $c \in A_{w_{i+1}}^* \cdot B_{w_{i+1}}^*$. If $w_i$ is a max node, we have $B_{w_{i+1}} \subseteq B_{w_i}$, hence $A_{w_{i+1}}^* \subseteq A_{w_i}^*$, hence $c \in A_{w_i}^* \cdot B_{w_{i+1}}^*$, and hence $A_{w_i}^* \cdot B_{w_{i+1}}^* = \{c\}$.

If $w_i$ is a + node, then we additionally have $c \in A_v^* \cdot B_v^*$ where $v$ is the predecessor of $w_i$ other than $w_{i+1}$. Given the way we chose $w_{i+1}$, we have $|B_v^*| \leq |B_{w_{i+1}}^*| \leq 1$. Since $B_{w_i}^* \subseteq B_{w_{i+1}}^* \cdot B_v^*$, we have $|B_{w_i}^*| \leq 1$. Thus, $A_{w_i}^* \cdot B_{w_i}^* = \{c\}$.

Repeating this process for each canonical $c$ creates a mapping from $IS^*_G$ to the set $V(G) \cup E(G)$. If we map $c$ to a vertex $w$, then we have a guarantee that $A_w^* \cdot B_w^* = \{c\}$. If we map $c$ to an edge $(u,w)$, we have a guarantee that $A_w^* \cdot B_u^* = \{c\}$. This implies that the mapping is injective, and hence $2^k = |IS^*_G| \leq |V(G)| + |E(G)| \leq 3|V(G)|$. This finishes the proof of the lemma.

\section{Matching compatibility matrix}
\label{sec:match}

In the following section, a matrix is a function from $I \times J$ to an arbitrary value set, where $I$ and $J$ are some sets of indices of, respectively, rows and columns of the matrix. We do not require $I, J$ to be a set of form $[n]$ for some $n \in \mathbb{N}$. All matrices considered in this section have values in the set $\{0, 1\}$.

\begin{definition}
    For a 0-1 matrix $M$ with row indices $I$ and column indices $J$, we say that a~pair $I' \subseteq I, J' \subseteq J$ forms a \emph{rectangle} of $M$ if and only if it induces an all-ones submatrix of $M$, i.e., if and only if $M_{i,j} = 1$ for each $i \in I', j \in J'$. The \emph{size} of a rectangle $R$ is defined as $|I'| \cdot |J'|$ and denoted as $|R|$.
\end{definition}

\begin{definition}
    A \emph{rectangle cover} of a 0-1 matrix $M$ is a set of rectangles $(I_1, J_1), \dots, (I_s, J_s)$ which cover all ones of $M$, i.e., such that for each $i,j$ with $M_{i,j} = 1$, there exists $p \in [s]$ such that $(i, j) \in I_p \times J_p$. The \emph{size} of the cover is the number of rectangles $s$.
\end{definition}

\subsection{Complete bipartite graphs}

\begin{definition}
    A matching compatibility matrix $\mathcal{M}_k$ of order $k$ is a binary matrix of size $k! \times k!$ with rows and columns indexed by permutations $\sym_k$ which satisfies
    $$
        \mathcal{M}_k(\rho_1, \rho_2) = 1 \quad \text{if and only if} \quad
        \rho_2 \rho_1 \in \bar\sym_k.
    $$
    By $C_k$ we denote the size of the smallest rectangle cover of $\mathcal{M}_k$.
\end{definition}

Our bounds on sizes of tropical circuits will be based on the following bound on $C_k$ due to Raz and Spieker.
\begin{lemma}[\cite{raz1995log}]\label{lem:raz_spieker}
    $C_k = 2^{\Omega(k \log \log k)}$.
\end{lemma}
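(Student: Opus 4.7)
The overall plan is to reduce the cover-size lower bound to a size bound on any single monochromatic $1$-rectangle, and then conclude by a simple counting argument. Each row of $\mathcal{M}_k$ contains exactly $|\bar\sym_k| = (k-1)!$ ones, since for a fixed $\rho_1$ the valid column indices form the coset $\bar\sym_k \rho_1^{-1}$. Thus $\mathcal{M}_k$ has $k!(k-1)!$ ones in total, and any cover using rectangles each of size at most $R$ requires at least $k!(k-1)!/R$ rectangles. It therefore suffices to prove $R \leq ((k-1)!)^2 \cdot 2^{-\Omega(k \log \log k)}$, which yields at least $2^{\Omega(k \log \log k)}$ rectangles as claimed.

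The strategy for the rectangle-size bound exploits the group-theoretic structure of $\mathcal{M}_k$. Because the entry at $(\rho_1, \rho_2)$ depends only on the product $\rho_2 \rho_1$ and $\bar\sym_k$ is a conjugacy class, $\mathcal{M}_k$ is invariant under the two-sided action $(I', J') \mapsto (\pi I', J' \pi^{-1})$. After translation one may assume $\mathrm{id} \in J'$, which forces $I' \subseteq \bar\sym_k$ and reduces the rectangle condition to: for every $\sigma \in J'$ and every $\rho \in I'$, the product $\sigma \rho$ lies in $\bar\sym_k$. The key question is then how much the simultaneous imposition of the constraints $\sigma \rho \in \bar\sym_k$, as $\sigma$ ranges over $J'$, cuts down the admissible set of $\rho$.

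The heart of the argument, due to Raz and Spieker, is to extract a quantitative ``approximate independence'' between these constraints, going beyond the fact that each of them is individually a density-$1/k$ event. Concretely, one designs an encoding scheme for a uniformly random $\rho \in I'$ that leverages the cycle structure of the $\sigma \in J'$ and of the products $\sigma \rho$: a bare cycle-versus-non-cycle distinction only saves $\log k$ bits per constraint, but a finer analysis of how the cycles of $\sigma$ interleave with those of $\sigma \rho$ adds an extra $\approx \log \log k$ bits of savings per structural ``coordinate'' of $\rho$. Amortized over $\Omega(k)$ such coordinates, this produces the desired $2^{\Omega(k \log \log k)}$ improvement over the trivial bound $|I'| \cdot |J'| \leq ((k-1)!)^2$.

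The main obstacle is precisely this amortization step. All naive approaches bottom out at the trivial bound: the rank of $\mathcal{M}_k$, computed via the character expansion $\mathbf{1}_{\bar\sym_k} = \sum_{\chi} a_\chi \chi$ of the class function, is only $k^{O(1)}$ but controls partition number rather than non-deterministic cover; naive independence overshoots because of the strong correlations among the constraints; and a crude pigeonhole on cycle types is too coarse to capture the interaction. I would therefore invoke the Raz--Spieker encoding of~\cite{raz1995log} as a black-box input for the final quantitative step, rather than attempting to re-derive it here.
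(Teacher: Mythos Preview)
The paper does not prove this lemma at all: it is stated with a citation to~\cite{raz1995log} and used as a black box throughout. Your proposal, despite the surrounding exposition, ultimately does the same thing---you explicitly defer the ``final quantitative step'' to the Raz--Spieker encoding of~\cite{raz1995log}. So there is no discrepancy; both treatments cite the result rather than prove it.

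Your preliminary observations (the count of ones in $\mathcal{M}_k$, the reduction from cover number to a uniform rectangle-size bound, the translation-invariance) are all correct, but they are not where the difficulty lies and do not add anything the paper needs: the paper already derives the rectangle-size bound \emph{from} $C_k$ in \Cref{clm:mm_rect_cover_duality} and \Cref{lem:mm_size_bound}, going in the opposite direction to your reduction. The hard content---why a single $1$-rectangle must be small by a factor $2^{\Omega(k\log\log k)}$---is exactly the Raz--Spieker argument, and neither you nor the paper reproduce it.
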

It is worth noting that $C_k$ is believed to be bounded by $2^{\Omega(k \log k)}$. Showing this would give us asymptotically tight bounds on the value of $C_k$ as the upper bound of $k! = 2^{\Oh(k \log k)}$ is trivial. This, however, remains an open problem.

A small rectangle cover of a matrix implies the existence of large rectangles in it, but the converse does not need to hold in the general case. The following claim and lemma shows, that in the case of $\mathcal{M}_k$, the size of a minimal rectangle cover and maximal rectangle are in fact related, and within a poly-logarithmic factor of what one can expect.

\begin{claim}\label{clm:mm_rect_cover_duality}
    Let $R$ be any rectangle in $\mathcal{M}_k$. Then, there exists a rectangle cover of $\mathcal{M}_k$ of size $\ell = \frac{k!(k - 1)!}{|R|} \cdot 2k \ln k$.
\end{claim}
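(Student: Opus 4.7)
The approach is to leverage two symmetries of $\mathcal{M}_k$ that map 1-entries to 1-entries: the \emph{diagonal shift} $(\rho_1, \rho_2) \mapsto (\sigma\rho_1,\, \rho_2\sigma^{-1})$, which preserves the product $\rho_2\rho_1$, and \emph{simultaneous conjugation} $(\rho_1, \rho_2) \mapsto (\pi^{-1}\rho_1\pi,\, \pi^{-1}\rho_2\pi)$, which replaces the product by its conjugate under $\pi$ (and so keeps it in $\bar\sym_k$). For each pair $(\sigma, \pi) \in \sym_k \times \sym_k$ I will define a translated rectangle
\[
R_{\sigma,\pi} \;:=\; \bigl\{(\sigma\pi^{-1}\alpha\pi,\; \pi^{-1}\beta\pi\sigma^{-1}) : \alpha \in I',\, \beta \in J'\bigr\},
\]
and a direct calculation $\pi^{-1}\beta\pi\sigma^{-1} \cdot \sigma\pi^{-1}\alpha\pi = \pi^{-1}(\beta\alpha)\pi \in \bar\sym_k$ confirms that $R_{\sigma,\pi}$ is a rectangle of $\mathcal{M}_k$ of size exactly $|R|$. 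This yields a family of $(k!)^2$ candidate rectangles, all of the same size as $R$.

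The first substantive step is to prove that the combined action of shift and conjugation is transitive on 1-entries of $\mathcal{M}_k$. Given two 1-entries with products $\gamma, \gamma' \in \bar\sym_k$, I pick $\pi$ satisfying $\pi^{-1}\gamma\pi = \gamma'$ (which exists because every element of $\bar\sym_k$ is a $k$-cycle and any two $k$-cycles in $\sym_k$ are conjugate by \Cref{prop:sym_k}), then choose a suitable $\sigma$ to align the first coordinate; the second coordinate is then automatically correct since the product is preserved throughout. Either by orbit-stabilizer, or by a direct double-counting argument using $\sum_{\sigma,\pi}|R_{\sigma,\pi}| = (k!)^2|R|$ spread equally across the $k!(k-1)!$ 1-entries, this transitivity implies that every 1-entry of $\mathcal{M}_k$ is contained in exactly $k|R|$ of the rectangles $R_{\sigma, \pi}$.

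Consequently, a uniformly random pair $(\sigma,\pi) \in \sym_k \times \sym_k$ produces a rectangle $R_{\sigma,\pi}$ covering any fixed 1-entry with probability $p := k|R|/(k!)^2 = |R|/(k!(k-1)!)$. Sampling $\ell$ pairs independently and uniformly, the expected number of 1-entries left uncovered is at most $k!(k-1)! \cdot (1-p)^\ell \leq k!(k-1)! \cdot e^{-p\ell}$. Plugging in $\ell = (k!(k-1)!/|R|) \cdot 2k\ln k$ gives $p\ell = 2k\ln k$, and since $\ln(k!(k-1)!) \leq 2\ln(k!) \leq 2k\ln k$ the expectation is at most $1$ (strictly less for $k \geq 2$ because of lower-order slack), so a deterministic collection of $\ell$ rectangles covering all 1-entries exists.

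The main obstacle is the transitivity claim together with the associated uniform count $k|R|$ of covering translates per 1-entry: combining shift and conjugation is essential, because shifts alone leave each product class $\rho_2\rho_1 = \gamma$ as its own orbit and conjugation alone splits 1-entries into many simultaneous conjugacy classes, so neither symmetry in isolation yields the required probability bound. Once these two actions are mixed, everything else reduces to a standard random-cover calculation.
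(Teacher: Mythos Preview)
Your proposal is correct and follows essentially the same approach as the paper: both use a $\sym_k \times \sym_k$ action on the $1$-entries of $\mathcal{M}_k$ that is transitive, translate the given rectangle $R$ under this action, and then apply a random-cover/union-bound argument with $\ell$ independent samples. Your parameterization $(\sigma,\pi)\mapsto(\sigma\pi^{-1}\rho_1\pi,\,\pi^{-1}\rho_2\pi\sigma^{-1})$ is exactly the paper's map $\mu_{\alpha,\beta}(\rho_1,\rho_2)=(\alpha\rho_1\beta,\,\beta^{-1}\rho_2\alpha^{-1})$ under the change of variables $\alpha=\sigma\pi^{-1}$, $\beta=\pi$; your decomposition into ``diagonal shift'' and ``simultaneous conjugation'' is a nice way to explain why the action is transitive, and your orbit-stabilizer count (exactly $k|R|$ translates per $1$-entry) is slightly sharper than the paper's ``at least $k$'', but the argument is the same.
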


\begin{proof}
    Let $Q \subseteq \sym_k \times \sym_k$ denote the set of pairs $(\rho_1, \rho_2)$ such that $\rho_2\rho_1 \in \bar\sym_k$, i.e., the set of 1-entries of $\mathcal{M}_k$. Let $P_1$ and $P_2$ denote the sets of permutations which are indices of, respectively, rows and columns of $R$. Thus, for every $\rho_1 \in P_1, \rho_2 \in P_2$, we have $(\rho_1, \rho_2) \in Q$.
    Consider a map $\mu_{\alpha, \beta} : \sym_k \times \sym_k \to \sym_k \times \sym_k$ parameterized by permutations $\alpha, \beta \in \sym_k$, defined as
    $$
        \mu_{\alpha, \beta} (\rho_1, \rho_2) =
        (\alpha \rho_1 \beta, \beta^{-1} \rho_2 \alpha^{-1}).
    $$
    First, note that $\rho_2 \rho_1 \in \bar\sym_k$ if and only if $(\beta^{-1}\rho_2\alpha^{-1}) (\alpha \rho_1 \beta) = \beta^{-1} \rho_2 \rho_1 \beta \in \bar\sym_k$, as permutation conjugation preserves cycle type, hence $(\rho_1, \rho_2) \in Q$ if and only if $\mu_{\alpha, \beta}(\rho_1, \rho_2) \in Q$.

    Let $\alpha, \beta$ be two random permutations sampled independently from a uniform distribution on~$\sym_k$. First, we show that for any two pairs $(\rho_1, \rho_2), (\sigma_1, \sigma_2) \in Q$, we have
    $$
    \Pr[\mu_{\alpha, \beta}(\rho_1, \rho_2) = (\sigma_1, \sigma_2)] \geq \frac{1}{k!(k - 1)!}.
    $$
    To do this, we show that there exist at least $k$ different pairs $\alpha, \beta$ for which $\alpha \rho_1 \beta = \sigma_1$ and $\beta^{-1} \rho_2 \alpha^{-1} = \sigma_2$. By \Cref{prop:sym_k}, we have exactly $k$ permutations $\beta$ which satisfy $\sigma_2 \sigma_1 = \beta^{-1} \rho_2 \rho_1 \beta$. For each of those, we can put $\alpha = \sigma_2^{-1} \beta^{-1} \rho_2$ to obtain a pair satisfying the conditions.

    For any pair $(\rho_1, \rho_2) \in Q$, we have $(\rho_1, \rho_2) \in \mu_{\alpha, \beta}(R)$ if and only if $\mu_{\alpha^{-1}, \beta^{-1}}(\rho_1, \rho_2) \in R$, and therefore
    $$
    \Pr[(\rho_1, \rho_2) \in \mu_{\alpha, \beta}(R)] \geq \frac{|R|}{k!(k - 1)!}.
    $$

    Now, sample $\ell$ pairs $\alpha_i, \beta_i$ independently uniformly from $\sym_k \times \sym_k$ and let $R_i = \mu_{\alpha_i, \beta_i}(R)$. Clearly $R_i \subseteq Q$ for each $i \in [\ell]$. Let $\bar{R} = \bigcup_{i \in [\ell]} R_i$. For any $(\rho_1, \rho_2) \in Q$, we have
    $$
    \Pr\left[(\rho_1, \rho_2) \not\in \bar{R}\right] \leq \left(1 - \frac{|R|}{k!(k - 1)!} \right)^{\ell} < e^{-2k \ln k} = k^{-2k} < (k!(k - 1)!)^{-1}.
    $$
    By union bound, we have
    $
    \Pr \left[ \bar{R} \neq Q \right] < 1
    $,
    hence there exist a choice of $\alpha_i, \beta_i$ for which $\bar{R} = Q$. For this choice of $\alpha_i,\beta_i$ we have that $R_1,\ldots,R_\ell$ is a rectangle cover, which finishes the proof.
\end{proof}

Immediately, we get the following lemma as a corollary.

\begin{lemma}\label{lem:mm_size_bound}
    Every rectangle in $\mathcal{M}_k$ has size at most
    $k!(k-1)! \cdot C_k^{-1} \cdot 2k \ln k$.
\end{lemma}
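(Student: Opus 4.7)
The plan is to derive this lemma directly as a corollary of \Cref{clm:mm_rect_cover_duality}, which was stated just above. All of the real work (the probabilistic construction using the maps $\mu_{\alpha,\beta}$, the lower bound on the probability that a fixed $1$-entry is hit, and the union bound) has already been done inside that claim. What remains is purely a rearrangement step, so I do not anticipate any genuine obstacle.

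Concretely, I would start from an arbitrary rectangle $R$ of $\mathcal{M}_k$ and invoke \Cref{clm:mm_rect_cover_duality} to produce a rectangle cover of $\mathcal{M}_k$ of size
\[
    \ell \;=\; \frac{k!(k-1)!}{|R|} \cdot 2k\ln k.
\]
By the very definition of $C_k$ as the size of a \emph{minimum} rectangle cover of $\mathcal{M}_k$, we obtain $C_k \le \ell$, i.e.\
\[
    C_k \;\le\; \frac{k!(k-1)!}{|R|}\cdot 2k\ln k.
\]
Multiplying both sides by $|R|/C_k$ then yields
\[
    |R| \;\le\; k!(k-1)!\cdot C_k^{-1}\cdot 2k\ln k,
\]
which is exactly the claimed bound. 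Since this inequality was obtained for an arbitrary rectangle $R$, the lemma follows.

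The only subtle point worth highlighting is that one must not implicitly assume $|R|>0$ when dividing; but if $|R|=0$ the statement is trivially true (and in any case the construction in \Cref{clm:mm_rect_cover_duality} is only non-trivial for nonempty rectangles, so we may as well apply it only to these). Apart from this trivial case distinction, the proof is a one-line corollary of the preceding claim, and the heart of the argument—the random-conjugation averaging used to turn a single large rectangle into a small cover—lives entirely in the proof of \Cref{clm:mm_rect_cover_duality}.
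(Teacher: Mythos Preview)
Your proposal is correct and matches the paper's approach exactly: the paper also states that the lemma follows immediately as a corollary of \Cref{clm:mm_rect_cover_duality}, obtained by rearranging $C_k \le \ell = \tfrac{k!(k-1)!}{|R|}\cdot 2k\ln k$. Your remark about the trivial $|R|=0$ case is a harmless extra observation.
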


There is a natural correspondence between permutations in $\sym_k$ and perfect matchings of a~complete bipartite graph graph $K_{k, k}$. Let us label the vertices of both sides of a bipartition of $K_{k, k}$ as respectively $v_1, \dots, v_k$ and $u_1 \dots, u_k$. For any perfect matching $M \subseteq \{v_1, \dots, v_k\} \times \{u_1, \dots, u_k\}$, its corresponding permutation is $\rho \in \sym_k$ defined as
$$
    \rho(s) = t \quad \text{if and only if} \quad (v_s, u_t) \in M.
$$
It is easy to see that for any two perfect matchings $M_1, M_2$ in $K_{k, k}$, their union $M_1 \cup M_2$ forms a~Hamiltonian cycle iff $\rho_2^{-1} \rho_1 \in \bar\sym_k$ iff $\rho_1^{-1} \rho_2 \in \bar\sym_k$, where $\rho_i$ denotes a permutation corresponding to $M_i$.

\subsection{Complete graphs} \label{sec:mm*}

Similarly to bipartite cliques, every perfect matching in a complete graph $K_{2k}$ can be represented by a permutation in the set $\sym_{2k}^2$. It is then easy to see that the union of two perfect matchings $M_1$, $M_2$ in $K_{2k}$ forms a Hamiltonian cycle in $K_{2k}$ iff $\rho_2 \rho_1$ is in $\sym_{2k}^k$.

\begin{definition}
    A clique matching compatibility matrix $\mathcal{M}^*_k$ of order $k$ is a binary matrix of size $(2k - 1)!! \times (2k - 1)!!$ with rows and columns indexed by permutations from $\sym_{2k}^2$ which satisfies
    $$
        \mathcal{M}^*_k(\rho_1, \rho_2) = 1 \quad \text{if and only if} \quad
        \rho_2 \rho_1 \in \sym_{2k}^k.
    $$
\end{definition}

\begin{lemma}\label{lem:mm*_size_bound}
    Every rectangle in $\mathcal{M}^*_k$ has size at most
    $(2k-1)! \cdot C_k^{-1} \cdot 2k \ln k$.
\end{lemma}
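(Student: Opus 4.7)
The plan is to adapt the strategy of Claim 4.3 and Lemma 4.4, combining a probabilistic cover construction on $\mathcal{M}^*_k$ with a submatrix embedding that reduces to $\mathcal{M}_k$.

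First I would build the analog of Claim 4.3: given a rectangle $R$ in $\mathcal{M}^*_k$, produce a rectangle cover of $\mathcal{M}^*_k$ of size roughly $(2k-1)!/|R| \cdot \Oh(k \log k)$. The two-parameter $\sym_k \times \sym_k$ action used in the bipartite case is replaced here by the diagonal conjugation action of $\sym_{2k}$: for $\alpha \in \sym_{2k}$ put $\mu_\alpha(\rho_1, \rho_2) = (\alpha\rho_1\alpha^{-1}, \alpha\rho_2\alpha^{-1})$. This is well-defined because conjugation preserves cycle type (so it stays in $\sym_{2k}^2$), preserves the compatibility condition since $\alpha\rho_2\alpha^{-1} \cdot \alpha\rho_1\alpha^{-1} = \alpha(\rho_2\rho_1)\alpha^{-1}$ is conjugate to $\rho_2\rho_1$, and maps the rectangle $P_1 \times P_2$ to the rectangle $(\alpha P_1 \alpha^{-1}) \times (\alpha P_2 \alpha^{-1})$. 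The key step is to show that the action is transitive on the set $Q^*$ of $1$-entries of $\mathcal{M}^*_k$: each such pair $(\rho_1, \rho_2)$ corresponds to a $2$-edge-coloured Hamiltonian alternating cycle on $[2k]$, and any two such cycles differ by a vertex relabelling. The colour-preserving automorphism group of such a cycle consists of the $k$ even rotations and the $k$ reflections through pairs of opposite edge midpoints, a dihedral group of order $2k$, so the orbit has size $(2k)!/(2k) = (2k-1)!$; this matches the count of $1$-entries provided by Proposition 2.1(4). It follows that $\mu_\alpha(\rho_1,\rho_2)$ is uniformly distributed on $Q^*$, and the standard union-bound computation from Claim 4.3 yields the desired cover.

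Next I would show that $\mathcal{M}_k$ embeds as a submatrix of $\mathcal{M}^*_k$, which together with Lemma 4.2 will imply that the minimum rectangle cover size of $\mathcal{M}^*_k$ is at least $C_k$. Define $\iota : \sym_k \to \sym_{2k}^2$ by letting $\iota(\rho)$ be the involution swapping $i$ with $k + \rho(i)$ for each $i \in [k]$. Using $\iota$ on the column side and $\pi \mapsto \iota(\pi^{-1})$ on the row side, a direct computation tracking where each element of $[2k]$ goes shows that the product $\iota(\rho_2)\iota(\rho_1^{-1})$ acts on $[k]$ as $(\rho_1\rho_2)^{-1}$ and on $\{k+1, \ldots, 2k\}$ as $\rho_2\rho_1$; since $\rho_1\rho_2$ and $\rho_2\rho_1$ are conjugate, both restrictions share the same cycle type, so the product lies in $\sym_{2k}^k$ iff $\rho_2\rho_1 \in \bar\sym_k$. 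Hence the $k! \times k!$ submatrix of $\mathcal{M}^*_k$ on the indexed rows and columns is a relabelling of $\mathcal{M}_k$, and any rectangle cover of $\mathcal{M}^*_k$ restricts to a cover of this submatrix of no larger size.

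Combining the two parts yields $(2k-1)!/|R| \cdot \Oh(k \log k) \geq C_k$, which after rearranging gives the stated bound. The part I expect to be most delicate is the transitivity argument in the first step: unlike the bipartite case where a two-parameter action conveniently picked out the desired uniform distribution, here the single-parameter $\sym_{2k}$-action still needs to form a single orbit on $Q^*$ of the exact size $(2k-1)!$. This rests on the combinatorial fact that all coloured-matching pairs with Hamiltonian union are isomorphic, together with the explicit determination of the colour-preserving automorphism group of the $2k$-cycle.
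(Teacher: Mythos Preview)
Your argument is correct, and it takes a genuinely different route from the paper's proof.

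The paper does not redo the probabilistic covering argument for $\mathcal{M}^*_k$. Instead, it decomposes any rectangle $P_1 \times P_2$ in $\mathcal{M}^*_k$ according to the bipartition of the resulting Hamiltonian cycle: for every compatible pair $(\rho_1,\rho_2)$ there is a set $C \subseteq [2k]$ of size $k$ (namely one colour class of the alternating $2k$-cycle) with $\rho_1,\rho_2 \in \sym_C$, where $\sym_C$ denotes the involutions swapping $C$ with its complement. Fixing bijections $c:[k]\to C$ and $d:[k]\to [2k]\setminus C$, the map $\rho \mapsto d^{-1}\rho c$ (and $\rho \mapsto c^{-1}\rho d$ on the other side) sends $(P_1\cap\sym_C)\times(P_2\cap\sym_C)$ to a rectangle in $\mathcal{M}_k$, which is then bounded directly by \Cref{lem:mm_size_bound}. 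Summing over the $\binom{2k-1}{k-1}$ choices of $C$ gives the stated bound on $|P_1|\cdot|P_2|$.

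So the paper reduces to the bipartite rectangle-size bound via a structural decomposition, whereas you rebuild the duality between rectangle size and cover number for $\mathcal{M}^*_k$ from scratch and then reduce to the bipartite \emph{cover} bound via a submatrix embedding. Your approach has the pleasant side effect of showing that the rectangle-cover number of $\mathcal{M}^*_k$ is at least $C_k$, a fact the paper never states; the paper's approach, in turn, avoids both the transitivity/stabiliser calculation and the explicit embedding, reusing \Cref{lem:mm_size_bound} as a black box. Your constant also matches: since $(2k-1)! \le k^{2k-1} < k^{2k}$ for $k \ge 2$, taking $\ell = \frac{(2k-1)!}{|R|}\cdot 2k\ln k$ in the union bound suffices, so the $\Oh(k\log k)$ in your sketch can indeed be sharpened to exactly $2k\ln k$.
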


\begin{proof}
    Pick any such rectangle $R$ and let $P_1, P_2$ denote the sets of permutations which are indices of, respectively, rows and columns of $R$. For a set $C \subseteq [2k]$ of size $k$, we define $\sym_C \subseteq \sym_{2k}^2$ as the permutations with all cycles of size $2$ which map all elements of $C$ to $[2k] - C$. Note that $\sym_C = \sym_{[2k] - C}$. Moreover, for every pair $\rho_1 \in P_1, \rho_2 \in P_2$, we have $\rho_1, \rho_2 \in \sym_C$ if we set $C$ to be one of two maximal independent sets of the cycle formed by the union of the edges of $\rho_1, \rho_2$. Therefore
    $$
        P_1 \times P_2 \subseteq \bigcup_{C \cup D = [2k] \atop |C| = |D| = k} \sym_C \times \sym_C,
    $$
    and so
    $$
        |P_1| \cdot |P_2| \leq
        \sum_{C \cup D = [2k] \atop |C| = |D| = k}
            |(P_1 \cap \sym_C) \times (P_2 \cap \sym_C)|.
    $$
    Thus, it suffices to show that $|(P_1 \cap \sym_C) \times (P_2 \cap \sym_C)| \leq k!(k - 1)! \cdot C_k^{-1} \cdot 2k \ln k$ for every $C$.

    Fix $C \cup D = [2k]$, $|C| = |D| = k$, and enumerate $C := \{ c(1), \dots, c(k) \}$ and $D := \{ d(1), \dots, d(k) \}$ (we treat $c$ and $d$ as bijective functions from $[k]$ to resp. $C$ and $D$).
    Put $P_1^* = \{d^{-1} \rho_1 c : \rho_1 \in P_1 \cap \sym_C\}$
    and $P_2^* = \{c^{-1} \rho_2 d : \rho_2 \in P_2 \cap \sym_C \}$. Note that both sets are well defined and $P_1^*, P_2^* \subseteq \sym_k$.
    Since $c, d$ are bijective, we have $|P_1^*| = |P_1 \cap \sym_C|$ and $|P_2^*| = |P_2 \cap \sym_C|$.
    
    Finally, for every $\rho_1^* \in P^*_1, \rho_2^* \in P^*_2$, we have $\rho_2^* \rho_1^* = c^{-1} (\rho_2 \rho_1) c \in \bar\sym_k$. Thus, $P_1^*, P_2^*$ form a~rectangle in $\mathcal{M}_k$, hence by \Cref{lem:mm_size_bound},
    $$
        |(P_1 \cap \sym_C) \times (P_2 \cap \sym_C)| =
        |P_1^* \times P_2^*| \leq k!(k - 1)! \cdot C_k^{-1} \cdot 2k \ln k.
    $$
    Summing over all $C$, we get $|P_1| \cdot |P_2| \leq \binom{2k - 1}{k - 1} \cdot k!(k - 1)! \cdot C_k^{-1} \cdot k \ln k = (2k - 1)! \cdot C_k^{-1} \cdot k \ln k$, which finishes the proof of the lemma.
\end{proof}
\section{Traveling salesperson problem}
\label{sec:tsp}

\subsection{Directed graphs}

Let $G_{n, k}$ denote a directed graph with
$
V(G_{n, k}) = \{ v_{c, r} \colon c \in [n], r \in [k] \}
$
and
$$
E(G_{n, k}) = \{ (v_{c, r_1}, v_{(c \mod n) + 1, r_2}) \mid c \in [n], r_1, r_2 \in [k] \}.
$$

It is easy to see that the pathwidth of $G_{n, k}$ is at most $3k$.
Combining the following lemma with \Cref{lem:raz_spieker} immediately gives \Cref{thm:main_dtsp}. The rest of this section will be dedicated to proving it.

\begin{lemma}\label{lem:main_dtsp}
    For every $k \geq 1$ and $n \geq 3k + 3$, any tropical circuit calculating $DTSP_{G_{n, k}}$ is of size at least $C_k / (2k \ln k)$.
\end{lemma}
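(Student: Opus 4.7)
The plan is to reduce the lower bound on tropical circuit size to \Cref{lem:mm_size_bound}, which bounds rectangle sizes in the matching compatibility matrix $\mathcal{M}_k$, with \Cref{lem:tc_decomposition} as the bridge. First I would identify the Hamiltonian cycles of $G_{n,k}$ with tuples $(\pi_1, \dots, \pi_n) \in \sym_k^n$ satisfying $\pi_n \pi_{n-1} \cdots \pi_1 \in \bar\sym_k$, where $\pi_c$ is the permutation representing the matching from column $c$ to column $c+1$ (cyclically). Since for any fixed prefix $\pi_1, \dots, \pi_{n-1}$ the number of valid $\pi_n$ is $|\bar\sym_k| = (k-1)!$, the total number of Hamiltonian cycles equals $(k!)^{n-1}(k-1)!$.

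Next I would pick two transitions $a$ and $b$ so that both arcs between them (in cyclic order) contain at least $k+1$ columns; the hypothesis $n \geq 3k+3$ gives comfortable slack. Let $X = X_a \cup X_b$ consist of the $2k^2$ variables of these two transitions, and apply \Cref{lem:tc_decomposition} to write $DTSP_{G_{n,k}} \simeq \sum_{i \in [\tau]} g_i h_i$ with $|\sup(g_i) \cap X|, |\sup(h_i) \cap X| \leq \tfrac{4k^2}{3}$. Crucially, since $DTSP_{G_{n,k}}$ is multilinear, each $g_i h_i$ must be multilinear too, which forces $\sup(g_i) \cap \sup(h_i) = \emptyset$: the edge-variables used by $g_i$ and those used by $h_i$ are disjoint. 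This will turn out to be the key structural consequence of the decomposition.

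The core of the argument is to assign to each term $i$ a rectangle of $\mathcal{M}_k$ whose size bounds the number of Hamiltonian cycles that $g_i h_i$ produces. Invoking the rectangle property (every pair $(g, h) \in g_i \times h_i$ yields a valid Hamiltonian cycle monomial) together with a cardinality-exchange argument comparing $g_1 h_1$ with $g_2 h_1$, I would first show that for every transition $c$ the size $t_c^i = |g \cap X_c|$ is constant across $g \in g_i$, and moreover the bipartitions $(A_c^L, A_c^R)$ of the vertices covered by $g$'s share are fixed within term $i$. Combining the balance condition on $X$ with a pigeonhole argument over which of $X_a, X_b$ each side has small support on, I would reduce to the case in which one side owns $X_a$ entirely ($t_a^i \in \{0,k\}$) and the other owns $X_b$ entirely, handling the four analogous sub-cases. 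The disjoint support property then extends this to a clean assignment of the two intermediate arcs of transitions to either $g_i$ or $h_i$. Under this assignment, the product of $g$'s transitions in cyclic order defines a permutation $\rho_1 \in \sym_k$ depending only on $g$, and similarly $\rho_2 \in \sym_k$ depends only on $h$; by conjugation-invariance of cycle types, the Hamiltonian cycle condition $\pi_n \cdots \pi_1 \in \bar\sym_k$ reads $\rho_2 \rho_1 \in \bar\sym_k$, so the set of realized pairs $(\rho_1, \rho_2)$ is a rectangle in $\mathcal{M}_k$.

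Finally \Cref{lem:mm_size_bound} bounds the size of each such rectangle by $k!(k-1)! \cdot C_k^{-1} \cdot 2k \ln k$, and each $1$-entry $(\rho_1, \rho_2)$ of $\mathcal{M}_k$ is realized by $(k!)^{n-2}$ Hamiltonian cycles (free choices of individual $\pi_c$'s in each arc compatible with the required arc products). Since all Hamiltonian cycles must be covered across the $\tau$ terms, $\tau \cdot k!(k-1)! \cdot C_k^{-1} \cdot 2k \ln k \cdot (k!)^{n-2} \geq (k!)^{n-1}(k-1)!$, yielding $\tau \geq C_k / (2k \ln k)$. The hard part will be making the clean-ownership step rigorous: the balance lemma only controls supports on $X$ and not on the intermediate arcs, so the disjoint support property together with the rigidity of the bipartitions at each transition must be used to rule out mixed ownerships on $X_a, X_b$ and to propagate a clean partition into the intermediate arcs, most likely via a case analysis on the patterns $(t_a^i, t_b^i)$.
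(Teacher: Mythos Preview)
Your approach has a genuine gap at precisely the place you flag as ``the hard part''. By fixing two transitions $a,b$ \emph{a~priori} and applying \Cref{lem:tc_decomposition} with $X = X_a \cup X_b$, the balance condition you obtain is only $|\sup(g_i)\cap X|\le \tfrac{4k^2}{3}$, and this does \emph{not} force clean ownership of $X_a$ or $X_b$. For instance $t_a^i = t_b^i = k/2$ is perfectly compatible with balance on both sides. In that mixed case the permutation $\pi_a$ is determined jointly by a partial matching coming from $g$ (on the fixed bipartition $A_a^L\to A_a^R$) and a complementary partial matching coming from $h$, so the factorization $\rho_2(h)\,\rho_1(g)\in\bar\sym_k$ you need for \Cref{lem:mm_size_bound} does not materialize. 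The crude count ``$t!(k-t)!\le (k-1)!$ matchings at a mixed transition'' only buys you a factor $1/k$ per mixed transition, hence at most $1/k^2$ from the two transitions you control, which is far too weak against $C_k = 2^{\Omega(k\log\log k)}$. Your second claim, that disjoint supports propagate clean ownership from $\{a,b\}$ to the intermediate arcs, is also unfounded: disjointness of $\sup(g_i)$ and $\sup(h_i)$ says nothing about which side owns a given intermediate transition, and mixed intermediate transitions are entirely possible.

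The paper sidesteps both issues by making different choices. First, it takes $\bar X$ to be \emph{one} designated edge per transition (so $|\bar X|=n$, spread across the whole graph) rather than concentrating on two columns. Second, it splits into two cases per rectangle: if more than $k$ transitions are non-monochromatic, each contributes at most $(k-1)!$ matchings instead of $k!$, and the cumulative factor $k^{-(k+1)}\le C_k^{-1}$ already gives the bound; otherwise at least $n-k>\tfrac{2n}{3}$ transitions are monochromatic, and balance on the spread-out $\bar X$ yields monochromatic transitions $i_g$ owned by $g$ and $i_h$ owned by $h$ (found \emph{per rectangle}, not fixed in advance---this is exactly where the hypothesis $n\ge 3k+3$ is used). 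Third, the intermediate transitions are handled not by clean ownership but by a \emph{type stratification}: one fixes the projection of each monomial of $g$ and of $h$ onto $E(G)\setminus(E_{i_g}\cup E_{i_h})$, bounds the number of type pairs by $(k!)^{n-2}$, and then within each fixed type pair only $\pi_{i_g}$ (from $g$) and $\pi_{i_h}$ (from $h$) vary, which is the clean product structure needed to invoke \Cref{lem:mm_size_bound}.
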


\begin{proof}
    Fix $n, k$ and put $G := G_{n, k}$. Let $V_i = \{ v_{i, r} : r \in [k] \}$. Let $G^i = G[V_i \cup V_{(i \mod n) + 1}]$ and let $E_i = E(G^i)$, i.e., $E_i$ contains all edges whose tail belongs to $V_i$.
    For every perfect matching $M$ of $G^i$, we will identify it with a permutation $\rho \in \sym_k$ defined as
    $$
    \rho(s) = t
    \quad \text{iff} \quad
    (v_{i, s}, v_{(i \mod n) + 1, t}) \in M.
    $$
    For every Hamiltonian cycle $H$ of $G$, the set $E(H) \cap E_i$ is a perfect matching in $G^i$. We will say that the sequence of permutations $\rho_1, \dots, \rho_n \in \sym_k$ represents $H$ if $E(H) \cap E_i = \rho_i$ for all $i \in [n]$. Note that such representing set is unique, and moreover, satisfies
    $$
    \rho_i \rho_{i - 1} \dots
    \rho_2 \rho_1
    \rho_n \rho_{n - 1} \dots
    \rho_{i + 2} \rho_{i + 1} \in \bar\sym_k
    $$
    for any $i \in [k]$.
    Conversely, every sequence of permutations satisfying the above represents a unique Hamiltonian cycle of $G$.
    In particular, there are $(k - 1)! \cdot \left(k!\right)^{n - 1}$ such sequences and, hence, Hamiltonian cycles in $G$.
    Additionally, for any set of indices $i_1, \dots, i_{\ell} \in [n]$ of size at most $n - 1$, if we fix $\rho_{i_1}, \dots, \rho_{i_{\ell}}$, then there are exactly $(k - 1)! \cdot (k!)^{n - 1 - \ell}$ ways to fix the rest of the permutations for the above inclusion to hold.
    \newline

    Now, look at the polynomial $DTSP_G$. Let $\bar{E} = \{ (v_{i, 1}, v_{(i \mod n) + 1, 1}) \mid i \in [n] \}$ and let $\bar{X} = \{ x_{s,t} \mid (s, t) \in \bar{E} \}$. That is, the set $\bar{X}$ contains exactly one variable corresponding to an edge in $E_i$ for each $i \in [n]$. Similarly, let $X_i$ denote the set $\{ x_{s,t} \mid  (s, t) \in E_i \}$.

    We will say that a pair of polynomials $g, h$ forms a rectangle $g \cdot h$ in $DTSP_G$ if $g \cdot h \subseteq DTSP_G$, i.e., for every pair of monomials $g' \in g, h' \in h$, the variables of $\sup(g') \cup \sup(h')$ correspond to edges forming a Hamiltonian cycle of $G$.
    Additionally, we will say that such rectangle is balanced if $|\sup(g) \cap \bar{X}|, |\sup(h) \cap \bar{X}| \leq \frac{2}{3} |\bar{X}|$. We would like to prove the following claim.
    \begin{claim}\label{lem:dtsp_rect_bound}
        Let $g \cdot h$ be any balanced rectangle in $DTSP_G$. Then
        $$
        |g \cdot h| \leq (k - 1)! \cdot (k!)^{n - 1} \cdot C_k^{-1} \cdot (2k \ln k).
        $$
    \end{claim}

    First, we finish the proof of the lemma given the claim. Let $\tau$ denote the size of the smallest tropical circuit calculating $DTSP_G$. \Cref{lem:tc_decomposition} says that $DTSP_G$ can be covered by a union of $\tau$ balanced rectangles. Thus,
    $$
        (k - 1)! \cdot (k!)^{n - 1} = |DTSP_G| \leq \tau \cdot (k - 1)! \cdot (k!)^{n - 1} \cdot C_k^{-1} \cdot (2k \ln k),
    $$
    hence
    $
    \tau \geq C_k / (2k \ln k).
    $
    The rest of the section is dedicated to proving the claim.
    \newline

    Fix any balanced rectangle $g \cdot h$. Obviously, supports of $g$ and $h$ are disjoint. We will say that $E_i$ is monochromatic w.r.t. $g \cdot h$ if either $X_i \cap \sup(g)$ or $X_i \cap \sup(h)$ is empty. Let $i_1, \dots, i_\ell$ denote the indices of sets $E_i$ that are not monochromatic. We consider two cases depending on whether $\ell > k$ or not.

    Assume $\ell > k$ and fix a non-monochromatic index $i_j$. Define
    $$
    V^g_{tail} = \{ u \in V_{i_j} \mid \exists_{w \in V_{(i_j \mod n) + 1}} x_{u, w} \in \sup(g) \},
    $$
    $$
    V^g_{head} = \{ w \in V_{(i_j \mod n) + 1} \mid \exists_{u \in V_{i_j}} x_{u, w} \in \sup(g) \},
    $$
    i.e., $V^g_{tail}$ (resp. $V^g_{head}$) denote tails (resp. heads) of all edges in $E_{i_j}$ whose related variables belong to the support of $g$. We define $V^h_{tail}, V^h_{head}$ analogously. It is easy to see that $V^g_{tail} \cap V^h_{tail} = V^g_{head} \cap V^h_{head} = \emptyset$.

    A perfect matching in $G^{i_j}$ induced by any monomial of $g \cdot h$ must match $V^g_{head}$ with $V^g_{tail}$ and $V^h_{head}$ with $V^h_{tail}$, and the number of possible ways to do that is
    $
    |V^g_{head}|! \cdot |V^h_{head}|!
    $, which is at most $(k - 1)!$ as both sets are nonempty and disjoint. Therefore, the number of all cycles corresponding to monomials of $g \cdot h$ is at most
    $$
    ((k - 1)!)^\ell \cdot (k!)^{n - \ell} = \frac{1}{k^\ell} \cdot (k!)^n \leq
    \frac{1}{k^{k + 1}} \cdot (k!)^n = (k - 1)! \cdot (k!)^{n - 1} \cdot \frac{1}{k^k}
    $$
    and $\frac{1}{k^k} \leq C_k^{-1} \leq C_k^{-1} \cdot (2k \ln k)$.
    \newline
    
    Now, assume $\ell \leq k$. Thus, at least $n - k \geq \frac{2}{3}n + 1$ of $E_i$ are monochromatic. Since $g \cdot h$ is balanced, we have two indices $i_g, i_h$ such that $E_{i_g}, E_{i_h}$ are both monochromatic, and both intersections $X_{i_g} \cap \sup(h)$ and $X_{i_h} \cap \sup(g)$ are empty. In particular, the support of every monomial in $g$ corresponds to edges whose intersection with $E_{i_g}$ induces a perfect matching in $G^{i_g}$. The same holds for $h$ and $G^{i_h}$. W.l.o.g. we can assume that $i_g < i_h$. Let $\bar{E} = E(G) - (E_{i_g} \cup E_{i_h})$.

    For a monomial $g' \in g$ (and analogously for $h' \in h$), we define its type as the set
    $$
    \lambda_{g'} := \left\{
        e \in \bar{E} \colon
        x_e \in \sup(g')
    \right\}.
    $$
    Fix an arbitrary pair of types $\lambda_{g^*}, \lambda_{h^*}$ and let $g^*$ (resp. $h^*$) denote the set of all monomials of $g$ (resp. $h$) with that type. The number of such possible pairs is bounded by the number of different projections of a Hamiltonian cycle onto $\bar{E}$, hence is at most $(k!)^{n - 2}$. By the definition, every cycle $H$ corresponding to some monomial of $g^* \cdot h^*$ has the same intersection with $\bar{E}$.

    Let $\rho_i$ for $i \in [n] - \{i_g, i_h\}$ represent the set $(\lambda_{g^*} \cup \lambda_{h^*}) \cap E_i$ which is a perfect matching in $G^i$. Let $\bar\rho_l = \rho_{i_h - 1}\rho_{i_h - 2}\dots\rho_{i_g + 1}$ and $\bar\rho_r = \rho_{i_g-1}\rho_{i_g-2}\dots\rho_1\rho_n\dots\rho_{i_h+1}$.
    Let $P_g$ (resp. $P_h$) denote the set of perfect matchings of $G^{i_g}$ (resp. $G^{i_h})$ induced by the monomials in $g^*$ (resp. $h^*$). Every permutation in the product
    $(\bar\rho_l P_g \bar\rho_r) \cdot P_h$ belongs to $\bar\sym_k$, hence by \Cref{lem:mm_size_bound},
    $$
    k(k - 1)! \cdot C_k^{-1} \cdot (2k \ln k) \geq
    |\bar\rho_l P_g \bar\rho_r| \cdot |P_h| =
    |P_g| \cdot |P_h| = |g^*| \cdot |h^*| \geq |g^* \cdot h^*|.
    $$
    Therefore $|g \cdot h| \leq (k!)^{n - 1} \cdot (k - 1)! \cdot C_k^{-1} \cdot (2k \ln k)$.
    
\end{proof}

\subsection{Undirected graphs}

Let $\bar{G}_{n, k}$ denote an undirected graph with
$
V(\bar{G}_{n, k}) = \{ v_{c,r,i} \colon c \in [n], r \in [k], i \in \{-1, 0, 1\} \}
$ and
\begin{align*}
E(\bar{G}_{n, k}) = \{
    (v_{c, r_1, 1}, v_{(c \mod n) + 1, r_2, -1}) &\mid c \in [n], r_1, r_2 \in [k]
\} \ \cup \\
 \{ (v_{c, r, -1}, v_{c, r, 0}) &\mid c \in [n], r \in [k] \} \ \cup \\
 \{ (v_{c, r, 0}, v_{c, r, 1}) &\mid c \in [n], r \in [k] \}.
\end{align*}
The graph $\bar{G}_{n, k}$ is obtained by performing a textbook reduction on $G_{n, k}$ from directed to undirected version of TSP.

\begin{figure}[h]
    \centering
    \label{fig:reduction}
    \begin{subfigure}{0.45\textwidth}
        \centering        \begin{tikzpicture}
		\node [style=node] (0) at (0, 1) [label=$v_{i,1}$] {};
		\node [style=node] (1) at (0, -1) [label=$v_{i,2}$] {};
		\node [style=none] (2) at (-1.5, 1) {};
		\node [style=none] (3) at (-1.5, 0.25) {};
		\node [style=none] (4) at (-1.5, -0.25) {};
		\node [style=none] (5) at (-1.5, -1) {};
		\node [style=none] (6) at (1.5, -1) {};
		\node [style=none] (7) at (1.5, -0.25) {};
		\node [style=none] (8) at (1.5, 0.25) {};
		\node [style=none] (9) at (1.5, 1) {};
        
		\draw [style=directed] (2.center) to (0);
		\draw [style=directed] (3.center) to (1);
		\draw [style=directed] (4.center) to (0);
		\draw [style=directed] (5.center) to (1);
		\draw [style=directed] (1) to (6.center);
		\draw [style=directed] (1) to (8.center);
		\draw [style=directed] (0) to (7.center);
		\draw [style=directed] (0) to (9.center);
\end{tikzpicture}
        \caption{$i$-th column of $G_{n,2}$}
    \end{subfigure}
    \hfill
    \begin{subfigure}{0.45\textwidth}
        \centering        \begin{tikzpicture}
		\node [style=node] (0) at (0, 1) [label={$v_{i,1,0}$}] {};
		\node [style=node] (1) at (0, -1) [label={$v_{i,2,0}$}] {};
		\node [style=node] (2) at (-1, -1) [label={$v_{i,2,-1}$}] {};
		\node [style=node] (3) at (1, -1) [label={$v_{i,2,1}$}] {};
		\node [style=node] (4) at (1, 1) [label={$v_{i,1,1}$}] {};
		\node [style=node] (5) at (-1, 1) [label={$v_{i,1,-1}$}] {};
		\node [style=none] (6) at (-2.5, 1) {};
		\node [style=none] (7) at (-2.5, 0.25) {};
		\node [style=none] (8) at (-2.5, -0.25) {};
		\node [style=none] (9) at (-2.5, -1) {};
		\node [style=none] (10) at (2.5, -1) {};
		\node [style=none] (11) at (2.5, -0.25) {};
		\node [style=none] (12) at (2.5, 0.25) {};
		\node [style=none] (13) at (2.5, 1) {};
        
		\draw [style=undirected] (6.center) to (5);
		\draw [style=undirected] (7.center) to (2);
		\draw [style=undirected] (8.center) to (5);
		\draw [style=undirected] (9.center) to (2);
		\draw [style=undirected] (5) to (0);
		\draw [style=undirected] (0) to (4);
		\draw [style=undirected] (2) to (1);
		\draw [style=undirected] (1) to (3);
		\draw [style=undirected] (4) to (13.center);
		\draw [style=undirected] (3) to (12.center);
		\draw [style=undirected] (4) to (11.center);
		\draw [style=undirected] (3) to (10.center);
\end{tikzpicture}
        \caption{in $\bar{G}_{n,2}$}
    \end{subfigure}
    \caption{The reduction from $G_{n,k}$ to $\bar{G}_{n,k}$}
\end{figure}
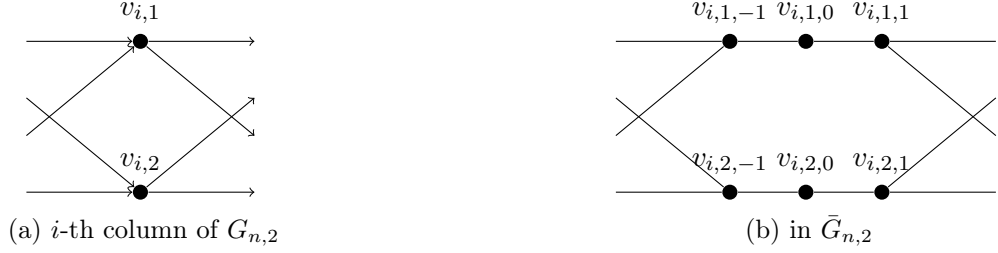

Again, the pathwidth of $\bar{G}_{n, k}$ is at most $3k$.
Combining the following lemma with \Cref{lem:raz_spieker} immediately gives \Cref{thm:main_tsp}.

\begin{lemma}\label{lem:tc_tsp}
    For any $k \geq 1$ and $n > 3k$, any tropical circuit calculating $TSP_{\bar{G}_{n, k}}$ uses at least
    $C_k / (2k \ln k)$
    gates.
\end{lemma}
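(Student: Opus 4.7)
The plan is to mirror the proof of \Cref{lem:main_dtsp} for the directed case, with adaptations to handle the forced internal edges of $\bar{G}_{n, k}$. The key structural observation is that each middle vertex $v_{c, r, 0}$ has degree exactly two in $\bar{G}_{n, k}$, so both incident edges $(v_{c, r, -1}, v_{c, r, 0})$ and $(v_{c, r, 0}, v_{c, r, 1})$ must appear in every Hamiltonian cycle of $\bar{G}_{n, k}$. Contracting each forced path $v_{c, r, -1} - v_{c, r, 0} - v_{c, r, 1}$ back to a single vertex yields a natural bijection between Hamiltonian cycles of $\bar{G}_{n, k}$ and those of $G_{n, k}$. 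Following the setup of \Cref{lem:main_dtsp}, let $G^i$ be the bipartite subgraph between $\{v_{i, r, 1} : r \in [k]\}$ and $\{v_{(i \mod n) + 1, r, -1} : r \in [k]\}$, let $E_i = E(G^i)$, and identify each perfect matching of $G^i$ with a permutation $\rho_i \in \sym_k$. Then the Hamiltonian cycles of $\bar{G}_{n, k}$ are in bijection with sequences $\rho_1, \dots, \rho_n$ whose cyclic products $\rho_i \rho_{i-1} \cdots \rho_1 \rho_n \cdots \rho_{i+1}$ lie in $\bar\sym_k$, yielding $|TSP_{\bar{G}_{n, k}}| = (k - 1)! \cdot (k!)^{n - 1}$.

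I would then pick $\bar{E}$ containing one column-crossing edge per $E_i$, let $\bar{X}$ be the corresponding set of variables, and apply \Cref{lem:tc_decomposition} to cover $TSP_{\bar{G}_{n, k}}$ with at most $\tau$ balanced rectangles $g \cdot h$ satisfying $|\sup(g) \cap \bar{X}|, |\sup(h) \cap \bar{X}| \leq \tfrac{2}{3}|\bar{X}|$. The central step is to establish the bound $|g \cdot h| \leq (k - 1)! \cdot (k!)^{n - 1} \cdot C_k^{-1} \cdot 2k \ln k$ for each such rectangle, which together with \Cref{lem:raz_spieker} gives $\tau \geq C_k / (2k \ln k)$ after summing over the $\tau$ rectangles. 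The argument for this bound is the same case analysis as in the directed case: either more than $k$ of the sets $E_i$ are non-monochromatic with respect to $g \cdot h$, in which case a counting argument using the bipartition of the $k$ row indices between $g$-edges and $h$-edges in each such non-monochromatic column suffices; or the balanced condition provides monochromatic indices $i_g < i_h$ with column-crossing variables assigned to $g$ and $h$ respectively, after which a type-based counting on the monomials restricted to edges outside $E_{i_g} \cup E_{i_h}$ reduces the bound to an application of \Cref{lem:mm_size_bound} on the matching compatibility matrix $\mathcal{M}_k$.

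The one verification needed is that the forced internal edges and the undirected identification $x_{u, v} = x_{v, u}$ do not disrupt the counting. Since every monomial of $TSP_{\bar{G}_{n, k}}$ contains every internal edge variable and $g \cdot h$ is multilinear, each internal variable must lie consistently in exactly one of $\sup(g), \sup(h)$, contributing a fixed factor to every monomial of $g \cdot h$ without affecting the count of possible column-crossing matchings. The identification $x_{u, v} = x_{v, u}$ is absorbed from the outset by treating each undirected column-crossing edge as a single variable, leaving the permutation representation of perfect matchings of $G^i$ intact. Thus the main obstacle is really this bookkeeping verification rather than any new combinatorial content — once it is checked, the argument of \Cref{lem:main_dtsp} transfers essentially unchanged.
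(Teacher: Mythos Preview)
Your approach is correct, but it takes a different and more laborious route than the paper's proof. The paper does not redo the rectangle-counting argument for $\bar{G}_{n,k}$ at all; instead, it gives a short \emph{circuit-level reduction}. Given a tropical circuit $\bar\Gamma$ of size $\bar\tau$ computing $TSP_{\bar{G}_{n,k}}$, the paper replaces every input node labeled by an internal-edge variable with the constant $0$, and relabels every input node for a column-crossing edge $\{v_{c,r_1,1}, v_{(c\bmod n)+1,r_2,-1}\}$ with the directed variable $x_{(v_{c,r_1}, v_{(c\bmod n)+1,r_2})}$. Because the degree-two vertices force the internal edges and the $-1/0/+1$ gadget makes the correspondence between undirected Hamiltonian cycles of $\bar{G}_{n,k}$ and directed Hamiltonian cycles of $G_{n,k}$ a bijection compatible with this variable substitution, the resulting circuit computes $DTSP_{G_{n,k}}$ and has size at most $\bar\tau$. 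Then \Cref{lem:main_dtsp} immediately gives $\bar\tau \geq C_k/(2k\ln k)$.

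Your plan, by contrast, re-runs the entire machinery of \Cref{lem:main_dtsp} (the choice of $\bar X$, the balanced-rectangle decomposition, the monochromatic/non-monochromatic case split, and the reduction to \Cref{lem:mm_size_bound}) directly on $TSP_{\bar{G}_{n,k}}$, with the extra bookkeeping that every forced internal-edge variable is a common factor of all monomials of exactly one side of each rectangle. This works: the disjointness of $\sup(g)$ and $\sup(h)$ forces each forced variable into one side, and the rest of the argument transfers unchanged since the column-crossing edge sets $E_i$ have the same bipartite structure as in $G_{n,k}$. The trade-off is clear: the paper's reduction is a few lines and reuses \Cref{lem:main_dtsp} as a black box, while your approach is self-contained but duplicates the entire proof of \Cref{lem:main_dtsp} with additional verifications.
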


\begin{proof}
    Let $\bar\Gamma$ be an arbitrary tropical circuit of size $\bar\tau$ computing $TSP_{\bar{G}_{n, k}}$. We will show that there exists a tropical circuit $\Gamma$ of size $\tau \leq \bar\tau$ computing $DTSP_{G_{n, k}}$. In particular, together with \Cref{lem:main_dtsp} this shows that $\bar\tau \geq C_k/(2k \ln k)$.

    Let $\Gamma$ be the tropical circuit obtained by taking $\bar\Gamma$ and performing the following substitutions:
    \begin{itemize}
        \item for every input node labeled with a variable $x_{u_1, u_2}$, where $u_1, u_2 \in V(\bar{G}_{n, k})$, $u_1 = v_{c, r, -1}, u_2 = v_{c, r, 0}$, we replace it with a constant $0$,
        \item for every input node labeled with a variable $x_{u_1, u_2}$, where $u_1, u_2 \in V(\bar{G}_{n, k})$, $u_1 = v_{c, r, 0}, u_2 = v_{c, r, 1}$, we replace it with a constant $0$,
        \item for every input node labeled with a variable $x_{u_1, u_2}$, where $u_1, u_2 \in V(\bar{G}_{n, k})$, $u_1 = v_{c, r_1, 1}, u_2 = v_{(c \mod n) + 1, r_2, -1}$, we replace it with a variable $x_{u_1', u_2'}$ where $u_1', u_2' \in V(G_{n, k})$, $u_1' = v_{c, r_1}, u_2' = v_{(c \mod n) + 1, r_2}$.
    \end{itemize}
    Naturally, the size of $\Gamma$ is at most $\bar\tau$. Let $f$ be some polynomial calculated by $\Gamma$. The reduction works in such a way that there is a bijection between directed Hamiltonian cycles in $G_{n, k}$ and undirected ones in $\bar{G}_{n, k}$, and this bijection directly follows the substitution of edges described in the definition of $\Gamma$. In particular, it can be easily seen that $f \simeq DTSP_{G_{n,k}}$.

\end{proof}
\section{Spanning tree}
\label{sec:st}

Let $H_{n, k}$ denote a directed graph with $V(H_{n,k}) = \{v_{c,r} : c \in [n], r \in [2k]\}$ and
\begin{align*}
    E(H_{n, k}) = \{(v_{c_1, r_1}, v_{c_2, r_2}) &\mid
    c_1, c_2 \in [n], r_1, r_2 \in [2k], |c_1 - c_2| = 1 \} \ \cup \\
    \{(v_{c, r_1}, v_{c, r_2}) &\mid
    c \in \{1, n\}, r_1, r_2 \in [2k], r_1 \neq r_2 \}.
\end{align*}

\begin{figure}[h]
    \centering
    \label{fig:dst}
    \begin{tikzpicture}
		\node [style=node] (0) at (-3.5, 1.5) [label={$v_{1,1}$}] {};
		\node [style=node] (1) at (-3.5, 0.5) [label={$v_{1,2}$}] {};
		\node [style=node] (2) at (-3.5, -0.5) [label={$v_{1,3}$}] {};
		\node [style=node] (3) at (-3.5, -1.5) [label={$v_{1,4}$}] {};
		\node [style=node] (4) at (-2, 1.5) {};
		\node [style=node] (5) at (-2, 0.5) {};
		\node [style=node] (6) at (-2, -0.5) {};
		\node [style=node] (7) at (-2, -1.5) {};
		\node [style=none] (16) at (-0.5, 1.5) {};
		\node [style=none] (17) at (-0.5, 0.5) {};
		\node [style=none] (18) at (-0.5, -0.5) {};
		\node [style=none] (19) at (-0.5, -1.5) {};
		\node [style=none] (20) at (-0.5, -1.25) {};
		\node [style=none] (21) at (-0.5, -0.75) {};
		\node [style=none] (22) at (-0.5, -0.25) {};
		\node [style=none] (23) at (-0.5, 0.25) {};
		\node [style=none] (24) at (-0.5, 0.75) {};
		\node [style=none] (25) at (-0.5, 1.25) {};
		\node [style=node] (26) at (3.5, 1.5) [label={$v_{n,1}$}] {};
		\node [style=node] (27) at (3.5, 0.5) [label={$v_{n,2}$}] {};
		\node [style=node] (28) at (3.5, -0.5) [label={$v_{n,3}$}] {};
		\node [style=node] (29) at (3.5, -1.5) [label={$v_{n,4}$}] {};
		\node [style=node] (30) at (2, 1.5) {};
		\node [style=node] (31) at (2, 0.5) {};
		\node [style=node] (32) at (2, -0.5) {};
		\node [style=node] (33) at (2, -1.5) {};
		\node [style=none] (34) at (0.5, 1.5) {};
		\node [style=none] (35) at (0.5, 0.5) {};
		\node [style=none] (36) at (0.5, -0.5) {};
		\node [style=none] (37) at (0.5, -1.5) {};
		\node [style=none] (38) at (0.5, -1.25) {};
		\node [style=none] (39) at (0.5, -0.75) {};
		\node [style=none] (40) at (0.5, -0.25) {};
		\node [style=none] (41) at (0.5, 0.25) {};
		\node [style=none] (42) at (0.5, 0.75) {};
		\node [style=none] (43) at (0.5, 1.25) {};
		\node [style=none] (44) at (0, 0) {$\dots$};
        
		\draw [style=bi-directional] (0) to (4);
		\draw [style=bi-directional] (0) to (5);
		\draw [style=bi-directional] (0) to (6);
		\draw [style=bi-directional] (0) to (7);
		\draw [style=bi-directional] (1) to (4);
		\draw [style=bi-directional] (1) to (5);
		\draw [style=bi-directional] (1) to (6);
		\draw [style=bi-directional] (1) to (7);
		\draw [style=bi-directional] (2) to (4);
		\draw [style=bi-directional] (2) to (5);
		\draw [style=bi-directional] (2) to (6);
		\draw [style=bi-directional] (2) to (7);
		\draw [style=bi-directional] (3) to (4);
		\draw [style=bi-directional] (3) to (5);
		\draw [style=bi-directional] (3) to (6);
		\draw [style=bi-directional] (3) to (7);
		\draw [style=bi-directional, bend right=90, looseness=1.25] (0) to (3);
		\draw [style=bi-directional, bend right=75, looseness=1.25] (1) to (2);
		\draw [style=bi-directional, bend right=75, looseness=1.25] (0) to (1);
		\draw [style=bi-directional, bend right=75, looseness=1.25] (2) to (3);
		\draw [style=bi-directional, bend right=75, looseness=1.25] (0) to (2);
		\draw [style=bi-directional, bend right=75, looseness=1.25] (1) to (3);
		\draw [style=directed] (16.center) to (4);
		\draw [style=directed] (17.center) to (5);
		\draw [style=directed] (18.center) to (6);
		\draw [style=directed] (19.center) to (7);
		\draw [style=directed] (20.center) to (6);
		\draw [style=directed] (20.center) to (5);
		\draw [style=directed] (20.center) to (4);
		\draw [style=directed] (21.center) to (7);
		\draw [style=directed] (22.center) to (5);
		\draw [style=directed] (22.center) to (4);
		\draw [style=directed] (23.center) to (7);
		\draw [style=directed] (23.center) to (6);
		\draw [style=directed] (24.center) to (4);
		\draw [style=directed] (25.center) to (7);
		\draw [style=directed] (25.center) to (6);
		\draw [style=directed] (25.center) to (5);
		\draw [style=bi-directional] (26) to (30);
		\draw [style=bi-directional] (26) to (31);
		\draw [style=bi-directional] (26) to (32);
		\draw [style=bi-directional] (26) to (33);
		\draw [style=bi-directional] (27) to (30);
		\draw [style=bi-directional] (27) to (31);
		\draw [style=bi-directional] (27) to (32);
		\draw [style=bi-directional] (27) to (33);
		\draw [style=bi-directional] (28) to (30);
		\draw [style=bi-directional] (28) to (31);
		\draw [style=bi-directional] (28) to (32);
		\draw [style=bi-directional] (28) to (33);
		\draw [style=bi-directional] (29) to (30);
		\draw [style=bi-directional] (29) to (31);
		\draw [style=bi-directional] (29) to (32);
		\draw [style=bi-directional] (29) to (33);
		\draw [style=bi-directional, bend left=90, looseness=1.25] (26) to (29);
		\draw [style=bi-directional, bend left=75, looseness=1.25] (27) to (28);
		\draw [style=bi-directional, bend left=75, looseness=1.25] (26) to (27);
		\draw [style=bi-directional, bend left=75, looseness=1.25] (28) to (29);
		\draw [style=bi-directional, bend left=75, looseness=1.25] (26) to (28);
		\draw [style=bi-directional, bend left=75, looseness=1.25] (27) to (29);
		\draw [style=directed] (34.center) to (30);
		\draw [style=directed] (35.center) to (31);
		\draw [style=directed] (36.center) to (32);
		\draw [style=directed] (37.center) to (33);
		\draw [style=directed] (38.center) to (32);
		\draw [style=directed] (38.center) to (31);
		\draw [style=directed] (38.center) to (30);
		\draw [style=directed] (39.center) to (33);
		\draw [style=directed] (40.center) to (31);
		\draw [style=directed] (40.center) to (30);
		\draw [style=directed] (41.center) to (33);
		\draw [style=directed] (41.center) to (32);
		\draw [style=directed] (42.center) to (30);
		\draw [style=directed] (43.center) to (33);
		\draw [style=directed] (43.center) to (32);
		\draw [style=directed] (43.center) to (31);
\end{tikzpicture}
    \caption{$H_{n,2}$}
\end{figure}

It is easy to see that the pathwidth of $H_{n, k}$ is at most $4k$. Combining the following lemma with \Cref{lem:raz_spieker} immediately gives \Cref{thm:main_dst}.

\begin{lemma}\label{lem:tc_dst}
    For any $k \geq 1$ and $n > 12k^2 \ln k + 3$, any tropical circuit calculating $DST_{H_{n, k}}$ uses at least
    $C_k / (2k^2 \ln k)$
    gates.
\end{lemma}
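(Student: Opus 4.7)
The proof of \Cref{lem:tc_dst} would follow the same template as \Cref{lem:main_dtsp}. I would identify a family of canonical spanning out-trees of $H_{n,k}$, decompose $DST_{H_{n,k}}$ via \Cref{lem:tc_decomposition} into $\tau$ balanced rectangles, and then bound the number of canonical trees a single balanced rectangle may cover. The loss of a factor $k$ in the bound (yielding $C_k/(2k^2\ln k)$ rather than $C_k/(2k\ln k)$) and the stronger hypothesis $n>12k^2\ln k+3$ both suggest that the clique matching compatibility matrix $\mathcal{M}^*_k$, rather than the bipartite $\mathcal{M}_k$, is the relevant tool.

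Concretely, I would let the canonical trees be those rooted at $v_{1,1}$ that contain the star $\{v_{1,1}\to v_{1,r}:2\le r\le 2k\}$ in column $1$, use $2k$ forward crossings forming a permutation $\pi_c\in\sym_{2k}$ between consecutive columns $c$ and $c+1$ for all $c\leq n-2$, and encode extra rigidity in column $n$ through a perfect matching of $K_{2k}$ on intra-column-$n$ edges (so that exactly $k$ forward crossings between columns $n-1$ and $n$ combine with the intra-column-$n$ matching to form a valid spanning tree). The intra-column-$n$ matching, an element of $\sym_{2k}^2$, is what later brings $\mathcal{M}^*_k$ into the picture. I would set $\bar X=\{x_{(v_{c,1},v_{c+1,1})}:c\in[n-1]\}$, apply \Cref{lem:tc_decomposition}, and for each balanced rectangle $g\cdot h$ declare gap $c$ monochromatic when the set $X_c$ of forward-edge variables for gap $c$ satisfies $X_c\cap\sup(g)=\emptyset$ or $X_c\cap\sup(h)=\emptyset$. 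Letting $\ell$ count the non-monochromatic gaps, at each such gap the admissible permutation $\pi_c$ is restricted from $(2k)!$ to at most $(2k-1)!$ choices, a factor-$2k$ saving per gap.

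When $\ell$ exceeds a threshold of order $k\log k$ (which, after carrying through the arithmetic as in \Cref{lem:main_dtsp}, is what forces the hypothesis $n>12k^2\ln k+3$) the direct count is already small enough. Otherwise, the balancedness of $g\cdot h$ together with pigeonhole yields two monochromatic gaps $i_g,i_h$ of opposite colors. Fixing the \emph{type} of a canonical tree --- the projection to all gaps other than $i_g,i_h$ together with the column-$n$ matching --- leaves $|P_g|\cdot|P_h|$ choices for $(\pi_{i_g},\pi_{i_h})$. The spanning-tree compatibility condition, mediated through the intra-column-$n$ matching, should then imply, after conjugating by fixed permutations extracted from the type, that the pair corresponds to a rectangle of $\mathcal{M}^*_k$, so that \Cref{lem:mm*_size_bound} bounds $|P_g|\cdot|P_h|$ by $(2k-1)!\cdot C_k^{-1}\cdot 2k\ln k$. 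Summing over types and adding the two cases yields a rectangle bound of the form $(\text{total canonical count})\cdot C_k^{-1}\cdot 2k^2\ln k$, whence $\tau\ge C_k/(2k^2\ln k)$. The hardest step is verifying that the matching of $K_{2k}$ in column $n$ does manufacture a genuine $\mathcal{M}^*_k$-rectangle condition on $(\pi_{i_g},\pi_{i_h})$: for DTSP the Hamiltonian-cycle structure supplied the analogous ``product in $\bar\sym_k$'' condition for free, but for spanning trees with strand structure no such constraint exists on the product of permutations, so the column-$n$ matching must be exactly the right piece of extra rigidity to couple the two permutations together after conjugation.
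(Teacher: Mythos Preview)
Your proposal has two genuine gaps, both stemming from the differences between spanning out-trees and Hamiltonian cycles that the DTSP template does not survive.

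\textbf{Case 2 fails with your canonical trees.} With a star at column~$1$, once the type (everything outside gaps $i_g,i_h$) is fixed, \emph{every} pair $(\pi_{i_g},\pi_{i_h})\in\sym_{2k}\times\sym_{2k}$ produces a valid canonical tree: the star already connects all $2k$ strands at the root, so no global connectivity condition couples the two permutations. The intra-column-$n$ matching you add sits downstream of both gaps and is fixed by the type; it cannot retroactively impose a constraint on $(\pi_{i_g},\pi_{i_h})$. Thus $|P_g|\cdot|P_h|$ can be as large as $((2k)!)^2$ and no $\mathcal{M}^*_k$-rectangle appears. The paper's fix is to take as canonical objects \emph{nice Hamiltonian paths}: spanning out-trees that are in fact paths, with intra-column perfect matchings at \emph{both} end columns $V_1$ and $V_n$ and inter-column perfect matchings elsewhere. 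These are exactly the spanning out-trees arising as subgraphs of nice Hamiltonian cycles, and the Hamiltonian-cycle condition \eqref{eq:dst_nice_cycles} on the underlying undirected graph is what couples $\pi_{i_g}$ and $\pi_{i_h}$, via conjugation by the two end-matchings $\rho_0,\rho_n\in\sym_{2k}^2$, into a rectangle of $\mathcal{M}^*_k$.

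\textbf{Case 1 fails as stated.} The DTSP argument that a non-monochromatic gap restricts the admissible matching to at most $(2k-1)!$ choices needs both head- and tail-disjointness of the $\sup(g)$- and $\sup(h)$-edges at that gap. Head-disjointness survives (two incoming edges at a non-root vertex is forbidden in an out-tree), but tail-disjointness does not: a vertex in a spanning out-tree may have several children, so $x_{(u,v_1)}\in\sup(g)$ and $x_{(u,v_2)}\in\sup(h)$ is perfectly compatible with $g\cdot h\subseteq DST_G$. The paper replaces this step with a weaker but correct statement (\Cref{clm:dst_rect_colorful}): at each non-monochromatic $E_i$ one can only guarantee a single \emph{forbidden} edge $e$ with $x_e\notin\sup(g)\cup\sup(h)$. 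A randomized generation of nice cycles then shows that a uniformly random nice cycle avoids all forbidden edges with probability at most $(1-1/(2k))^{\ell'}$ for $\ell'\ge\ell/2$ non-adjacent such gaps. This is only a $(1-1/(2k))$ saving per gap, not $1/(2k)$, and is precisely why the threshold is $\ell>4k^2\ln k$ rather than $\Theta(k\log k)$---which in turn is what forces the hypothesis $n>12k^2\ln k+3$ and the extra factor $k$ in the final bound.
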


The rest of this section is dedicated to proving this lemma. The majority of the proof follows the same line of argumentation used to prove \Cref{lem:main_dtsp}.
For the rest of the section, we fix suitable $n, k$ and put $G := H_{n, k}$. Let $V_i = \{v_{i, r} : r \in [2k]\}$. For $i \in [n - 1]$, let $G^i = G[V_i \cup V_{i + 1}] - (E(G[V_i]) \cup E(G[V_{i+1}]))$ and let $G^0 = G[V_1], G^n = G[V_n]$. Note that $G^i$ is complete bipartite for $1 \leq i \leq n - 1$ and complete for $i \in \{0, n\}$. We denote $E_i = E(G^i)$ for all $i \in \{0, \dots, n\}$ and put $X_i = \{x_e \colon e \in E_i\}$.

For every perfect matching $M$ of $G^i$ for $i \in \{1, \dots, n-1\}$, we will identify it with a permutation $\rho \in \sym_{2k}$ defined as
    $$
        \rho(s) = t \quad \text{iff} \quad (v_{i, s}, v_{i + 1, t}) \in M \ \text{or}\ (v_{i+1, t}, v_{i,s}) \in M.
    $$
    Similarly, for $i \in \{0, n\}$ we identify any perfect matching $M$ of $G^i$ with $\rho \in \sym_{2k}^2$ such that
    $$
        \rho(s) = t \quad \text{iff} \quad (v_{i, s}, v_{i, t}) \in M \ \text{or}\ (v_{i, t}, v_{i, s}) \in M.
    $$

    We will say that a Hamiltonian cycle of $G$ is \emph{nice} if its intersection with each $E_i$ is a perfect matching in $G^i$ (ignoring edge directions). We will say that a Hamiltonian path of $G$ is \emph{nice} if it is a subgraph of a nice Hamiltonian cycle and its first and last vertices belong to $V_1$. Every nice path is a subgraph of exactly one nice cycle, and conversely, every nice cycle contains exactly $k$ nice paths as a subgraph.

    We define the content of a nice Hamiltonian cycle $H$ as the unique sequence of permutations $\rho_0, \dots, \rho_n \in \sym_{2k}$ such that $E(H) \cap E_i = \rho_i$ for all $0 \leq i \leq n$.
    For the sake of clarity, for such a sequence, we define the following notation.
    $$
        \overleftarrow{\rho_i} := \rho_i \rho_{i - 1} \dots \rho_2 \rho_1
    $$
    $$
        \overrightarrow{\rho_i} := \rho_{n - 1} \rho_{n - 2} \dots \rho_{i + 1} \rho_{i}
    $$
    We assume $\overleftarrow{\rho_0} = \overrightarrow{\rho_n} = \textbf{id}_{\sym_{2k}}$. 
    Note that as described in \Cref{sec:mm*},
    \begin{equation}\label{eq:dst_nice_cycles}
        \left(
            \overleftarrow{\rho_{i - 1}} \cdot
            \rho_0 \cdot
            \overleftarrow{\rho_{i - 1}}^{-1}
        \right)
        \cdot
        \left(
            \overrightarrow{\rho_i}^{-1} \cdot
            \rho_n \cdot
            \overrightarrow{\rho_i}
        \right)
        \in \sym_{2k}^k
    \end{equation}
    for any $i \in [n]$.
    Conversely, for every sequence of permutations satisfying the above (for arbitrarily chosen $i$), there are exactly two nice cycles $H$ such that their content is $\rho_0, \dots, \rho_n$. Both cycles differ only by their direction. In particular, if \Cref{eq:dst_nice_cycles} holds for one such $i$, then it holds for every $i \in [n]$.

    \begin{claim}
        There are $((2k)!)^{n - 1} \cdot (2k - 1)!$ sequences $(\rho_j)_{j=0}^n$ satisfying (1).
    \end{claim}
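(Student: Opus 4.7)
The plan is to count the sequences by first freely choosing the ``middle'' permutations $\rho_1, \dots, \rho_{n-1}$ and then counting the pairs $(\rho_0, \rho_n) \in \sym_{2k}^2 \times \sym_{2k}^2$ compatible with the constraint (1). Since (1) is independent of the choice of index $i$, I would fix $i = 1$ for concreteness, so that the constraint reads
\[
    \rho_0 \cdot \bigl(\overrightarrow{\rho_1}^{-1} \cdot \rho_n \cdot \overrightarrow{\rho_1}\bigr) \in \sym_{2k}^k.
\]
There are $((2k)!)^{n-1}$ ways to pick $\rho_1, \dots, \rho_{n-1} \in \sym_{2k}$, and this choice only affects the constraint through the single element $\beta := \overrightarrow{\rho_1} \in \sym_{2k}$.

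Next, after fixing $\beta$, I would change variables by setting $\sigma_0 := \rho_0$ and $\sigma_n := \beta^{-1} \rho_n \beta$. Since $\rho_n \in \sym_{2k}^2$ and conjugation preserves cycle type (\Cref{prop:sym_k}, items 1--2), the map $\rho_n \mapsto \sigma_n$ is a bijection on $\sym_{2k}^2$. Thus the number of pairs $(\rho_0, \rho_n)$ satisfying (1) equals the number of pairs $(\sigma_0, \sigma_n) \in \sym_{2k}^2 \times \sym_{2k}^2$ with $\sigma_0 \sigma_n \in \sym_{2k}^k$, which is independent of $\beta$ (and hence independent of the choice of middle permutations).

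Finally, I would count these pairs using \Cref{prop:sym_k}: part~1 gives $|\sym_{2k}^2| = (2k-1)!!$ choices for $\sigma_n$, and part~4 says that for each such $\sigma_n$ there are exactly $(2k-2)!!$ elements $\sigma_0 \in \sym_{2k}^2$ with $\sigma_0 \sigma_n \in \sym_{2k}^k$. The product telescopes as $(2k-1)!! \cdot (2k-2)!! = (2k-1)!$, so multiplying by the $((2k)!)^{n-1}$ choices for the middle permutations yields the desired count $((2k)!)^{n-1} \cdot (2k-1)!$.

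The only step that requires care is verifying that the constraint (1) really does reduce to a product $\sigma_0 \sigma_n$ with the two factors ranging independently over $\sym_{2k}^2$; once the change of variables absorbs all dependence on $\rho_1, \dots, \rho_{n-1}$, the count becomes a direct application of the group-theoretic facts in \Cref{prop:sym_k}. I do not expect any serious obstacles, as the algebra is purely bookkeeping on conjugations.
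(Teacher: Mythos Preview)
Your argument is correct. It differs from the paper's proof in which coordinates are fixed first: the paper picks some middle index $i$, fixes every $\rho_j$ with $j\neq i$ (so in particular fixes $\rho_0,\rho_n\in\sym_{2k}^2$), and then counts admissible $\rho_i\in\sym_{2k}$ via a two-step application of \Cref{prop:sym_k} (item~4 to count the possible conjugates $\bar\rho$, then item~3 to count the $\rho_i$ realizing each $\bar\rho$). You instead fix all of $\rho_1,\dots,\rho_{n-1}$ and vary the endpoints $(\rho_0,\rho_n)$; after the conjugation change of variables this collapses directly to counting pairs in $\sym_{2k}^2\times\sym_{2k}^2$ whose product lies in $\sym_{2k}^k$, needing only items~1,~2, and~4. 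Your route is marginally cleaner since it avoids the fiber-size computation of item~3, while the paper's choice is motivated by the surrounding context (the same ``vary one middle $\rho_i$'' viewpoint is reused in the paragraph following the claim). Both arrive at the same arithmetic $(2k-1)!!\cdot(2k-2)!!=(2k-1)!$.
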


    \begin{proof}
        We fix $\rho_j$ arbitrarily for all $j \neq i$; there are $((2k - 1)!!)^2 \cdot ((2k)!)^{n - 2}$ ways to do that. Let $\bar\rho = \left(\overrightarrow{\rho_i}^{-1} \cdot \rho_n \cdot \overrightarrow{\rho_i}\right)$, hence we have $\bar\rho = \rho_i^{-1} \tilde\rho \rho_i$, where $\tilde\rho \in \sym_{2k}^2$ is already fixed. By \Cref{prop:sym_k}, the number of values $\bar\rho$ can take for \Cref{eq:dst_nice_cycles} to hold is $(2k - 2)!!$. By the same proposition, the number of $\rho_i \in \sym_{2k}^2$ given $\bar\rho$ is $(2k)!!$. The total number of ways to fix all $\rho_j$ is therefore $((2k - 1)!!)^2 \cdot ((2k)!)^{n - 2} \cdot (2k)!! \cdot (2k - 2)!! = ((2k)!)^{n - 1} \cdot (2k - 1)!$.
    \end{proof}
    
    In particular, the number of nice cycles in $G$ is $2 \cdot ((2k)!)^{n-1} \cdot (2k - 1)!$. By a similar argument, for any set of indices $1 \leq i_1, \dots, i_\ell \leq n - 1$ of size at most $n - 2$, if we fix $\rho_{i_1}, \dots, \rho_{i_\ell}$ together with $\rho_0$ and $\rho_n$, then there are exactly
    $
        ((2k)!)^{n - 2 - \ell} \cdot (2k)!! \cdot (2k - 2)!!
    $
    ways to fix the rest of the permutations for (1) to hold.
    \newline

\begin{proof}[Proof (\Cref{lem:tc_dst})]

    We say that a rectangle $g \cdot h \subseteq DST_G$ covers a nice Hamiltonian cycle $H$ if it contains a monomial encoding a nice Hamiltonian path which is a subgraph of $H$. Since all nice Hamiltonian paths are spanning out-trees as well, any rectangle decomposition of the polynomial calculated by our circuit must cover all nice cycles. We define rectangles being balanced as in the statement of \Cref{lem:tc_decomposition} with respect to the set $X = \{ x_{(v_{c, 1}, v_{c + 1, 1})} \mid c \in [n - 1] \}$.
    
    \begin{claim}\label{clm:dst_rect_bound}
        Let $g \cdot h$ be any balanced rectangle in $DST_G$. Then, the number of nice Hamiltonian cycles covered by $g \cdot h$ is at most
        $$
            ((2k)!)^n \cdot C_k^{-1} \cdot 2k \ln k.
        $$
    \end{claim}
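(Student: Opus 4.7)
The plan is to adapt the proof of \Cref{lem:dtsp_rect_bound} (Case~2 of \Cref{lem:main_dtsp}) to this setting. Since each nice Hamiltonian cycle contains exactly $k$ nice Hamiltonian paths, the number of nice cycles covered by $g\cdot h$ is at most the number of monomials of $g\cdot h$ encoding nice Hamiltonian paths, which is what I will actually bound. Following the DTSP proof, call $E_i$ \emph{monochromatic} with respect to $g\cdot h$ when $X_i\cap\sup(g)=\emptyset$ or $X_i\cap\sup(h)=\emptyset$, let $\ell$ be the number of non-monochromatic $E_i$ for $i\in[n-1]$, and split into the cases $\ell>k$ and $\ell\le k$.

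In the case $\ell>k$, each non-monochromatic bipartite column forces the matching in $\sym_{2k}$ to respect the $g/h$-partition of $V_i\cup V_{i+1}$ into two nonempty sides, so only $\max_{1\le s\le 2k-1}s!(2k-s)!=(2k-1)!$ matchings are admissible---a factor $2k$ fewer than $(2k)!$. Combined with $(2k)!$ matchings in each monochromatic bipartite column, with the $(2k-1)!$ choices of $(\rho_0,\rho_n)$ per fixed middle assignment (derived in the excerpt right after Equation~(1)), and with the factor $2k$ counting two cycle-orientations times $k$ ways to cut along an edge inside $V_1$, the number of nice paths in $g\cdot h$ is at most $2k\cdot((2k-1)!)^\ell\cdot((2k)!)^{n-1-\ell}\cdot(2k-1)!=((2k)!)^n/(2k)^\ell$. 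Since $\ell>k$ and $C_k\le k!\le(2k)^k$, this is strictly below the target $((2k)!)^n\cdot C_k^{-1}\cdot 2k\ln k$.

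In the case $\ell\le k$, the balance of $g\cdot h$ together with $n>3k+3$ yields, as in the DTSP proof, indices $i_g<i_h$ in $[n-1]$ such that $E_{i_g}$ is entirely in $g$ and $E_{i_h}$ is entirely in $h$. I then replay the type machinery of that proof: with $\bar E:=E(G)\setminus(E_{i_g}\cup E_{i_h})$, the number of admissible type pairs (projections of nice cycles onto $\bar E$) is at most $((2k-1)!!)^2\cdot((2k)!)^{n-3}$, and for each fixed type only $\rho_{i_g},\rho_{i_h}\in\sym_{2k}$ vary. Taking Equation~(1) with $i=i_g$, factoring $\overrightarrow{\rho_{i_g}}=L_2\rho_{i_h}L_1\rho_{i_g}$ for fixed products $L_1,L_2$ of the determined $\rho_j$, and conjugating the whole product by $\rho_{i_g}$, the constraint rewrites as $A'\cdot C'\in\sym_{2k}^k$, where $A':=\rho_{i_g}A\rho_{i_g}^{-1}\in\sym_{2k}^2$ depends only on $\rho_{i_g}$ and $C':=(\rho_{i_h}L_1)^{-1}B(\rho_{i_h}L_1)\in\sym_{2k}^2$ depends only on $\rho_{i_h}$, with $A,B\in\sym_{2k}^2$ fixed. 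The images of the matching sets $P_g,P_h\subseteq\sym_{2k}$ under these two conjugation maps then form a rectangle in $\mathcal{M}^*_k$ of size at most $(2k-1)!\cdot C_k^{-1}\cdot 2k\ln k$ by \Cref{lem:mm*_size_bound}, and since each conjugation map is $(2k)!!$-to-$1$ by \Cref{prop:sym_k}(3), this gives $|P_g|\cdot|P_h|\le((2k)!!)^2\cdot(2k-1)!\cdot C_k^{-1}\cdot 2k\ln k$. Summing over types, multiplying by the $2k$ nice paths per content, and collapsing $(2k-1)!!\cdot(2k)!!=(2k)!$ at each of the two endpoint columns yields exactly the target $((2k)!)^n\cdot C_k^{-1}\cdot 2k\ln k$.

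The main obstacle I anticipate is the algebraic reduction just outlined: carefully organising the nested conjugations in Equation~(1) so that the two halves cleanly split into $\sym_{2k}^2$-valued functions of $\rho_{i_g}$ and $\rho_{i_h}$ in a form compatible with $\mathcal{M}^*_k$, and keeping track of the $(2k)!!$-to-$1$ correction factors arising from the non-injectivity of those conjugation maps---without which the $\sym_{2k}^2$-indexed bound of \Cref{lem:mm*_size_bound} would not transfer correctly to the $\sym_{2k}\times\sym_{2k}$-indexed rectangle $P_g\times P_h$.
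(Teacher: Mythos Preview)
Your Case~2 (small $\ell$) is essentially the paper's argument and is correct in outline: the reduction of Equation~\eqref{eq:dst_nice_cycles} to a rectangle in $\mathcal{M}^*_k$ via conjugation, together with the $(2k)!!$-to-$1$ correction from \Cref{prop:sym_k}(3), is exactly what the paper does, and your arithmetic collapses to the stated bound.

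The genuine gap is in Case~1. Your claim that a non-monochromatic $E_i$ forces the matching to respect a nontrivial $g/h$-partition of $V_i\cup V_{i+1}$---and hence admits at most $(2k-1)!$ matchings---is a direct transplant of the DTSP argument, but that argument rests on a property that fails here. In a rectangle $g\cdot h\subseteq DTSP_{G_{n,k}}$ every monomial is a Hamiltonian cycle, where each vertex has out-degree exactly~$1$; this is what makes $V^g_{tail}$ and $V^h_{tail}$ disjoint. In the present setting $g\cdot h\subseteq DST_G$, so every monomial is a spanning \emph{out-tree}, in which vertices may have arbitrary out-degree. Concretely, one can have $x_{(u,w_1)}\in\sup(g)$ and $x_{(u,w_2)}\in\sup(h)$ for the same tail~$u$ without contradiction: the product $g'h'$ simply has $u$ with two children. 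Consequently there is no $g/h$-partition of $V_i$ by tails, and the $s!(2k-s)!$ count is unjustified.

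What does survive is only the head-partition (from uniqueness of parents) together with the fact that $x_{(u,v)}$ and $x_{(v,u)}$ lie in the same support (from acyclicity). As the paper shows in \Cref{clm:dst_rect_colorful}, these two facts together yield merely that at least one \emph{directed} edge of $E_i$ lies outside $\sup(g)\cup\sup(h)$---a far weaker conclusion, worth only a $(1-\Theta(1/k))$ factor per non-monochromatic column rather than $1/(2k)$. To convert this into a $C_k^{-1}$ saving one needs $\ell$ of order $k^2\ln k$, not $k$; this is why the paper splits at $\ell=4k^2\ln k$, why \Cref{lem:tc_dst} assumes $n>12k^2\ln k+3$ rather than your $n>3k+3$, and why the paper resorts to a probabilistic argument (generating a random nice cycle and bounding the probability that it avoids every forbidden directed edge) in place of a direct count.
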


    For any tropical circuit calculating $DST_G$ using $\tau$ gates, \Cref{lem:tc_decomposition} implies that $DST_G$ can be covered by a union of $\tau$ balanced rectangles. Thus, in order to cover every nice cycle, at least one balanced rectangle must cover at least $\frac{2 ((2k)!)^{n - 1} (2k - 1)!}{\tau}$ of those, hence assuming \Cref{clm:dst_rect_bound},
    $$
        \tau \geq C_k / (2k^2 \ln k).
    $$    
    
    The rest of the section is dedicated to proving \Cref{clm:dst_rect_bound}.
    Fix any balanced rectangle $g \cdot h$. The supports of $g$ and $h$ are disjoint. We will say that $E_i$ is monochromatic w.r.t. $g \cdot h$ if either $X_i \cap \sup(g)$ or $X_i \cap \sup(h)$ is empty.

    \begin{claim}\label{clm:dst_rect_colorful}
        Pick $1 \leq i \leq n - 1$. If $E_i$ is not monochromatic w.r.t. $g \cdot h$, then there exists an edge $(u, v) \in E_i$ such that $x_{u, v} \not\in \sup(g) \cup \sup(h)$.
    \end{claim}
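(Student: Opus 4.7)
The plan is to argue by contradiction. Suppose every edge of $E_i$ has its variable in $\sup(g) \cup \sup(h)$; equivalently, defining $\mathcal{G}_i = \{e \in E_i : x_e \in \sup(g)\}$ and $\mathcal{H}_i = \{e \in E_i : x_e \in \sup(h)\}$ (disjoint, since $\sup(g)$ and $\sup(h)$ are), assume $\mathcal{G}_i \cup \mathcal{H}_i = E_i$. By non-monochromaticity of $E_i$, both $\mathcal{G}_i$ and $\mathcal{H}_i$ are non-empty.

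The key observation is that no two edges of $E_i$ sharing the same head can be split across the two sides of the rectangle. Indeed, if $(u_1, v) \in \mathcal{G}_i$ and $(u_2, v) \in \mathcal{H}_i$ had the same target $v$, there would exist $m_g \in g$ with $x_{(u_1, v)} \in \sup(m_g)$ and $m_h \in h$ with $x_{(u_2, v)} \in \sup(m_h)$. Since $g \cdot h \subseteq DST_G$, the monomial $m_g \cdot m_h$ would encode a spanning out-tree of $G$; but that tree would contain two edges into $v$, contradicting that every non-root vertex of an arborescence has a unique parent.

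This observation produces a partition $V^g \sqcup V^h = V_i \cup V_{i+1}$, where $v \in V^g$ if all $E_i$-edges with head $v$ lie in $\mathcal{G}_i$, and $v \in V^h$ otherwise. Both classes are non-empty (since $\mathcal{G}_i, \mathcal{H}_i$ are). The next step is to produce $v \in V^g$ and $w \in V^h$ lying in opposite sides of the bipartition $V_i \mid V_{i+1}$. A brief pigeonhole on the four cells $V^\ast \cap V_j$ (for $\ast \in \{g,h\}$ and $j \in \{i, i+1\}$) shows that otherwise one of $V^g, V^h, V_i, V_{i+1}$ would have to be empty, which is impossible.

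Finally, with such $v$ and $w$ in hand, say $v \in V_i$ and $w \in V_{i+1}$, the edge $(w, v) \in E_i$ has head $v \in V^g$, so $(w, v) \in \mathcal{G}_i$; symmetrically $(v, w) \in \mathcal{H}_i$. Choose $m_g \in g$ with $x_{(w,v)} \in \sup(m_g)$ and $m_h \in h$ with $x_{(v,w)} \in \sup(m_h)$. Their union encodes a spanning out-tree, yet contains the directed $2$-cycle $v \to w \to v$, contradicting acyclicity of arborescences. I expect the head-coloring partition to be the conceptual core; the pigeonhole cleanup and the $2$-cycle finish are purely bookkeeping.
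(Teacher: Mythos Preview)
Your proof is correct and follows essentially the same approach as the paper. Both arguments exploit the same two facts about spanning out-trees---edges with a common head must lie in the same side of the rectangle, and an opposite pair $(v,w),(w,v)$ cannot be split across the two sides---and both reach a contradiction via a forbidden $2$-cycle; the only difference is packaging, with the paper stating the two observations upfront and doing a short case analysis, while you encode the first observation as a vertex partition $V^g \sqcup V^h$ and use a pigeonhole step to locate the bad pair.
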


    \begin{proof}
        Assume by contradiction that $x_{u, v} \in \sup(g)$ or $x_{u, v} \in \sup(h)$ for every $(u, v) \in E_i$.
        Since all monomials in $g \cdot h$ represent out-trees, for any pair of edges of the form $(u, v), (w, v) \in E_i$, either both $x_{u, v}, x_{w, v}$ belong to $\sup(g)$, or both belong to $\sup(h)$. Similarly, for any $(u, v) \in E_i$ both $x_{u, v}, x_{v, u}$ must also belong to the same support.
        
        By our assumption, we have two edges $(p, q), (s, t) \in E_i$ such that $x_{p, q} \in \sup(g)$ and $x_{s, t} \in \sup(h)$ (and hence $x_{t, s} \in \sup(h)$ as well). Either $(q, s)$ or $(q, t)$ belong to $E_i$. In the former case, we have $x_{q, s} \in \sup(h)$ and $x_{s, q} \in \sup(g)$. In the latter, $x_{q, t} \in \sup(h)$ and $x_{t, q} \in \sup(g)$. Both cases arrive at a contradiction, which proves the claim.
    \end{proof}

    Let $i_1 < \dots < i_\ell$ denote the indices in range $[n - 1]$ of sets $E_i$ which are not monochromatic. We consider two cases depending on whether $\ell > 4k^2 \ln k$ or not.

    Assume $\ell > 4k^2 \ln k$. By sacrificing at most half of indices $i_j$, we can assume that they are non-adjacent. That is, we assume that $\ell > 2k^2 \ln k$ and that $i_{j + 1} - i_j > 1$ for all $j \in [\ell - 1]$. For each $i \in \{i_j\}_{j=1}^\ell$, let $(p_i, q_i)$ denote the edge of $E_i$ given by \Cref{clm:dst_rect_colorful}.
    Consider the following randomized procedure to generate a nice directed Hamiltonian cycle $H$ of $G$.
    \begin{itemize}
        \item Select $\rho_0 \in \sym_{2k}^2$ uniformly randomly.
        \item Select a subset of edges $F_0 \subseteq E_0$ by matching vertices of $V_1$ according to $\rho_0$ and picking the edge directions uniformly and independently. Let $V_1^{tail}$ and $V_1^{head}$ denote the vertices of $V_1$ composed of respectively tails and heads of the edges of $F_0$.
        \item For $i \in \{1, \dots, n - 1\}$ do the following. Pick $\rho_i \in \sym_{2k}$ uniformly randomly and select $F_i \subseteq E_i$ by matching vertices of $V(G^i)$ according to $\rho_i$ and picking the edge directions so that they agree with edges of $F_{i - 1}$ (i.e., if we match $v_{i, r_1}$ with $v_{i + 1, r_2}$ and $v_{i, r_1} \in V_i^{tail}$, then we add $(v_{i + 1, r_2}, v_{i, r_1})$ to $F_i$, and $(v_{i, r_1}, v_{i + 1, r_2})$ otherwise). Define $V_{i + 1}^{tail}, V_{i + 1}^{head}$ analogously.
        \item The union $F_0 \cup \dots \cup F_{n - 1}$ at this point forms $k$ disjoint directed paths whose endpoints are exactly $V_n$. There are exactly $(k - 1)!$ ways to pick $F_n$ in a way that connects these paths into a Hamiltonian cycle. We pick one of such ways uniformly randomly.
    \end{itemize}
    It is easy to see that this procedure generates all nice Hamiltonian cycles of $G$ with uniform probability distribution. Now, we can bound the probability that a cycle generated this way does not contain any of the forbidden edges $(p_i, q_i)$ given by \Cref{clm:dst_rect_colorful}. Note that only such cycles can be covered by $g \cdot h$.

    The steps during which the number of feasible choices could get restricted is when selecting $F_{i_j}$ for some $j \in [\ell]$. Assume w.l.o.g. $p_{i_j} \in V_{i_j}, q_{i_j} \in V_{i_j + 1}$, the other case is symmetric. If $p_{i_j} \in V_{i_j}^{tail}$, then every choice of $F_{i_j}$ will be feasible, as an edge between $p_{i_j}, q_{i_j}$ potentially added will always be directed from $q_{i_j}$ to $p_{i_j}$. In case $p_{i_j} \in V_{i_j}^{head}$, there is exactly $\frac{1}{k}$ chance of adding the forbidden edge to $F_{i_j}$, as $\rho_{i_j}$ is selected uniformly from whole $\sym_{2k}$.

    The crucial observation is that the probability that $p_{i_j}$ is in $V_{i_j}^{head}$ at this point, assuming so far we did not pick any of the forbidden edges, is exactly $1/2$, independent of previous choices. This follows from the fact that $\rho_{i_j - 1}$ was chosen uniformly from $\sym_{2k}$ and due to our assumption that $i_j > i_{j - 1} + 1$. The partition of $V_{i_j - 1}$ into $V_{i_j - 1}^{head}$ and $V_{i_j - 1}^{tail}$ can have arbitrary distribution, however, both sets are always of size $k$, hence the probability of $p_{i_j}$ getting matched by $\rho_{i_j - 1}$ with a vertex from $V_{i_j - 1}^{tail}$ is exactly $\frac{1}{2}$.

    This means, that the probability of a nice Hamiltonian cycle containing none of the forbidden edges is at most
    $$
    \left( 1 - \frac{1}{2k} \right)^\ell < e^{-k \ln k} \leq C_k^{-1} \leq C_k^{-1} \cdot 2k^2 \ln k,
    $$
    hence the number of such cycles is at most
    $$
    \left( 2 ((2k)!)^{n-1} (2k - 1)! \right) 
        \cdot \left( C_k^{-1} \cdot 2k^2 \ln k \right) =
    ((2k)!)^n \cdot C_k^{-1} \cdot 2k \ln k,
    $$
    which finishes the proof in case $\ell > 4k^2 \ln k$.
    \newline

    Now, assume $\ell < 4k^2 \ln k$. Thus, at least $n - 4k^2 \ln k \geq \frac{2n}{3} + 1$ of $E_i$ are monochromatic. Since $g \cdot h$ is balanced, we have two indices $i_g, i_h \in [n - 1]$ s.t. $E_{i_g}, E_{i_h}$ are both monochromatic, and both $X_{i_g} \cap \sup(h)$ and  $X_{i_h} \cap \sup(g)$ are empty. W.l.o.g. we can assume that $i_g < i_h$. Let $\bar{E} = E(G) - (E_{i_g} \cup E_{i_h})$.

    For a monomial $g' \in g$ (and analogously for $h' \in h$), we define its type as the set
    $$
        \lambda_{g'} := \{ e \in \bar{E} : x_e \in \sup(g') \}.
    $$
    Fix an arbitrary pair of types $\lambda_{g^*}, \lambda_{h^*}$ and let $g^*$ (resp. $h^*$) denote the set of all monomials of $g$ (resp. $h$) with that type. By the definition, every spanning out-tree corresponding to some monomial of $g^* \cdot h^*$ has the same intersection with $\bar{E}$. If $g^* \cdot h^*$ covers at least one nice cycle, this intersection must induce a perfect matching in $G^i$ for each $i \in [n] - \{i_g, i_h\}$ and a matching of size $k - 1$ in $G^0$.
    Let $(\rho_i)_{i \in [n] - \{i_g, i_h\}}$ denote the set $(\lambda_{g^*} \cup \lambda_{h^*}) \cap E_i$. Let $\rho_0$ denote the unique permutation of $\sym_{2k}^2$ containing $(\lambda_{g^*} \cup \lambda_{h^*}) \cap E_0$ as a subgraph.
    
    Let $\bar\rho_l = \overleftarrow{\rho_{i_g - 1}}$, $\bar\rho_r = \overrightarrow{\rho_{i_h+1}}$ and $\bar\rho_m = \rho_{i_h - 1}\rho_{i_h - 2}\dots\rho_{i_g + 2}\rho_{i_g + 1}$. Let $P_g$ (resp. $P_h$) denote the set of perfect matchings of $G^{i_g}$ (resp. $G^{i_h})$ induced by the monomials in $g^*$ (resp. $h^*$). Finally, put
    $$
        P_g^* = \left\{
            \rho \cdot
            \bar\rho_l \cdot 
            \rho_0 \cdot 
            \bar\rho_l^{-1} \cdot 
            \rho^{-1}
            \mid \rho \in P_g
        \right\} \ \text{and}
    $$
    $$
        P_h^* = \left\{
            \bar\rho_m^{-1} \cdot
            \rho^{-1} \cdot
            \bar\rho_r^{-1} \cdot 
            \rho_n \cdot 
            \bar\rho_r \cdot 
            \rho \cdot
            \bar\rho_m
            \mid \rho \in P_h
        \right\}.
    $$
    By \Cref{prop:sym_k}, we have $|P_g| \leq (2k)!! \cdot |P_g^*|$ and $|P_h| \leq (2k)!! \cdot |P_h^*|$.
    By \Cref{eq:dst_nice_cycles}, every pair $\rho_g \in P_g^*, \rho_h \in P_h^*$ satisfies $\rho_h^* \rho_g^* \in \sym_{2k}^k$, hence applying \Cref{lem:mm*_size_bound} gives us
    $$
        2 \cdot |P_g| \cdot |P_h| \leq
        2 ((2k)!!)^2 \cdot |P_g^*| \cdot |P_h^*| \leq
        2 ((2k)!!)^2 (2k - 1)! \cdot C_k^{-1} \cdot 2k \ln k.
    $$
    Finally, note that the quantity $2 \cdot |P_g| \cdot |P_h|$ bounds from above the number of nice cycles covered by $g^* \cdot h^*$.

    The number of pairs of types $\lambda_{g^*}, \lambda_{h^*}$ which cover at least one nice cycle is bounded by the number of different projections of a nice path onto $\bar{E}$, hence by
    $$
    k ((2k - 1)!!)^2 ((2k)!)^{n - 3}.
    $$
    Therefore, the total number of nice cycles covered by $g \cdot h$ is at most
    $$
    \left( k ((2k - 1)!!)^2 ((2k)!)^{n - 3} \right) \cdot
    \left( 2 ((2k)!!)^2 (2k - 1)! \cdot C_k^{-1} \cdot 2k \ln k \right) =
    $$
    $$
    = ((2k)!)^n \cdot C_k^{-1} \cdot 2k \ln k,
    $$
    which finishes the proof of the claim.

\end{proof}

\section{Conclusion}
\label{sec:con}

As noted earlier, the exact nondeterministic communication complexity of the matchings compatibility matrix is still open: While $C_k =k^{O(k)}$ holds trivially, the currently best known lower bound is $C_k = 2^{\Omega(k \log \log k)}$~\cite{raz1995log}. 
Proving this lower bound would automatically imply the same bound for \Cref{thm:main_dtsp}, \Cref{thm:main_tsp}, and \Cref{thm:main_dst}. In particular, this complexity would be asymptotically tight (up to factors polynomial in $N$), as tropical circuits of size $2^{\Oh(k \log k)} \cdot N^{\Oh(1)}$ can be easily constructed for all these problems by following algorithms based on naive dynamic programming on path decompositions.
In the reverse direction, better upper bounds on $C_k$ seem also useful for obtaining better tropical circuits for $DTSP_G,TSP_G$ and $DST_G$, and possibly even for obtaining faster algorithm for TSP parameterized by pathwidth. It seems however, as also expressed by the authors of~\cite{raz1995log}, that $C_k$ is closer to $2^{\Omega(k \log k)}$ than it is to $2^{\Theta(k \log \log k)}$.

Another natural opportunity for further work would be to find lower bounds for other computational problems, for example it seems plausible that with tools from Theorem~\ref{thm:main_is} and the reduction ideas from~\cite{LokshtanovMS18b} one can also obtain a $\Omega(3^k)$ lower bound for tropical circuits calculating the minimum weight dominating set of a graph with pathwidth $k$.

More ambitiously, a natural open question is whether the lower bounds on tropical circuits as defined in this paper can be generalized to more expressive variants of tropical circuits (see the book by Jukna~\cite{Jukna2023}).

\bibliographystyle{alpha}
\bibliography{ref}

@Inbook{Jukna2023,
author="Jukna, Stasys",
title="Tropical Circuit Complexity: Limits of Pure Dynamic Programming",
year="2023",
publisher="Springer International Publishing",
address="Cham",
isbn="978-3-031-42354-3",
doi="10.1007/978-3-031-42354-3",
url="https://doi.org/10.1007/978-3-031-42354-3"
}

@article{raz1995log,
  title={On the “log rank”-conjecture in communication complexity},
  author={Raz, Ran and Spieker, Boris},
  journal={Combinatorica},
  volume={15},
  number={4},
  pages={567--588},
  year={1995},
  publisher={Springer}
}

@article{CyganNPPRW22,
  author       = {Marek Cygan and
                  Jesper Nederlof and
                  Marcin Pilipczuk and
                  Michal Pilipczuk and
                  Johan M. M. van Rooij and
                  Jakub Onufry Wojtaszczyk},
  title        = {Solving Connectivity Problems Parameterized by Treewidth in Single
                  Exponential Time},
  journal      = {{ACM} Trans. Algorithms},
  volume       = {18},
  number       = {2},
  pages        = {17:1--17:31},
  year         = {2022},
  url          = {https://doi.org/10.1145/3506707},
  doi          = {10.1145/3506707},
  timestamp    = {Mon, 28 Aug 2023 21:42:40 +0200},
  biburl       = {https://dblp.org/rec/journals/talg/CyganNPPRW22.bib},
  bibsource    = {dblp computer science bibliography, https://dblp.org}
}

@article{BodlaenderCKN15,
  author       = {Hans L. Bodlaender and
                  Marek Cygan and
                  Stefan Kratsch and
                  Jesper Nederlof},
  title        = {Deterministic single exponential time algorithms for connectivity
                  problems parameterized by treewidth},
  journal      = {Inf. Comput.},
  volume       = {243},
  pages        = {86--111},
  year         = {2015},
  url          = {https://doi.org/10.1016/j.ic.2014.12.008},
  doi          = {10.1016/J.IC.2014.12.008},
  timestamp    = {Fri, 12 Feb 2021 22:15:23 +0100},
  biburl       = {https://dblp.org/rec/journals/iandc/BodlaenderCKN15.bib},
  bibsource    = {dblp computer science bibliography, https://dblp.org}
}

@inproceedings{Korhonen21,
  author       = {Tuukka Korhonen},
  editor       = {Nikhil Bansal and
                  Emanuela Merelli and
                  James Worrell},
  title        = {Lower Bounds on Dynamic Programming for Maximum Weight Independent
                  Set},
  booktitle    = {48th International Colloquium on Automata, Languages, and Programming,
                  {ICALP} 2021, Glasgow, Scotland (Virtual Conference), July 12-16,
                  2021},
  series       = {LIPIcs},
  volume       = {198},
  pages        = {87:1--87:14},
  publisher    = {Schloss Dagstuhl - Leibniz-Zentrum f{\"{u}}r Informatik},
  year         = {2021},
  url          = {https://doi.org/10.4230/LIPIcs.ICALP.2021.87},
  doi          = {10.4230/LIPICS.ICALP.2021.87},
  timestamp    = {Fri, 21 Nov 2025 23:44:11 +0100},
  biburl       = {https://dblp.org/rec/conf/icalp/Korhonen21.bib},
  bibsource    = {dblp computer science bibliography, https://dblp.org}
}

@article{JerrumS82,
  author       = {Mark Jerrum and
                  Marc Snir},
  title        = {Some Exact Complexity Results for Straight-Line Computations over
                  Semirings},
  journal      = {J. {ACM}},
  volume       = {29},
  number       = {3},
  pages        = {874--897},
  year         = {1982},
  url          = {https://doi.org/10.1145/322326.322341},
  doi          = {10.1145/322326.322341},
  timestamp    = {Tue, 06 Nov 2018 12:51:45 +0100},
  biburl       = {https://dblp.org/rec/journals/jacm/JerrumS82.bib},
  bibsource    = {dblp computer science bibliography, https://dblp.org}
}

@article{Rothvoss17,
  author       = {Thomas Rothvoss},
  title        = {The Matching Polytope has Exponential Extension Complexity},
  journal      = {J. {ACM}},
  volume       = {64},
  number       = {6},
  pages        = {41:1--41:19},
  year         = {2017},
  url          = {https://doi.org/10.1145/3127497},
  doi          = {10.1145/3127497},
  timestamp    = {Mon, 28 Aug 2023 21:30:19 +0200},
  biburl       = {https://dblp.org/rec/journals/jacm/Rothvoss17.bib},
  bibsource    = {dblp computer science bibliography, https://dblp.org}
}

@article{FioriniMPTW15,
  author       = {Samuel Fiorini and
                  Serge Massar and
                  Sebastian Pokutta and
                  Hans Raj Tiwary and
                  Ronald de Wolf},
  title        = {Exponential Lower Bounds for Polytopes in Combinatorial Optimization},
  journal      = {J. {ACM}},
  volume       = {62},
  number       = {2},
  pages        = {17:1--17:23},
  year         = {2015},
  url          = {https://doi.org/10.1145/2716307},
  doi          = {10.1145/2716307},
  timestamp    = {Fri, 24 Mar 2023 16:31:07 +0100},
  biburl       = {https://dblp.org/rec/journals/jacm/FioriniMPTW15.bib},
  bibsource    = {dblp computer science bibliography, https://dblp.org}
}

@article{BodlaenderDFH09,
  author       = {Hans L. Bodlaender and
                  Rodney G. Downey and
                  Michael R. Fellows and
                  Danny Hermelin},
  title        = {On problems without polynomial kernels},
  journal      = {J. Comput. Syst. Sci.},
  volume       = {75},
  number       = {8},
  pages        = {423--434},
  year         = {2009},
  url          = {https://doi.org/10.1016/j.jcss.2009.04.001},
  doi          = {10.1016/J.JCSS.2009.04.001},
  timestamp    = {Tue, 16 Feb 2021 14:04:44 +0100},
  biburl       = {https://dblp.org/rec/journals/jcss/BodlaenderDFH09.bib},
  bibsource    = {dblp computer science bibliography, https://dblp.org}
}

@article{FortnowS11,
  author       = {Lance Fortnow and
                  Rahul Santhanam},
  title        = {Infeasibility of instance compression and succinct PCPs for {NP}},
  journal      = {J. Comput. Syst. Sci.},
  volume       = {77},
  number       = {1},
  pages        = {91--106},
  year         = {2011},
  url          = {https://doi.org/10.1016/j.jcss.2010.06.007},
  doi          = {10.1016/J.JCSS.2010.06.007},
  timestamp    = {Tue, 16 Feb 2021 14:04:20 +0100},
  biburl       = {https://dblp.org/rec/journals/jcss/FortnowS11.bib},
  bibsource    = {dblp computer science bibliography, https://dblp.org}
}

@inproceedings{BeckI13,
  author       = {Christopher Beck and
                  Russell Impagliazzo},
  editor       = {Dan Boneh and
                  Tim Roughgarden and
                  Joan Feigenbaum},
  title        = {Strong {ETH} holds for regular resolution},
  booktitle    = {Symposium on Theory of Computing Conference, STOC'13, Palo Alto, CA,
                  USA, June 1-4, 2013},
  pages        = {487--494},
  publisher    = {{ACM}},
  year         = {2013},
  url          = {https://doi.org/10.1145/2488608.2488669},
  doi          = {10.1145/2488608.2488669},
  timestamp    = {Sat, 30 Sep 2023 09:57:30 +0200},
  biburl       = {https://dblp.org/rec/conf/stoc/BeckI13.bib},
  bibsource    = {dblp computer science bibliography, https://dblp.org}
}

@article{WilliamsW18,
  author       = {Virginia {Vassilevska Williams} and
                  R. Ryan Williams},
  title        = {Subcubic Equivalences Between Path, Matrix, and Triangle Problems},
  journal      = {J. {ACM}},
  volume       = {65},
  number       = {5},
  pages        = {27:1--27:38},
  year         = {2018},
  url          = {https://doi.org/10.1145/3186893},
  doi          = {10.1145/3186893},
  timestamp    = {Sun, 19 Jan 2025 14:36:58 +0100},
  biburl       = {https://dblp.org/rec/journals/jacm/WilliamsW18.bib},
  bibsource    = {dblp computer science bibliography, https://dblp.org}
}

@article{KarpUW86,
  author       = {Richard M. Karp and
                  Eli Upfal and
                  Avi Wigderson},
  title        = {Constructing a perfect matching is in random {NC}},
  journal      = {Comb.},
  volume       = {6},
  number       = {1},
  pages        = {35--48},
  year         = {1986},
  url          = {https://doi.org/10.1007/BF02579407},
  doi          = {10.1007/BF02579407},
  timestamp    = {Wed, 22 Jul 2020 22:02:54 +0200},
  biburl       = {https://dblp.org/rec/journals/combinatorica/KarpUW86.bib},
  bibsource    = {dblp computer science bibliography, https://dblp.org}
}

@article{MulmuleyVV87,
  author       = {Ketan Mulmuley and
                  Umesh V. Vazirani and
                  Vijay V. Vazirani},
  title        = {Matching is as easy as matrix inversion},
  journal      = {Comb.},
  volume       = {7},
  number       = {1},
  pages        = {105--113},
  year         = {1987},
  url          = {https://doi.org/10.1007/BF02579206},
  doi          = {10.1007/BF02579206},
  timestamp    = {Wed, 22 Jul 2020 22:02:58 +0200},
  biburl       = {https://dblp.org/rec/journals/combinatorica/MulmuleyVV87.bib},
  bibsource    = {dblp computer science bibliography, https://dblp.org}
}

@article{Bjorklund14,
  author       = {Andreas Bj{\"{o}}rklund},
  title        = {Determinant Sums for Undirected Hamiltonicity},
  journal      = {{SIAM} J. Comput.},
  volume       = {43},
  number       = {1},
  pages        = {280--299},
  year         = {2014},
  url          = {https://doi.org/10.1137/110839229},
  doi          = {10.1137/110839229},
  timestamp    = {Sat, 27 May 2017 14:22:58 +0200},
  biburl       = {https://dblp.org/rec/journals/siamcomp/Bjorklund14.bib},
  bibsource    = {dblp computer science bibliography, https://dblp.org}
}

@article{calbet2023k_r,
  title={${K_r}$-saturated Graphs and the Two Families Theorem},
  author={Calbet, Asier},
  journal={arXiv preprint arXiv:2302.13389},
  year={2023}
}

@article{LovettA14,
  author       = {Shachar Lovett},
  title        = {Recent Advances on the Log-Rank Conjecture in Communication Complexity},
  journal      = {Bull. {EATCS}},
  volume       = {112},
  year         = {2014},
  url          = {http://eatcs.org/beatcs/index.php/beatcs/article/view/260},
  timestamp    = {Fri, 12 Feb 2021 13:40:04 +0100},
  biburl       = {https://dblp.org/rec/journals/eatcs/LovettA14.bib},
  bibsource    = {dblp computer science bibliography, https://dblp.org}
}

@article{FominLPS16,
  author       = {Fedor V. Fomin and
                  Daniel Lokshtanov and
                  Fahad Panolan and
                  Saket Saurabh},
  title        = {Efficient Computation of Representative Families with Applications
                  in Parameterized and Exact Algorithms},
  journal      = {J. {ACM}},
  volume       = {63},
  number       = {4},
  pages        = {29:1--29:60},
  year         = {2016},
  url          = {https://doi.org/10.1145/2886094},
  doi          = {10.1145/2886094},
  timestamp    = {Fri, 27 Dec 2019 21:16:22 +0100},
  biburl       = {https://dblp.org/rec/journals/jacm/FominLPS16.bib},
  bibsource    = {dblp computer science bibliography, https://dblp.org}
}

@inproceedings{Nederlof20,
  author       = {Jesper Nederlof},
  editor       = {Konstantin Makarychev and
                  Yury Makarychev and
                  Madhur Tulsiani and
                  Gautam Kamath and
                  Julia Chuzhoy},
  title        = {Bipartite {TSP} in $o(1.9999^n)$ time, assuming quadratic
                  time matrix multiplication},
  booktitle    = {Proceedings of the 52nd Annual {ACM} {SIGACT} Symposium on Theory
                  of Computing, {STOC} 2020, Chicago, IL, USA, June 22-26, 2020},
  pages        = {40--53},
  publisher    = {{ACM}},
  year         = {2020},
  url          = {https://doi.org/10.1145/3357713.3384264},
  doi          = {10.1145/3357713.3384264},
  timestamp    = {Mon, 18 Dec 2023 07:33:37 +0100},
  biburl       = {https://dblp.org/rec/conf/stoc/Nederlof20.bib},
  bibsource    = {dblp computer science bibliography, https://dblp.org}
}

@techreport{kerr1970effect,
  title={The effect of algebraic structure on the computational complexity of matrix multiplication},
  author={Kerr, Leslie Robert},
  year={1970},
  institution={Cornell University}
}

@article{Bellman62,
	author    = {Richard Bellman},
	title     = {Dynamic Programming Treatment of the Travelling Salesman Problem},
	journal   = {J. {ACM}},
	volume    = {9},
	number    = {1},
	pages     = {61--63},
	year      = {1962},
	url       = {https://doi.org/10.1145/321105.321111},
	doi       = {10.1145/321105.321111},
	timestamp = {Wed, 14 Nov 2018 10:35:23 +0100},
	biburl    = {https://dblp.org/rec/bib/journals/jacm/Bellman62},
	bibsource = {dblp computer science bibliography, https://dblp.org}
}

@article{heldKarp,
	author = {Held, Michael. and Karp, Richard M.},
	title = {A Dynamic Programming Approach to Sequencing Problems},
	journal = {Journal of the Society for Industrial and Applied Mathematics},
	volume = {10},
	number = {1},
	pages = {196-210},
	year = {1962},
	doi = {10.1137/0110015},
	URL = { https://doi.org/10.1137/0110015	},
	eprint = { 	https://doi.org/10.1137/0110015	}
}

@article{Warshall62,
  author       = {Stephen Warshall},
  title        = {A Theorem on Boolean Matrices},
  journal      = {J. {ACM}},
  volume       = {9},
  number       = {1},
  pages        = {11--12},
  year         = {1962},
  url          = {https://doi.org/10.1145/321105.321107},
  doi          = {10.1145/321105.321107},
  timestamp    = {Wed, 14 Nov 2018 10:35:25 +0100},
  biburl       = {https://dblp.org/rec/journals/jacm/Warshall62.bib},
  bibsource    = {dblp computer science bibliography, https://dblp.org}
}

@article{Floyd62a,
  author       = {Robert W. Floyd},
  title        = {Algorithm 97: Shortest path},
  journal      = {Commun. {ACM}},
  volume       = {5},
  number       = {6},
  pages        = {345},
  year         = {1962},
  url          = {https://doi.org/10.1145/367766.368168},
  doi          = {10.1145/367766.368168},
  timestamp    = {Wed, 14 Nov 2018 10:22:35 +0100},
  biburl       = {https://dblp.org/rec/journals/cacm/Floyd62a.bib},
  bibsource    = {dblp computer science bibliography, https://dblp.org}
}

@book{CyganFKLMPPS15,
  author       = {Marek Cygan and
                  Fedor V. Fomin and
                  Lukasz Kowalik and
                  Daniel Lokshtanov and
                  D{\'{a}}niel Marx and
                  Marcin Pilipczuk and
                  Michal Pilipczuk and
                  Saket Saurabh},
  title        = {Parameterized Algorithms},
  publisher    = {Springer},
  year         = {2015},
  url          = {https://doi.org/10.1007/978-3-319-21275-3},
  doi          = {10.1007/978-3-319-21275-3},
  isbn         = {978-3-319-21274-6},
  timestamp    = {Sun, 25 Oct 2020 22:32:21 +0100},
  biburl       = {https://dblp.org/rec/books/sp/CyganFKLMPPS15.bib},
  bibsource    = {dblp computer science bibliography, https://dblp.org}
}

@article{LokshtanovMS18a,
  author       = {Daniel Lokshtanov and
                  D{\'{a}}niel Marx and
                  Saket Saurabh},
  title        = {Slightly Superexponential Parameterized Problems},
  journal      = {{SIAM} J. Comput.},
  volume       = {47},
  number       = {3},
  pages        = {675--702},
  year         = {2018},
  url          = {https://doi.org/10.1137/16M1104834},
  doi          = {10.1137/16M1104834},
  timestamp    = {Sat, 06 Sep 2025 20:28:34 +0200},
  biburl       = {https://dblp.org/rec/journals/siamcomp/LokshtanovMS18.bib},
  bibsource    = {dblp computer science bibliography, https://dblp.org}
}

@article{LokshtanovMS18b,
  author       = {Daniel Lokshtanov and
                  D{\'{a}}niel Marx and
                  Saket Saurabh},
  title        = {Known Algorithms on Graphs of Bounded Treewidth Are Probably Optimal},
  journal      = {{ACM} Trans. Algorithms},
  volume       = {14},
  number       = {2},
  pages        = {13:1--13:30},
  year         = {2018},
  url          = {https://doi.org/10.1145/3170442},
  doi          = {10.1145/3170442},
  timestamp    = {Sat, 06 Sep 2025 20:28:51 +0200},
  biburl       = {https://dblp.org/rec/journals/talg/LokshtanovMS18.bib},
  bibsource    = {dblp computer science bibliography, https://dblp.org}
}

@inproceedings{CurticapeanLN18,
  author       = {Radu Curticapean and
                  Nathan Lindzey and
                  Jesper Nederlof},
  editor       = {Artur Czumaj},
  title        = {A Tight Lower Bound for Counting Hamiltonian Cycles via Matrix Rank},
  booktitle    = {Proceedings of the Twenty-Ninth Annual {ACM-SIAM} Symposium on Discrete
                  Algorithms, {SODA} 2018, New Orleans, LA, USA, January 7-10, 2018},
  pages        = {1080--1099},
  publisher    = {{SIAM}},
  year         = {2018},
  url          = {https://doi.org/10.1137/1.9781611975031.70},
  doi          = {10.1137/1.9781611975031.70},
  timestamp    = {Tue, 02 Feb 2021 17:07:58 +0100},
  biburl       = {https://dblp.org/rec/conf/soda/CurticapeanLN18.bib},
  bibsource    = {dblp computer science bibliography, https://dblp.org}
}

@article{CyganKN18,
  author       = {Marek Cygan and
                  Stefan Kratsch and
                  Jesper Nederlof},
  title        = {Fast Hamiltonicity Checking Via Bases of Perfect Matchings},
  journal      = {J. {ACM}},
  volume       = {65},
  number       = {3},
  pages        = {12:1--12:46},
  year         = {2018},
  url          = {https://doi.org/10.1145/3148227},
  doi          = {10.1145/3148227},
  timestamp    = {Sat, 08 Jan 2022 02:23:32 +0100},
  biburl       = {https://dblp.org/rec/journals/jacm/CyganKN18.bib},
  bibsource    = {dblp computer science bibliography, https://dblp.org}
}

\end{document}